\newtheorem{theorem}{Theorem}[section] 
\newtheorem{lemma}[theorem]{Lemma} 
\newtheorem{corollary}[theorem]{Corollary}%[section]
\theoremstyle{definition} 
\newtheorem{definition}[theorem]{Definition}%[section]
\newtheorem{example}[theorem]{Example}%[section]
\theoremstyle{remark} 
\newtheorem{remark}[theorem]{Remark}%[section] 
\newcommand\ong[1]{{#1}}
\newcommand\commentout[1]{}
\renewcommand\phi{\varphi}
\newcommand\dom{\mathit{dom}}
\newcommand\Var{\mathit{Var}}
\renewcommand\phi{\varphi}
\newcommand\arity{\mathit{ar}}
\newcommand\trmng[1]{{\mathfrak{Trav} \lform{#1}}}
\newcommand\lform[1]{\langle{#1}\rangle}
\newcommand\intseq[1]{\textbf{IntSeq}(#1)}
\newcommand\intseqpv[1]{\textbf{IntSeq}^{{\rm PV}}(#1)}
\newcommand\trunc[2]{{#1}_{{\leq}{#2}}}
\newcommand\calG{{\cal G}}
\newcommand\calL{{\cal L}}
\newcommand\calI{{\cal I}}
\newcommand\comp[2]{\mathit{Comp}_{#2}(#1)}
\newcommand\moves[1]{|#1|}
\newcommand\Omoves[1]{|#1|^{\OO}}
\newcommand\Pmoves[1]{|#1|^{\PP}}
\newcommand\natnum{{\mathbb N}}
\newcommand\pto{\rightharpoonup}
\newcommand\LamAt{\hbox{\boldmath $\Lambda$}(@)}
\newcommand\Lam{\hbox{\boldmath $\Lambda$}}
\newcommand\roundbra[1]{(#1)}
\newcommand\FV{\hbox{FV}}
\newcommand\Ar{\mathit{Ar}}
\newcommand\To{\Rightarrow}
\newcommand\Initial{\mathit{Init}}
\newcommand\Tree[1]{\mathit{Tree}_{#1}}
\newcommand\ArExp{{\mathit{ExpAr}}}
\newcommand\ArSuc{{\mathit{SucAr}}}
\newcommand\PP{\mathrm{P}}
\newcommand\OO{\mathrm{O}}
\newcommand\sdot{\cdot}
        \newcommand   \myendproof{{%        set up
           \parfillskip=0pt            % so \par doesnt push \square to left
           \widowpenalty=10000         % so we dont break the page before \square
               \displaywidowpenalty=10000  % ditto
               \finalhyphendemerits=0      % TeXbook exercise 14.32
        %
        %                 horizontal
        \leavevmode                 % \nobreak means lines not pages
           \unskip                     % remove previous space or glue
          \nobreak                    % don't break lines
                \hfil                       % ragged right if we spill over
        \penalty50                  % discouragement to do so
        \hskip2pt                   % ensure some space
        \null                       % anchor following \hfill
        \hfill                      % push \square to right
         $\square$                   % the end-of-proof mark
        %
        %                   vertical
        \par                        % build paragraph
        \penalty-200                % prefer proofs with statements
        \smallskip                  % space after
          }
         }
        \newcommand\makeset[1]{\{\,#1\,\}}
        \newcommand\anglebra[1]{\langle\, #1 \,\rangle}
        \newcommand\mor{\longrightarrow}
        \newcommand\lub{\bigsqcup}
        \newcommand\proj[2]{{#1 \restriction #2}}
        \newcommand\pview[1]{\ulcorner{#1}\urcorner}
        \newcommand\oview[1]{\llcorner{#1}\lrcorner}
        \newcommand  \bohm{\mathrel{\lower.2ex
                \hbox{${\stackrel{\sqsubset}{\scriptscriptstyle \sim}}$}}}
        \newcommand\funsp{\Rightarrow}
        \newcommand\mng[1]{{\mathopen{[\![}\,#1\,\mathclose{]\!]}}}
\newcommand\blambda{\lambda}
        \newcommand\lterm[2]{{\blambda{#1}.{#2}}}
        \newcommand\seq[2]{{{#1}}\vdash{{#2}}}
\title{Normalisation by Traversals}
\titlerunning{Normalisation by Traversals}
\author{C.-H.~Luke~Ong}
\authorrunning{C.-H.~Luke~Ong}
\institute{University of Oxford}
\begin{document}

   \maketitle

\begin{abstract}
We present a novel method of computing the $\beta$-normal $\eta$-long form of a simply-typed $\lambda$-term by constructing \emph{traversals} over a variant abstract syntax tree of the term.
In contrast to $\beta$-reduction, which changes the term by substitution, this method of normalisation by traversals leaves the original term intact. 
We prove the correctness of the normalisation procedure by game semantics. 
As an application, we establish a path-traversal correspondence theorem which is the basis of a key decidability result in higher-order model checking.

\end{abstract}

%!TEX root = main-easychair.tex

\tableofcontents

\sr{From Neil: 1. Use (FVar) and (BVar) for (Var) and (CC) instead.

2. Emphasise (BVar) is copy-cat, and the transfer of control between caller and callee.

3. Consider redefining directionality of @ to start from 0.

4. The (Lam) rule needs further work / clarification.}

\lo{LO: All done. I have added a discussion after Definition~\ref{def:traversal} (highlighted in red) to clarify 3 and 4.}

\sr{From Steven: I think the introduction reads very well.  The only suggestion I would make is regarding the presentation of the main theorem.  When the main theorem is presented at the top of page 4, you have already spent some effort in introducing traversals, but you have not introduced P-views.  Then the theorem relates root-justified traversals to maximal P-views over the denotation.  I think if the reader understands games semantics then this is fine, but if you want to reach a larger audience then it will be worthwhile spending a short paragraph introducing these maximal P-views. 

\lo{I have added an informal explanation of P-views. I have rewritten large parts of Section~\ref{sec:traversal-intro}, including a different and hopefully more detailed and helpful example of long form, more examples of traversals, and explanation of strong bijection between projected traversals and paths (= P-views) in the AST of the normal form.} 

After stating the theorem you say that they correspond to maximal paths in the abstract syntax tree of the normal form of the term -- I think this could just be more prominent in a paragraph of its own, perhaps with an example. The reason is that, if the reader is not familiar with P-views then they may just read it as ``there are two notions that you have never heard of and it is shown that they are strongly related''.  On the other hand, my feeling is that if they get to the point of reading the statement of the main theorem and they have already equated ``maximal P-view over the denotation'' with beta-normal form, then the utility of the theorem will be immediately clear.  Hence, I would spend a short paragraph ensuring that this connection is made in their mind. \lo{LO: Done. I have rewritten this part and presented it as a sketch of the correctness proof of normalisation by traversals.}

One small point about the definition of traversal on page 15: in the rule for (Lam) the last sentence ends ``if $n$ is labelled by a variable (as opposed to @) then its pointer is determined by the condition that 
$ t \cdot \lambda\overline\xi \cdot n$ is a path in $T$''.  
I was discussing this with Neil Jones because he was having some trouble with the definition.  
Could it be that you mean to say that ``if $n$ is labelled by a variable (as opposed to @) then its pointer is determined by the condition that $t \cdot \lambda\overline\xi \cdot n$ is a justified sequence''?  
I ask this because the condition of being a path in the tree does not seem to involve pointers. 
\lo{LO: You are right: I was not clear enough. A path in (the abstract syntax tree of) a long form \emph{is} a justified sequence; in fact, it is necessarily a P-view. This is clarified in Remark~\ref{rem:pview-path}.} 
Rather, the condition of being a path in the tree determines the label of $n$ and the pointer from $n$ is determined by the condition that the expression is a valid justified sequence.  
\lo{LO: I have added comments and examples after Definition~\ref{def:traversal}.}
Incidentally, I think the two occurrences of $T_B$ in this rule should both be $T_M$. \lo{LO: Yes, thanks.}}

\section{Introduction}

This paper is about a method of computing the normal form of a lambda-term by traversing a slightly souped up version of the abstract syntax tree of the term, called its \emph{long form}.
A \emph{traversal} is a certain \emph{justified sequence} of nodes of the tree i.e.~sequence of nodes such that each (non-initial) node is equipped with a \emph{justification pointer} to an earlier node.
Each traversal may be viewed as computing a path in the abstract syntax tree of the $\beta$-normal $\eta$-long form of the term.
Note that a term-tree, such as the normal form of a term, is determined by the set of its paths.
The usual (normalisation by) $\beta$-reduction changes the term by substitution.  
By contrast, our method of normalisation by traversals does not perform $\beta$-reduction, thus leaving the original term intact. 
In this sense, normalisation by traversals uses
a form of reduction that is \emph{non-destructive} and \emph{local} \citep{DanosR93}.

%\lo{TODO: Mention: normalisation without $\beta$-reduction.}

\subsection{Traversals: an example}
\label{sec:traversal-intro}

We first illustrate traversals with an example.
Take the term-in-context, 
\[
%\seq{g : o \to o, a : o}{N \, P \, g : o},
\seq{g : (o \to o) \to o \to o, a : o}{N \, P \, R : o},
\] 
where 
\begin{align*}
N &= \lterm{\phi^{(o \to o) \to o \to o} \, z^{o \to o}}{\phi \, (\lterm{x}{\phi \, (\lterm{x'}{x}) \, a}) \, (z \, a)}\\
%P &= (\lterm{f^{o \to o} \, y^o}{f \, (g \, y)}).
P &= \lterm{f^{o \to o} \, y^o}{f \, (g \, (\lterm{b}{b})\, y)}\\
R &= g \, (\lterm{b'}{b'})
\end{align*}
which has normal form $g\,(\lterm{b}{b})\,(g \, (\lterm{b'}{b'}) \, a)$.
%$g \, (g \, a)$. 
To normalise the term $N \, P \, R$ by traversal, we first construct its long form, written $\lform{N \, P \, R}$, which is the following term
\[
\lterm{}{@ \, 
\Big(
\lterm{\phi z}{\phi \, 
\big(
\lterm{x}{\phi \, (\lterm{x'}{x}) \, (\lterm{}{a})}
\big) 
\,
\big(
\lterm{}{z \, (\lterm{}{a})}
\big)
}
\Big) 
\, 
\Big(
\lterm{fy}{
f 
\big(
\lterm{}{g (\lterm{b}{b}) (\lterm{}{y})}
\big)
}
\Big) 
\, 
\Big(
\lterm{w}{g (\lterm{b'}{b'}) (\lterm{}{w})}}
\Big)
\]
% \[
% \lform{N \, P \, g} = 
% \lterm{}{@ \, 
% \Big(
% \lterm{\phi z}{\phi \, 
% \big(
% \lterm{x}{\phi \, (\lterm{x'}{x}) \, (\lterm{}{a})}
% \big) 
% \,
% \big(
% \lterm{}{z \, (\lterm{}{a})}
% \big)
% }
% \Big) 
% \, 
% \Big(
% \lterm{fy}{
% f 
% \big(
% \lterm{}{g (\lterm{}{y})}
% \big)
% }
% \Big) 
% \, 
% \Big(
% \lterm{w}{g \, (\lterm{}{w})}}
% \Big)
% \]
The long form is obtained by $\eta$-expanding the term fully\footnote{Somewhat nonstandardly, every ground-type subterm that is \emph{not} in a function position is also expanded (to a term with a ``dummy lambda'') $t \mapsto \lterm{}{t}$. For example, $\lterm{x^{o \to o \to o}}{x \, a}$ fully $\eta$-expands to $\lterm{x^{o \to o \to o} z^o}{x \, (\lterm{}{a}) \, (\lterm{}{z})}$; 
and $g\,(\lterm{b}{b})\,(g \, (\lterm{b'}{b'}) \, a)$ fully $\eta$-expands to 
$\lterm{}{g\,(\lterm{b}{b})\,(\lterm{}{g \, (\lterm{b'}{b'}) \, \lterm{}{a}})}$.}, 
and then replacing the (implicit) binary application operator of each redex by the \emph{long application} operator $@$.

Now consider the abstract syntax tree of $\lform{N \, P \, R}$, as shown in Figure~\ref{fig:traversaleg}.
%In the Figure, we enumerate the nodes, and refer to a node by its number.
Notice that nodes on levels 0, 2, 4, etc., are labelled by lambdas; and those on levels 1, 3, 5, etc., are labelled by either variables or the long application symbol $@$.
The dotted arrows (pointing from a variable to its $\lambda$-binder) indicate an enabling relation between nodes of the tree: $n \vdash n'$ (read ``$n'$ is enabled by $n'$'') just if \(\xymatrix@C=.7cm{n & n' \ar@{.>}@/_.5pc/[l]}\).
By convention, (nodes labelled by) free variables are enabled by the root node, as indicated by the dotted arrows.
Further, every lambda-labelled node, except the root, is enabled by its parent node in the tree (we omit all such dotted arrows from the figure to avoid clutter).

Traversals are \emph{justified sequences} (i.e., sequences of nodes whereby each (non-initial) node has a justification pointer to an earlier node) that strictly alternate between lambda and non-lambda labels. 
The long form $\lform{N \, P \, R}$ has three maximal traversals, one of which is the following:
%\[
%1, 2, 3, 6, 4, 7, 9, 13, 4, 7, 17, 21, 11, 15, %20, 24, 10, 14, 5, 8, 12, 16, 19, 23
%\]
% \[\small
% \Pstr[0.65cm]{
% (n1){\blambda}\,
% (n2){@}\,
% (n3-n2){\blambda \phi z}\,
% (n4-n3){\phi^6}\,
% (n5-n2,35:2){\blambda f y}\,
% (n6-n5){f}\,
% (n7-n4,30:1){\blambda x^9}\,
% (n8-n3){\phi^{13}}\,
% (n9-n2,43:2){\blambda f y}\,
% (n10-n9){f}\,
% (n11-n8,30:1){\blambda x'}\,
% (n12-n7){x}\,
% (n13-n6){\blambda^{11}}\,
% (n14-n1){g}\,
% (n15-n14){\blambda}\,
% (n16-n5){y}\,
% (n17-n4,40:2){\blambda^{10}}\,
% (n18-n3){z}\,
% (n19-n2){\blambda w}\,
% (n20-n1){g}\,
% (n21-n20){\blambda^{12}}\,
% (n22-n19){w\,
% (n23-n18){\blambda^{19}}\,
% (n24-n1){a}
% }
% \]
%\small
\begin{equation}
\small
\Pstr[0.65cm]{
(n1){\blambda}\,
(n2){@}\,
(n3-n2,33:1){\blambda \phi z}\,
(n4-n3){\phi}\,
(n5-n2,35:2){\stackrel[5]{}{\blambda f y}}\,
(n6-n5){f}\,
(n7-n4,38:1){\blambda x}\,
(n8-n3){\phi}\,
(n9-n2,43:2){\blambda f y}\,
(n10-n9){\stackrel[10]{}{f}}\,
(n11-n8,38:1){\blambda x'}\,
(n12-n7){x}\,
(n13-n6,45:1){\blambda}\,
(n14-n1){g}\,
(n15-n14,30:2){\stackrel[15]{}{\blambda}}\,
(n16-n5){y}\,
(n17-n4,45:2){\blambda}\,
(n18-n3){z}\,
(n19-n2,45:3){\blambda w}\,
(n20-n1){\stackrel[20]{}{g}}\,
(n21-n20,30:1){\blambda b'}\,
(n22-n21){b'}\,
%(n23-n18,45:1){\blambda}\,
%(n24-n1){a}
}
\label{eq:traversal-intro-1}
\end{equation}

The five rules that define traversals are displayed in Table~\ref{tab:traversals}.
The rule (Root) says that the root node is a traversal. 
The rule (Lam) says if a traversal $t$ ends in a $\blambda$-labelled node $n$, then $t$ extended with the child node of $n$, $n'$, is also a traversal. 
%\changed[lo]{(Suppose in the P-view $\pview{t \sdot n'}$, $n'$ points to the move-occurrence $(n, i)$, then $n'$ in $t \sdot n'$ points to the (necessarily unique) move-occurrence $(n, j)$ that projects to $(n, i)$ under the P-view transformation: $ t \sdot n' \mapsto \pview{t \sdot n'}$.)
%(The pointer of $n'$ in $t \sdot n'$ is determined by the condition that the P-view, $\pview{t \sdot n'}$, is a path of the long form in question.)}
Note that every node of a traversal in an even position is constructed by rule (Lam).
The rule (App) justifies the construction of the third node of traversal (\ref{eq:traversal-intro-1}).
If a traversal ends in a node labelled with a variable $\xi_i$,
then there are two cases, corresponding to whether $\xi_i$ is (hereditarily justified\footnote{We say that a node-occurrence $n$ in a justified sequence is hereditarily justified by another $n'$ if there is a chain of pointers from $n$ to $n'$.} by) a bound (BVar) or free (FVar) variable in the long form.

(BVar): If the traversal has the form
$\Pstr[0.7cm]{t \sdot (n){n} \sdot (lxi){\blambda \overline\xi} \cdots (xi-lxi,45){\xi_i} }$ where $\overline \xi = \xi_1 \cdots \xi_n$, and $\xi_i$ is hereditarily justified by a $@$, then there are two subcases. 
\begin{itemize}[-] 
\item If $n$ is (labelled by) a variable then $\Pstr[0.7cm]{t \sdot (n){n} \sdot (lxi){\blambda \overline\xi} \cdots (xi-lxi,45){\xi_i} \sdot (leta-n,45:i){\blambda\overline\eta}}$ is a traversal, whereby the pointer label $i$ means that the node $\blambda \overline \eta$ is the $i$-th child of $n$. 
For example, this rule justifies the construction of the 11th node $\blambda x'$ and 13th node $\blambda$ of  traversal (\ref{eq:traversal-intro-1}).

\item If $n$ is (labelled by) $@$ then $\Pstr[0.7cm]{t \sdot (n){@} \sdot (lxi){\blambda \overline\xi} \cdots (xi-lxi,45){\xi_i} \sdot (leta-n,45:{i+1}){\blambda\overline\eta}}$ is a traversal. 
For example, this rule justifies the construction of the 5th node $\blambda f y$, 9th node $\blambda f y$ and 19th node $\blambda w$ of traversal (\ref{eq:traversal-intro-1}). 
%The pointer from $\blambda \overline \eta$ is labelled $i+1$, 
meaning that the node is the $(i+1)$-th child of $@$.
\end{itemize}
Intuitively the rule (BVar) captures the switching of control between caller and callee, or between formal and actual parameters. 
See Remark~\ref{rem:traversals} for further details.

(FVar): If the traversal has the form $\Pstr[0.7cm]{t \sdot (lxi){\blambda \overline\xi} \cdots (xi-lxi,45){\xi_i}}$ and $\xi_i$ is hereditarily justified by the opening node $\epsilon$, then
$\Pstr[0.7cm]{t \sdot (lxi){\blambda \overline\xi} \cdots (xi-lxi,45){\xi_i} \sdot (leta-xi,35:j){\blambda\overline\eta}}$ is a traversal, for each child-node $\blambda\overline\eta$ of $\xi_i$ (so $j$ ranges over $\makeset{1, \cdots, \arity(\xi_i)}$ where $\arity(\xi_i)$ is the arity (branching factor) of $\xi_i$). 
For example, the 15th node $\blambda$ and the 21st node $\blambda$ of traversal (\ref{eq:traversal-intro-1}) are constructed by this rule.

As mentioned earlier, each traversal computes a path in the abstract syntax tree of the $\beta$-normal $\eta$-long form of the term $N \, P \, R$, which is shown in Figure~\ref{fig:nf-NPR}. 
With reference to the Figure, notice that each path of the tree is actually an alternating justified sequence, in fact, a {P-view} (about which more anon). 
%(Such paths correspond to maximal P-view in the game semantics $\mng{\seq{g : o \to o, a : o}{N \, P \, g : o}}$.) 
Such a justified path is obtained from the traversal by projecting to those nodes that are hereditarily justified by the root node.
Thus we obtain the following projected justified subsequence from traversal (\ref{eq:traversal-intro-1}), %$1, 15, 20, 8, 12, 23$, tracing out the sequence 
\(
\Pstr[0.65cm]{
(n1){\blambda}\, 
(n2-n1){g}\, 
(n3-n2,30:2){\blambda}\, 
(n4-n1){g}\, 
(n5-n4,30:1){\blambda b'}\, 
(n6-n5){b'}
},
\) 
which is a maximal path of 
$\lterm{}{
g \, (\lterm{b}{b}) \, 
(
\lterm{}{g \, (\lterm{b'}{b'}) \, (\lterm{}{a})}
)
}$,
the $\beta$-normal $\eta$-long form of $N \, P \, R$.

The other maximal traversals of $\lform{N \, P \, R}$ are:
\begin{equation}
\small
\Pstr[0.65cm]{
(n1){\blambda}\,
(n2){@}\,
(n3-n2,33:1){\blambda \phi z}\,
(n4-n3){\phi}\,
(n5-n2,35:2){\stackrel[5]{}{\blambda f y}}\,
(n6-n5){f}\,
(n7-n4,38:1){\blambda x}\,
(n8-n3){\phi}\,
(n9-n2,43:2){\blambda f y}\,
(n10-n9){\stackrel[10]{}{f}}\,
(n11-n8,38:1){\blambda x'}\,
(n12-n7){x}\,
(n13-n6,45:1){\blambda}\,
(n14-n1){g}\,
(n15-n14,30:2){\stackrel[15]{}{\blambda}}\,
(n16-n5){y}\,
(n17-n4,45:2){\blambda}\,
(n18-n3){z}\,
(n19-n2,45:3){\blambda w}\,
(n20-n1){\stackrel[20]{}{g}}\,
(n21-n20,30:2){\blambda}\,
(n22-n19){w}\,
(n23-n18,45:1){\blambda}\,
(n24-n1){a}
}
\label{eq:traversal-intro-2}
\end{equation}
\begin{equation}
\small
\Pstr[0.65cm]{
(n1){\blambda}\,
(n2){@}\,
(n3-n2,33:1){\blambda \phi z}\,
(n4-n3){\phi}\,
(n5-n2,35:2){\stackrel[5]{}{\blambda f y}}\,
(n6-n5){f}\,
(n7-n4,38:1){\blambda x}\,
(n8-n3){\phi}\,
(n9-n2,43:2){\blambda f y}\,
(n10-n9){\stackrel[10]{}{f}}\,
(n11-n8,38:1){\blambda x'}\,
(n12-n7){x}\,
(n13-n6,45:1){\blambda}\,
(n14-n1){g}\,
(n15-n14,30:1){\stackrel[15]{}{\blambda b}}\,
(n16-n15){b}
}
\label{eq:traversal-intro-3}
\end{equation}
which respectively project to the maximal paths, 
\(
\Pstr[0.65cm]{
(n1){\blambda}\, 
(n2-n1){g}\, 
(n3-n2,30:2){\blambda}\, 
(n4-n1){g}\, 
(n5-n4,30:2){\blambda}\, 
(n6-n1){a}
}
\)
and
\(
\Pstr[0.65cm]{
(n1){\blambda}\, 
(n2-n1){g}\, 
(n3-n2,30:1){\blambda b}\, 
(n4-n3){b}
}
\)
, of 
% $\lterm{}{
% g \, (\lterm{b}{b}) \, 
% (
% \lterm{}{g \, (\lterm{b'}{b'}) \, (\lterm{}{a})}
% )
% }$, 
the $\beta$-normal $\eta$-long form of $N \, P \, R$.

\begin{figure}
\[\xymatrix@R-.4cm@C-.1cm{
 & & & \blambda \ar@{-}[d] & & & & & \\ %1
 & & & @ \ar@{-}[dl] \ar@{-}[drr] \ar@{-}[drrrr] & & & & & \\ %2
 & & \blambda \phi z \ar@{-}[d] & & & \blambda f y \ar@{-}[d] & & \blambda w \ar@{-}[d] & \\ %3
 & & \phi \ar@{-}[dl] \ar@{-}[dr] \ar@{.>}@/^1pc/[u] & & & f \ar@{-}[d] \ar@{.>}@/_.7pc/[u] & & g \ar@{-}[ld] \ar@{-}[rd] \ar@{.>}@/_1pc/[uuullll] & \\ %4
 & \blambda x \ar@{-}[d] & & \blambda \ar@{-}[d] & & \blambda \ar@{-}[d] & \blambda b' \ar@{-}[d] & & \blambda \ar@{-}[d] \\ %5
 & \phi \ar@{-}[dl] \ar@{-}[dr] \ar@{.>}@/_.5pc/[uuur] & & z \ar@{-}[d] \ar@{.>}@/^1pc/[uuul] & & g \ar@{.>}@/^1pc/[uuuuull] \ar@{-}[ld] \ar@{-}[rd] & b' \ar@{.>}@/_.7pc/[u] & & w \ar@{.>}@/_2pc/[uuul]\\ %6
\blambda x' \ar@{-}[d]  & & \blambda \ar@{-}[d] & \blambda \ar@{-}[d] & \blambda b\ar@{-}[d] & & \blambda \ar@{-}[d] & & \\ %7
x \ar@{.>}@/^2.5pc/[uuur] & & a \ar@{.>}@/^1.5pc/[uuuuuuur] & a \ar@{.>}@/_2pc/[uuuuuuu] & b \ar@{.>}@/_.7pc/[u] & & y \ar@{.>}@/_2pc/[uuuuul] & &}
\]
% BELOW VERSION UNTIL 5 Nov
% \[\xymatrix@R-.4cm@C-.1cm{
%  & & & \blambda \ar@{-}[d] & & \\
%  & & & @ \ar@{-}[dl] \ar@{-}[dr] \ar@{-}[drr] & & \\
%  & & \blambda \phi z \ar@{-}[d] & & \blambda f y \ar@{-}[d] & \blambda w \ar@{-}[d] \\
%  & & \phi \ar@{-}[dl] \ar@{-}[dr] \ar@{.>}@/^1pc/[u] & & f \ar@{-}[d] \ar@{.>}@/_1pc/[u] & g \ar@{-}[d] \ar@{.>}@/_1pc/[uuull]\\
%  & \blambda x \ar@{-}[d] & & \blambda \ar@{-}[d] & \blambda \ar@{-}[d] & \blambda \ar@{-}[d] \\
%  & \phi \ar@{-}[dl] \ar@{-}[dr] \ar@{.>}@/_.5pc/[uuur] & & z \ar@{-}[d] \ar@{.>}@/^1pc/[uuul] & g \ar@{.>}@/^1pc/[uuuuul] \ar@{-}[d] & w \ar@{.>}@/_1pc/[uuu]\\
% \blambda x' \ar@{-}[d]  & & \blambda \ar@{-}[d] & \blambda \ar@{-}[d] & \blambda \ar@{-}[d] & \\
% x \ar@{.>}@/^2.5pc/[uuur] & & a \ar@{.>}@/^1.5pc/[uuuuuuur] & a \ar@{.>}@/_2pc/[uuuuuuu] & y \ar@{.>}@/_1pc/[uuuuu] &}
% \]
\caption{The abstract syntax tree of the long form $\lform{N \, P \, R}$ in Section~\ref{sec:traversal-intro}.\label{fig:traversaleg}}
\end{figure}

\begin{figure}
\[
\xymatrix@R-.4cm@C-.1cm{
& \blambda \ar@{-}[d] & & \\
& g \ar@{-}[ld] \ar@{-}[rd] \ar@{.>}@/_.7pc/[u] & & \\
\blambda b \ar@{-}[d] & & \blambda \ar@{-}[d] & \\
b \ar@{.>}@/_.7pc/[u] & & g \ar@{-}[ld] \ar@{-}[rd] \ar@{.>}@/_2pc/[uuul] & \\
& \blambda b' \ar@{-}[d] & & \blambda \ar@{-}[d]\\
& b' \ar@{.>}@/_.7pc/[u] & & a \ar@{.>}@/_3.5pc/[uuuuull]
}
\]
\caption{The abstract syntax tree of $\lform{g \, (\lterm{b}{b}) \, (g \, (\lterm{b'}{b'}) \, a)}$ in Section~\ref{sec:traversal-intro}.\label{fig:nf-NPR}}
\end{figure}
% VERSION: nodes annotated with numbers
% \[\xymatrix@R-.4cm@C-.1cm{
%  & & & \blambda^{1} \ar@{-}[d] & & \\
%  & & & @^{2} \ar@{-}[dl] \ar@{-}[dr] \ar@{-}[drr] & & \\
%  & & \blambda \phi z^{3} \ar@{-}[d] & & \blambda f y^{4} \ar@{-}[d] & \blambda x^{5} \ar@{-}[d] \\
%  & & \phi^{6} \ar@{-}[dl] \ar@{-}[dr] \ar@{.>}@/^1pc/[u] & & f^{7} \ar@{-}[d] \ar@{.>}@/_1pc/[u] & g^{8} \ar@{-}[d] \ar@{.>}@/_1pc/[uuull]\\
%  & \blambda x^{9} \ar@{-}[d] & & \blambda^{10} \ar@{-}[d] & \blambda^{11} \ar@{-}[d] & \blambda^{12} \ar@{-}[d] \\
%  & \phi^{13} \ar@{-}[dl] \ar@{-}[dr] \ar@{.>}@/_.5pc/[uuur] & & z^{14} \ar@{-}[d] \ar@{.>}@/^1pc/[uuul] & g^{15} \ar@{.>}@/^1pc/[uuuuul] \ar@{-}[d] & x^{16} \ar@{.>}@/_1pc/[uuu]\\
% \blambda x'^{17} \ar@{-}[d]  & & \blambda^{18} \ar@{-}[d] & \blambda^{19} \ar@{-}[d] & \blambda^{20} \ar@{-}[d] & \\
% x^{21} \ar@{.>}@/^2.5pc/[uuur] & & a^{22} \ar@{.>}@/_1pc/[uuuuuuur] & a^{23} \ar@{.>}@/_1pc/[uuuuuuu] & y^{24} \ar@{.>}@/_1pc/[uuuuu] &}
% \]
%\privatenote{Mention ``fully revealed semantics''.}

\subsection{Correctness of normalisation by traversals}

We state the correctness theorem.

\begin{theorem}[Correctness]
Given a term-in-context $\seq{\Gamma}{M : A}$, there is a bijection between the following sets of justified sequences:
\begin{itemize}[-]
\item $\proj{\trmng{M}}{\epsilon}$, traversals over $\lform{M}$ projected to nodes hereditarily justified by the root node $\epsilon$
\item $\mathit{Path}\lform{\beta(M)}$, justified paths in the abstract syntax tree of the long form $\lform{\beta(M)}$, where $\beta(M)$ is the $\beta$-normal form of $M$. 
\end{itemize}
Furthermore the bijective map is \emph{strong} in the sense that every projected traversal and its image as path are isomorphic as justified sequences.
Thus, normalisation by traversals is correct: the set of projected traversals over $\lform{M}$ determine the $\beta$-normal $\eta$-long form of $M$.
\end{theorem}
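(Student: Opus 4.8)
The plan is to route the correspondence through the game-semantic denotation $\mng{M}$, using three ingredients: that traversals over $\lform{M}$ are exactly the uncovered interaction sequences associated with $M$; that projection to root-justified nodes extracts the P-views, i.e.~the plays of the strategy $\mng{M}$; and that the game model validates $\beta\eta$-equality, so that $\mng{M} = \mng{\beta(M)}$. Chaining these gives the identifications
$
\proj{\trmng{M}}{\epsilon} \;\leftrightarrow\; \text{P-views of } \mng{M} \;=\; \text{P-views of } \mng{\beta(M)} \;\leftrightarrow\; \mathit{Path}\lform{\beta(M)},
$
and strength follows because each step is an isomorphism on justification structure.

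First I would establish the path-traversal correspondence at the level of $\mng{M}$: the set $\trmng{M}$ of traversals coincides, as a set of justified sequences, with the interaction sequences generated by uncovering the composition of the strategies denoting the subterms of $M$. Here the five rules are read as move-generation clauses of the revealed play. The rules (Root) and (Lam) produce the copy-cat O/P-alternation forced by the tree structure; (App) initiates the interaction at a redex; and (BVar), (FVar) encode respectively the transfer of control between caller and callee (copy-cat across a bound-variable occurrence) and the opponent move opening a fresh copy of an argument at a free variable. The essential bookkeeping is that the pointer discipline of these rules matches the justification pointers of the uncovered interaction; in particular the $i$-th versus $(i+1)$-th child convention in (BVar), according to whether the justifier is a variable or $@$, reproduces exactly the off-set in the game arena at an application node.

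Next I would show that $\proj{\trmng{M}}{\epsilon}$, the projection to nodes hereditarily justified by the root $\epsilon$, computes precisely the P-views of $\mng{M}$: the hereditary-justification-by-$\epsilon$ condition is exactly the condition selecting the external thread of the interaction, and by construction the resulting subsequence is already in P-view form, so no alternation-repair is needed (cf.~the remark that a path in a long form is necessarily a P-view). Since the model is a $\lambda\beta\eta$-model, $\mng{M} = \mng{\beta(M)}$, so these are the P-views of $\mng{\beta(M)}$. Finally, for the normal form $\beta(M)$ I would exhibit the strong bijection directly: as $\beta(M)$ has no redexes, $\lform{\beta(M)}$ contains no $@$-nodes, its only pointers are the binder pointers already present in the tree, and the labelled tree is — up to relabelling of nodes by moves — the unfolding of the strategy $\mng{\beta(M)}$. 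Hence each justified path in $\lform{\beta(M)}$ is a maximal P-view and conversely, and this identification is an isomorphism of justified sequences, which is exactly the strength required.

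The hardest step, I expect, is the first: proving that the syntactically defined traversal rules generate exactly the uncovered interaction sequences, \emph{with matching pointers}. The subtlety lies entirely in the pointer discipline of (BVar) — correctly accounting for the child-indexing shift at an $@$-node versus a variable node, and verifying that the hereditary-justification condition distinguishing (BVar) from (FVar) coincides with the O/P-role of the corresponding move in the game. Once this is pinned down, the remaining steps are routine consequences of soundness of the model and the degenerate structure of normal-form strategies.
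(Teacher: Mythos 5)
Your proposal is correct and follows essentially the same route as the paper: the paper factors the bijection as $\proj{\trmng{M}}{\epsilon} \xrightarrow{\hat{\ell}^\ast} \pview{\mng{\seq{\Gamma}{M : A}}} \xrightarrow{{\cal G}} \mathit{Path}\lform{\beta(M)}$, where the right leg is the definability result of \citep{HO00} (together with soundness, $\mng{M} = \mng{\beta(M)}$) and the left leg is exactly your ``traversals are uncovered interactions'' step, which the paper makes precise via the direct arena morphism $\hat{\ell} : \ArExp\lform{M} \to \ArSuc\lform{M}$ and a mutual induction against P-visible interaction sequences $\intseqpv{\hbox{-},\hbox{-}}$ (Theorem~\ref{thm:corr}, Lemmas~\ref{lem:pviewproj} and~\ref{lem:corr}). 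You also correctly single out the genuinely hard part --- matching the (BVar)/(FVar) pointer discipline, including the $i$ versus $i+1$ child offset at $@$-nodes, against the justification structure of the interaction --- which is precisely what the paper's Case 1.1/1.2.1/1.2.2 analysis in the proof of Lemma~\ref{lem:corr} carries out.
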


The proof is via game semantics \citep{HO00}.
The game-semantic denotation of a sequent, $\mng{\seq{\Gamma}{M : A}}$, is an \emph{innocent} strategy, represented as a certain prefix-closed set of justified sequences called \emph{plays}.
Innocent means that the strategy is generated by the subset $\pview{\mng{\seq{\Gamma}{M : A}}}$ of plays which are \emph{P-views}. 
(Intuitively the P-view of a play is a certain justified subsequence consisting only of those moves which player P considers relevant for determining his next move. 
See Section~\ref{sec:games-basics} for the definitions.)
We prove the correctness theorem by showing that the following three sets of justified sequences are strongly bijective:

\begin{equation}
%\hat{\ell}^\ast : \proj{\trmng{M}}{\epsilon} \; \stackrel{\sim}{\longrightarrow} \; \pview{\mng{\seq{\Gamma}{M : A}}}
\xymatrix@C=1cm{
\proj{\trmng{M}}{\epsilon} \ar@{->}[r]^{\hat{\ell}^\ast} & 
\pview{\mng{\seq{\Gamma}{M : A}}} \ar@{->}[r]^{{\cal G}} & \mathit{Path}\lform{\beta(M)}
}
\label{eq:mainresult}
\end{equation}

%Furthermore, the bijection $\hat\ell^\ast$ is strong, in the sense that for each $t \in \proj{\trmng{M}}{\epsilon}$, we have $t$ and $\hat{\ell}^\ast(t)$ are isomorphic as justified sequences.

The strong bijection on the right, $\cal G$, is a well-known fundamental result of game semantics, and the essence of the definability result \citep{HO00}.
The strong bijection on the left, $\hat{\ell}^\ast$, is the main technical result of the paper, Theorem~\ref{thm:corr}.
The key intuition is that traversals correspond to a certain collection of \emph{uncovered plays} (of an innocent strategy) or plays-without-hiding.
Given a term-in-context $\seq{\Gamma}{M : A}$, we formalise the long form, $\lform{M}$, as a $\Sigma$-labelled binding tree, in the sense of \cite{Stirling09}.
We then identify two arenas associated with $\lform{M}$, viz., \emph{explicit arena} $\ArExp\lform{M}$, and \emph{succinct arena} $\ArSuc\lform{M}$.
The enabling relation $\vdash$ between nodes of the long form $\lform{M}$ (as discussed in Section~\ref{sec:traversal-intro}) is defined as the enabling relation of the arena $\ArExp\lform{M}$, whose underlying set consists of nodes of the (binding) tree $\lform{M}$. 
Traversals over $\lform{M}$ are then defined as justified sequences over $\ArExp\lform{M}$ by induction over a number of rules.

To interpret traversals over $\lform{M}$ as uncovered plays, we define the succinct arena $\ArSuc\lform{M}$ as a disjoint union of 
\begin{itemize}[-]
\item a ``revealed'' arena, consisting of the arena over which $\mng{\seq{\Gamma}{M : A}}$ is defined as a strategy, and 
\item a ``hidden'' arena, for interpreting the moves that are hereditarily justified by an $@$ in $\lform{M}$.
\end{itemize}

We then define a map $\hat\ell : \ArExp\lform{M} \to \ArSuc\lform{M}$, called \emph{direct arena morphism}, which preserves initial moves, and preserves and reflects the enabling relation.
Furthermore the morphism extends to a function $\hat\ell^\ast$ that maps justified sequences of $\ArExp\lform{M}$ to those of $\ArSuc\lform{M}$.
Theorem~\ref{thm:corr} then asserts that the map $\hat\ell^\ast$ defines a strong bijection from $\proj{\trmng{M}}{\epsilon}$ to $\pview{\mng{\seq{\Gamma}{M : A}}}$.

\subsection*{Application to higher-order model checking} 
%We have two applications in mind.

We apply normalisation by traversals to higher-order model checking \citep{Ong15}.
The Higher-order Model Checking Problem \citep{KnapikNU02} asks, given a higher-order recursion scheme $\calG$ and a monadic second-order formula $\phi$, 
whether the tree generated by $G$, written $\mng{\calG}$, satisfies $\phi$?

This problem was first shown to be decidable by \cite{Ong06}.
Ong's proof uses a \emph{transference principle}: instead of reasoning about the parity winning condition of infinite paths in the generated tree $\mng{\calG}$, 
he considers \emph{traversals} over the computation tree of $\calG$, $\lambda(\calG)$, which is a tree obtained from $\calG$ by first transforming the rewrite rules into long forms, 
then unfolding these transformed rules \emph{ad infinitum}, but without performing any $\beta$-reduction 
(i.e.~substitution of actual parameters for formal parameters).
The argument uses a key technical lemma, presented in the following as Theorem~\ref{thm:11traversal}, which 
%asserts a strong correspondence between 
states that paths in the generated tree $\mng{\calG}$ on the one hand, and {traversals} over the computation tree $\lambda(\calG)$ projected to the terminal symbols from $\Sigma$ on the other, are the same set of finite and infinite sequences over $\Sigma$.
In this paper, we apply Theorem~\ref{thm:corr} to prove Theorem~\ref{thm:11traversal}.

%We consider possibly infinite term-trees that are generated by higher-order recursion schemes (HORS). 
%These are node-labelled ranked (and ordered) trees. Such trees -- called \emph{value trees} -- coincide with the game semantics of the generating HORS, which we may regard as a version of adequacy theorem. 
%By unfolding a HORS \emph{ad infinitum}, we obtain what we call the \emph{computation tree} of the HORS. 

%The main result of the paper is a strong correspondence theorem, which states that for every HORS $G$, there is a one-to-one correspondence between (maximal) paths in the tree $\mng{G}$ generated by $G$, and (maximal) \emph{traversals} over the computation tree $\lamgda(G)$ of $G$.

%\subsection*{Related Work} \citep{DR93} \citep{ADLR94}

\section{Technical preliminaries}
%{A category of arenas and innocent strategies}

We write $\natnum = \makeset{1, 2, \cdots}$, $X_0 = \makeset{0} \cup X$ for $X \subseteq \natnum$, $[n] = \makeset{1, \cdots, n}$ for $n \in \natnum$, $X^\star$ for the set of finite sequences of elements of $X$, $\leq$ for the prefix ordering over sequences, and $|x_1 \sdot x_2 \cdots x_n| = n$ for the length of sequences.
By a \emph{tree} $T$, we mean a subset of $\natnum^\star$ that is prefix-closed (i.e.~if $\alpha \in T$ and $\alpha' \leq \alpha$ then $\alpha' \in T$) and order-closed (i.e.~if $\alpha \sdot n \in T$ and $1 \leq n' < n$ then $\alpha \sdot n' \in T$).
A \emph{path} in $T$ is a sequence of elements of $T$, $\alpha_1 \sdot \alpha_2 \cdots \alpha_n$, such that $\alpha_1 = \epsilon$, $\alpha_n = i_1 \sdot i_2 \cdots i_{n-1}$, and $\alpha_{j+1} = \alpha_j \sdot i_{j}$ for each $1 \leq j \leq n-1$.
We write $\mathit{Path}(T)$ for the set of paths in the tree $T$.
A \emph{ranked alphabet} $\Sigma$ is a set of symbols such that each symbol $f \in \Sigma$ has an arity $\arity(f) \geq 0$.
A \emph{$\Sigma$-labelled tree} is a function $F : T \to \Sigma$ such that 
\begin{inparaenum}[(i)]
\item $T$ is a tree, and
\item for each $\alpha \in T$, if $\arity(F(\alpha)) = n$ then $\makeset{1, \cdots, n} = \makeset{i \mid \alpha \sdot i \in T}$.
\end{inparaenum}
By definition, $\Sigma$-labelled trees are \emph{ordered}, i.e., the set of children of each node is a (finite) linear order.
Let $T$ be a tree, and let $\alpha \in T$. The tree \emph{$T$ rooted at $\alpha$}, denoted $T_{@\alpha}$, is the set $\makeset{\gamma \in \natnum^\ast \mid \alpha \sdot \gamma \in T}$. 
%Given a $\Sigma$-labelled tree $F : T \to \Sigma$, and let $\alpha \in T$, the \emph{$\Sigma$-labelled tree rooted at $\alpha$}, denoted $F_{@\alpha}$, is the function $T_{@\alpha} \to \Sigma$ defined by $\gamma \mapsto F(\alpha \sdot \gamma)$.

\emph{Types} (ranged over by $A, B$, etc.) are defined by the grammar: $A ::= o \; | \; (A \to B)$. 
A type can be written uniquely as (by convention $\to$ associates to the right), $A_1 \to \cdots \to A_n \to o$, which we abbreviate to $(A_1, \cdots, A_n, o)$.
The \emph{order} of a type $A$, $\mathit{ord}(A)$, which measures how deeply nested a type is on the left of the arrow, is defined as 
$\mathit{ord}(o) := 0$, and $\mathit{ord}(A \to B) := \max (\mathit{ord}(A) + 1, \mathit{ord}(B))$.
The \emph{arity} of a type $A = (A_1, \cdots, A_n, o)$, written $\arity(A)$, is defined to be $n$.

%This paper is concerned with the (pure) simply-typed $\lambda$-calculus.
We assume an infinite set $\Var$ of typed variables, ranged over by $\Phi, \Psi, \phi, \psi, x, y, z$, etc.
{Raw} terms (ranged over by $M, N, P, Q,$ etc.) of the pure lambda calculus are defined by the grammar: $M ::= x \; | \; \lterm{x^A}{M} \; | \; (M\, N)$; by convention, applications associate to the left. 
Typing judgements (or \emph{terms-in-context}) have the form, $\seq{\Gamma}{M: A}$, where the environment $\Gamma$ is a list of variable bindings of the form $x : A$.
Henceforth, by a \emph{term} $M$ (respectively, $M : A$) we mean a well-typed term (respectively, of type $A$), i.e., $\seq{\Gamma}{M : A}$ is provable for some environment $\Gamma$ and type $A$.
We write $\FV(M)$ for the set of variables that occur free in $M$.

% \lo{[TO REVIEW]
% Fix a ranked alphabet $\Sigma$. \emph{Raw terms} are defined by the following grammar:
% \[M \quad ::= \quad f
% \quad | \quad x
% \quad | \quad \lterm{x^A}{M}
% \quad | \quad (M \, N)
% \quad | \quad \pi_i \, M
% \quad | \quad \roundbra{M, N}
% \]
% where $i \in \makeset{1, 2}$, $f$ ranges over $\Sigma$-symbols and $x$ over variables. A \emph{typing context} (ranged over by $\Gamma, \Delta$, etc.) is a list of the form $x_1 : A_1, \cdots, x_n : A_n$ where $n \geq 0$. We
% write $\ttheory$ for the standard simply-typed $\beta\eta$-theory with \emph{typing judgements} (or \emph{terms-in-context}) of the form $\seq{\Gamma}{M : A}$.} 

\subsection{Arenas, direct arena morphisms and justified sequences}
\label{sec:games-basics}
%\lo{This paragraph is not really necessary.} Intuitively an arena $A = (\moves{A}, \vdash_A, \lambda_A)$ is a well-founded forest %(usually assumed to be well-founded)  whose nodes are labelled by $\PP$ (for Player) or $\OO$ (for Opponent), indicating which of the two players owns a given node.
%The edge relation is known as the enabling relation $\vdash_A$.
%Ownership of the nodes, defined via $\lambda_A : \moves{A} \to \makeset{\OO, \PP}$, is subject to the rule that all nodes on levels 0, 2, 4, etc., are owned by one player, and all nodes on levels 1, 3, 5, etc., are owned by the other player.

\begin{definition}[Arena]
An \emph{arena} is a triple
\(
A = \roundbra{\moves{A}, {\vdash_A}, \lambda_A}
\) 
such that $\moves{A}$ is a set (of moves), ${\vdash_A} \subseteq (\moves{A} +
  \makeset{\star}) \times \moves{A}$ is the enabling relation, and
$\lambda_A : \moves{A} \longrightarrow \makeset{\OO, \PP}$ is the ownership function that partitions moves into $\OO$-\emph{moves} and $\PP$-\emph{moves}, satisfying:
\begin{itemize}
\item For every $m \in \moves{A}$, there exists a unique $m' \in (\moves{A} \cup \makeset{\star})$ such that $m' \vdash_A m$; we call $m'$ the \emph{enabler} of $m$, or $m'$ \emph{enables} $m$.
Writing $\widehat{\vdash_A}$ for the inclusion of $\vdash_A$ in $(\makeset{\star} \cup \moves{A})^2$, we say that $m$ is \emph{hereditarily enabled} by $m'$ if $m' \mathbin{\widehat{\vdash_A}^\ast} m$, writing $\rho^\ast$ for the reflexive, transitive closure of a binary relation $\rho$). 
%\lo{Check: Is transitive closure enough?}

\item Whenever $m \vdash_A m'$ then $\lambda_A(m) \not= \lambda_A(m')$.

%\item The relation $\vdash_A$ is \emph{well-founded}, i.e., for each $m \in \moves{A}$, the set $\makeset{m' \mid m' \vdash_A^\star m}$ is finite, writing $\vdash^\star_A$ to mean the reflexive, transitive closure of the obvious extension of $\vdash_A$ to a binary relation over $\moves{A} + \makeset{\star}$.
\end{itemize}
\end{definition}
%Thus an arena is just a forest (well-founded in the above sense) such that nodes at levels 0, 2, 4, etc.~are labelled O, and those at levels 1, 3, 5, etc.~are labelled P.
We call a move \emph{initial} if its enabler is $\star$, and write $\Initial_A$ for the set of initial moves of $A$.
An arena $A$ is said to be $\OO$-\emph{initial} if $\lambda_A(m) = \OO$ for every initial $m$; 
and \emph{pointed} if $\Initial_A$ is a singleton set.
We define the set of $\OO$-\emph{moves} as \( \Omoves{A} :=\{ m \in \moves{A} \mid \lambda_A(m) = \OO \} \), and the set of $\PP$-\emph{moves} as \( \Pmoves{A} := \{ m \in \moves{A} \mid \lambda(m) = \PP \} \).  
The \emph{opposite arena} of $A$ is $A^\bot := (\moves{A}, \vdash_A, \lambda_A^\bot)$ where $\lambda^\bot_A(m) = \PP$ if, and only if, $\lambda_A(m) = \OO$.
\changed[lo]{We say that $A$ is \emph{ordered} if for each $m \in \moves{A}$, the set $\makeset{m' \mid m \vdash_A m'}$ is a linear order; 
and if so, we write $m \vdash_A^i m'$, where $i \geq 1$, to mean $m'$ is the $i$-th child of $m$.}

\bigskip

There is an obvious one-one correspondence between types and finite, ordered trees. 
For example, the type $(((o, o), o), o, (o, o, o), o)$ corresponds to the tree whose maximal (with respect to $\leq$) elements are $1 \sdot 1 \sdot 1$, $2$, $3 \sdot 1$ and $3 \sdot 2$.
We write $\Tree{A}$ to mean the tree that corresponds to the type $A$. 

\begin{definition}[Arena determined by type $A$]
Let $A$ be a type. The \emph{arena determined by $A$}, written $\Ar(A)$, is defined as follows:
\begin{itemize}
\item $\moves{\Ar(A)} := \Tree{A}$

\item $\star \vdash_{\Ar(A)} \epsilon$, and for all $\alpha, \beta \in \moves{\Ar(A)}$, $\alpha \vdash_{\Ar(A)} \beta \iff (\alpha \leq \beta \, \wedge \, |\beta| = |\alpha| + 1)$

\item $\lambda_{\Ar(A)}(\alpha) = \OO \iff |\alpha|$ even.
\end{itemize}
\end{definition}
Plainly $\Ar(A)$ is a finite arena which is ordered, pointed and $\OO$-initial.

%\subsubsection*{A naming scheme for arenas and types}\label{sec:naming}
We fix a scheme for naming nodes of a given tree (and hence moves of an arena determined by a type) using symbols of the following infinite ranked alphabet
\[
\Lam := \Var \cup
\makeset{\blambda^\alpha \mid \alpha \in \natnum^\ast} \cup
\makeset{\blambda x_1^{A_1} \cdots x_{n+1}^{A_{n+1}} \mid x_i^{A_i} \in \Var, n \geq 0}
\]
such that $\arity(x^A) := \arity(A)$, $\arity(\blambda x_1^{A_1} \cdots x_{n+1}^{A_{n+1}}) := 1$ and $\arity(\blambda^\alpha) := 0$. 
By abuse of language, we say that the \emph{lambda} $\blambda \xi_1^{A_1} \cdots \xi_n^{A_n}$ has type $(A_1, \cdots, A_n, o)$, and the \emph{dummy lambda} $\blambda^\alpha$ has type $o$.

\lo{N.B. (i) Dummy lambdas have arity 0 here (but not in $\LamAt$) because they only occur at leaves, but not so in $\LamAt$. (ii) Dummy lambdas are annotated with occurrences here. Hence $\Lam \not\subseteq \LamAt$.}

Given a type $A$ and an injective function, $\nu_A : \makeset{\alpha \in \Tree{A} \mid |\alpha| \hbox{ odd}} \to \Var$, such that 
whenever $\nu_A(\alpha) = x^B$ then $(\Tree{A})_{@\alpha} = \Tree{B}$, we extend $\nu_A$ to a function $\Tree{A} \to \Lam$ as follows. 
Let $\alpha \in \Tree{A}$ be of even length.
Suppose for each $\alpha \sdot i \in \Tree{A}$, we have $\nu_A(\alpha \sdot i) = x_i^{B_i}$, and $n = |\makeset{i \mid \alpha \sdot i \in \Tree{A}}|$, we define
\[
\nu_A(\alpha) :=
\left\{
\begin{array}{ll}
\blambda x_1^{B_1} \cdots x_n^{B_n} & \hbox{if $n > 0$}\\
%^{y^B}\blambda & \hbox{if $n = 0$ \& $\alpha = \alpha' \sdot i$ \& $\nu_A(\alpha') = y^B$} \\
%^{\star}\blambda & \hbox{if $n = 0$ \& $\alpha = \epsilon$} \\
\blambda^\alpha & \hbox{if $n = 0$}
\end{array} 
\right.
\]
%Similarly, given $A = (A_1, \cdots, A_n, o)$ with $n \geq 1$, we define $\nu_A^\bot : \Tree{A} \to \Lam$ as follows.
%Given $\nu_{A_1}, \cdots, \nu_{A_n}$ such that their respective images are pairwise disjoint, define $\nu_A^\bot : \Tree{A} \to \LamAt$ by: $\epsilon \mapsto @^{A_1}$, and $i \sdot \alpha \mapsto \nu_{A_i}(\alpha)$.
It is straightforward to see that for every type $A$, the function $\nu_A : \Tree{A} \to \Lam$ satisfies the following:
\begin{inparaenum}
\item $\nu_A$ is injective,
\item $\nu_A$ defines a $\Lam$-labelled tree,
\item for all $\alpha \in \Tree{A}$, if $\nu_A(\alpha)$ has type $B$ then $(\Tree{A})_{@\alpha} = \Tree{B}$.
\end{inparaenum}
We call such a function $\nu_A$ a \emph{$\Lam$-representation} of $\Tree{A}$ (and for arena $\Ar(A)$, and type $A$).

\begin{example}\rm
Take the type $A \to A$ where $A = (((o, o), o), (o, o), o)$. We display the nodes of $\Tree{A \to A}$ via a naming scheme $\nu_{A \to A} : \Tree{A \to A} \to \Lam$ in Figure~\ref{fig:Lam-rep}. 
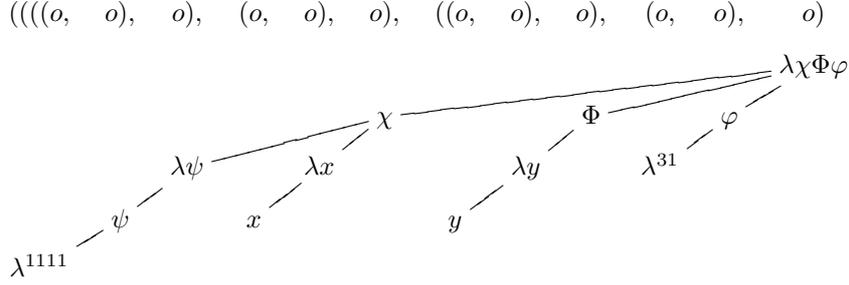
\begin{figure}[ht]
\[
\xymatrix@R-.7cm@C-.6cm{
((((o, & o), & o), & (o, & o), & o), & ((o, & o), & o), & (o, &  o),&  o) \\
 & & & & &   & & & & &   &  \blambda \chi \Phi \phi \\
 & & & & &   \chi \ar@{-}[rrrrrru]& & & \Phi \ar@{-}[rrru]& &  \phi \ar@{-}[ru]&  \\
 & & \blambda \psi \ar@{-}[rrru]& & \blambda x \ar@{-}[ru]& &  & \blambda y \ar@{-}[ru]& & \blambda^{31} \ar@{-}[ru]& &    \\
 & \psi \ar@{-}[ru] & & x \ar@{-}[ru] & & &  y \ar@{-}[ru]& & & & &    \\
\blambda^{1111} \ar@{-}[ru]& & & & &   & & & & &   &  \\
%
%  & & & & & \blambda \Phi \phi\\
%  & & \Phi \ar@{-}[rrru]& & \phi \ar@{-}[ru]& \\ 
%  & \blambda y \ar@{-}[ru] & & \blambda  \ar@{-}[ru]& & \\
%y \ar@{-}[ru]  & & & & &
}
\]
\caption{A $\Lam$-representation of $\Tree{A \to A}$ where $A = (((o, o), o), (o, o), o)$. \label{fig:Lam-rep}}
\end{figure}
\end{example}

\subsubsection*{Product}
For arenas \( A \) and \( B \), we define the \emph{product arean} \( A \times
B \) by:
\begin{itemize}
\item \( \moves{A \times B} := \moves{A} + \moves{B} \),
\item \( \star \vdash_{A \times B} m \iff \star \vdash_{A} m \) or \( \star \vdash_{B} m \),
\item \( m \vdash_{A \times B} m' \iff m \vdash_{A} m' \) or \( m \vdash_{B} m' \),
\item \( \lambda_{A \times B}(m) := \left\{ 
\begin{array}{lr}
  \lambda_A(m) & \quad \hbox{if \( m \in \moves{A} \)}\\
  \lambda_B(m) & \quad \hbox{if \( m \in \moves{B} \)}
\end{array} 
\right. \)
\end{itemize}
Thus $A \times B$ is just the disjoint union of $A$ and $B$ \emph{qua} labelled directed graphs. 
For an indexed set \( \{ A_i \}_{i \in I} \) of arenas, their product \( \prod_{i \in I} A_i \) is defined similarly.

\subsubsection*{Function space}
For arenas \( A \) and \( B \), we define \emph{the function space arena} \( A \To B \) by:
\begin{itemize}
\item \( \moves{A \To B} := \moves{A} \times \Initial_B + \moves{B} \),
\item \( \star \vdash_{A \To B} m \iff \star \vdash_{B} m \),
\item \( m \vdash_{A \To B} m' \iff {} \)
  \begin{itemize}
  \item \( m \vdash_{B} m' \), or
  \item \( \star \vdash_{B} m \) and \( m' = (m'_A, m) \) and \( \star \vdash_{A} m_A' \), or
  \item \(m = (m_A, m_B) \) and \( m' = (m'_A, m_B) \) and \( m_A \vdash_{A} m_A' \)
  \end{itemize}
\item \( \lambda_{A \To B}(m) := 
\left\{ 
\begin{array}{ll}
  \lambda_A^\bot(m_A) & \quad \hbox{if \( m = (m_A, m_B) \in \moves{A} \times \Initial_B\)} \\
  \lambda_B(m) & \quad\hbox{if \( m \in \moves{B} \))}
\end{array} 
\right. \)
\end{itemize}
Observe that if $A$ and $B$ are types, then $\Ar(A \to B) = \Ar(A) \funsp \Ar(B)$.

\begin{definition}[Justified sequence]
A \emph{justified sequence of an arena \( A \)} is a finite sequence of moves, $m_1 \sdot m_2 \sdot \ldots \sdot m_n$, such that for each $j$, if $m_j$ is non-initial then $m_j$ has a pointer to $m_i$ such that $i < j$ and $m_i \vdash_A m_j$.
Formally it is a triple \( s = (\#s, s, \rho_s) \) consisting of a number \( \#s \in \natnum_0 \) (which is the length of the justified sequence), and total functions \( s : [\#s] \to \moves{A} \) (moves function) and \( \rho_s : [\#s] \to [\#s]_0 \) (pointers function) such that
\begin{itemize}
\item \( \rho_s(k) < k \) for every \( k \in [\#s] \), and
\item \( \rho_s \) respects the enabling relation: \( \rho_s(k) = 0 \) implies \( \star \vdash_{A} s(k) \), and 
\( \rho_s(k) \neq 0 \) implies \( s(\rho_s(k)) \vdash_{A} s(k) \).
\end{itemize}
As usual, by abuse of notation, we often write \( m_1 \sdot m_2 \dots m_n \) for a justified sequence such that \( s(i) = m_i \) for every \( i \), leaving the justification pointers implicit.  
Further we use \( m \) and \( m_i \) as meta-variables of \emph{move occurrences} in justified sequences.  
We write \( m_i \curvearrowleft m_j \) if \( \rho_s(j) = i > 0 \) and \( \star \curvearrowleft m_j \) if \( \rho_s(j) = 0 \).  
We call \( m_i \) the \emph{justifier of \( m_j \)}, and say $m_j$ is justified by $m_i$ whenever \( m_i \curvearrowleft m_j \).
% We write \( \curvearrowleft^+ \) for the transitive closure of \( \curvearrowleft \).
% \tk{I found that this will not be used.}
We say that $m_j$ is \emph{hereditarily justified} by $m_i$ if $m_{i} \curvearrowleft^\ast m_{j}$. %where $\curvearrowleft^\ast$ is the reflexive, transitive closure of $\curvearrowleft$.
%\lo{CHECK: Harmonise with hereditarily enabled, wherein transitive closure is meant.}
\changed[lo]{In case $A$ is an ordered arena, we write $m \stackrel{i}{\curvearrowleft} m'$ to mean $m \curvearrowleft m'$ and $m \vdash_A^i m'$.}
\end{definition}

It is convenient to relax the domain \( [\#s] = \{ 1, 2, \dots, \#s \} \) of justified sequences to arbitrary linearly-ordered finite sets such as a subset of \( [\#s] \).  
For example, given a justified sequence \( (\#s,\, s: [\#s] \to \moves{A},\, \rho_s : [\#s] \to [\#s]_0) \), consider a subset \( I \subseteq [\#s] \) that respects the justification pointers, 
i.e.,~\( k \in I \) implies \( \rho_s(k) \in I \cup \{ 0 \} \).  
Then the restriction \( (I, s{\upharpoonright_I} : I \to \moves{A}, \rho_s{\upharpoonright_I} : I \to \{ 0 \} \cup I) \) is a justified sequence in the relaxed sense.  
%Through the unique monotone bijection \( \alpha : I \to [n] \), we identify the restriction with the justified sequence in the strict sense. 
A justified sequence in the relaxed sense is identified with that in the strict sense through the unique monotone bijection \( \alpha : I \to [n] \).
%\lo{CHECK: Is this paragraph necessary?}

A justified sequence is \emph{alternating} just if \( s(k) \in \Omoves{A} \iff k \) is odd. %(so \( s(k) \in \Pmoves{A} \iff k \) is even).
Henceforth we assume that justified sequences are alternating.
%\tk{It might be better to note here that: In particular, alternation forces a play to start from an \emph{initial O-move}, i.e.~an initial \( B \)-move.  (So initial \( A \)-move cannot appear first by this condition, not the justification axiom.)}

\begin{definition}[P-View / O-view]
Let \( m_1 \dots m_n \) be a justified sequence over an arena \( A \).  
Its \emph{P-view} \( \pview{m_1 \dots m_n} \) %(or simply \emph{view})
is a subsequence defined inductively by:
\begin{align*}\small
  \pview{m_1 \dots m_n} &:= \pview{m_1 \dots m_{n-1}}\,m_n  \hspace{28pt}\textrm{(if \( m_n \in \Pmoves{A} \))} \\
  \pview{m_1 \dots m_n} &:= m_n                           \hspace{72pt}\textrm{(if \( \star \curvearrowleft m_{n} \in \Omoves{A} \))} \\
  \pview{m_1 \dots m_n} &:= \pview{m_1 \dots m_k} \, m_n   \quad\textrm{(if \( m_k \curvearrowleft m_n \in \Omoves{A} \)).}
\end{align*}
Its \emph{O-view} \( \oview{m_1 \dots m_n} \) %(or simply \emph{view})
is a subsequence defined inductively by:
\begin{align*}\small
  \oview{m_1 \dots m_n} &:= \oview{m_1 \dots m_{n-1}}\,m_n  \hspace{28pt}\textrm{(if \( m_n \in \Omoves{A} \))} \\
  \oview{m_1 \dots m_n} &:= m_n                           \hspace{72pt}\textrm{(if \( \star \curvearrowleft m_{n} \in \Pmoves{A} \))} \\
  \oview{m_1 \dots m_n} &:= \oview{m_1 \dots m_k} \, m_n   \quad\textrm{(if \( m_k \curvearrowleft m_n \in \Pmoves{A} \)).}
\end{align*}
Formally the P-view of a justified sequence \( s \) is a subset \( I \subseteq [\#s] \).  
Then \( \pview{s} \) is the restriction of \( s \) to \( I \); similarly for $\oview{s}$.
%A P-view is, in general, not a justified sequence since the justifier of a move may have been removed.
Henceforth by a \emph{P-view}, we mean a justified sequence $s$ such that $\pview{s} = s$.

A P-move \( m_k \) in the sequence \( m_1 \ldots m_n \) (\( n \ge k \))
%\lochanged{$m_1 \ldots m_k$}
is \emph{P-visible} just if \( \star \curvearrowleft m_k \) or its justifier is in \( \pview{m_1 \dots m_k} \).
Similarly, an O-move \( m_k \) in the sequence \( m_1 \ldots m_n \) (\( n \ge k \))
%\lochanged{$m_1 \ldots m_k$}
is \emph{O-visible} just if \( \star \curvearrowleft m_k \) or its justifier is in \( \oview{m_1 \dots m_k} \).
A justified sequence \( s \) is \emph{P-visible} (respectively, \emph{O-visible}) just if each P-move (respectively, O-move) occurrence in \( s \) is P-visible (respectively, P-visible); $s$ is \emph{visible} just if it is both P- and O-visible.
If $s$ is a visible justified sequence, then so are $\pview{s}$ and $\oview{s}$ \citep{HO00}.
\end{definition}

%\lo{Need to define O-view and O-visibility.}

%%%
% A \emph{justified sequence} $s$ over an arena $A$ is a (possibly infinite) sequence of elements from $\moves{A}$ such that each non-inital element is equipped with a pointer to an earlier move, satisfying:
% \begin{itemize}
% \item \emph{O/P Alternation}. Consecutive occurrences have distinct O/P labels.
% \item \emph{Justification}. If a move-occurrence $m$ is non-initial, and the move-occurrence it points to is $m'$ then $m' \vdash_A m$.
% \end{itemize}

%\subsection{Bijections between sets of justified sequences}
\begin{definition}[Direct arena morphism]
A \emph{direct arena morphism}, $\calI : \roundbra{\moves{A}, \vdash_A, \lambda_A} \to \roundbra{\moves{B}, \vdash_B, \lambda_B}$, is a function $\calI : \moves{A} \to \moves{B}$ that respects: 
\begin{itemize}
\item enabling relation: for all $m, m' \in \moves{A}$, $\star \vdash_A m \iff \star \vdash_B \calI(m)$, and $m \vdash_A m' \iff \calI(m) \vdash_B \calI(m')$;
\item ownership: for all $m \in \moves{A}$, $\lambda_A(m) = \lambda_B(\calI(m))$.
\end{itemize}
\end{definition}
%We say that $\calI$ is a \emph{direct isomorphism} if $\calI : \moves{A} \to \moves{B}$ is a bijection.

A direct arena morphism $\calI : A \to B$ induces a function on justified sequences by extension in the obvious way: 
take a justified sequence $t$ over $A$, define $
\calI^\ast(t) := (\#t, t', \rho_t)$ where $t'(i) := \calI(t(i))$ for $i \in [\#t]$.
It is straightforward to see that $\calI^\ast(t)$ is a justified sequence over $B$.
In fact, $t$ and $\calI^\ast(t)$ are \emph{isomorphic as justified sequences}, by which we mean that they are isomorphic directed graphs (viewing a justified sequence as a linear tree with justification pointers represented as back edges). I.e.~writing $\calI^\ast (t) = (\#{u}, u, \rho_{u})$, we have $\#t = \#{u}$, and $\rho_t = \rho_{u}$. 
(We do \emph{not}, however, require the map $\iota$ from the image of $t$ to the image of $u$, whereby $u(i) = \iota(t(i))$ for all $i \in [\#t]$, to be bijective.)
Thus it follows that $t$ is visible if, and only if, $\calI^\ast(t)$ is. 

\begin{definition}[Strong bijection induced by direct arena morphism]
%\begin{asparaenum}
%\item 
Let $\calL$ and $\calL'$ be sets of justified sequences over arenas $A$ and $A'$ respectively.
Given a direct arena morphism $\calI : A \to A'$, we say that the map $\calI^\ast : \calL \to \calL'$ is a \emph{strong bijection (induced by $\calI$)} if $\calI^\ast$ restricted to $\calL$, $\calI^\ast \restriction \calL$, is injective, and the image of $\calI^\ast \restriction \calL$ is $\calL'$.
The adjective strong emphasises that, in addition to the bijectivity of $\calI^\ast$, for every $s \in \calL$, we have $s$ and $\calI^\ast(s)$ are isomorphic as justified sequences. 
%\changed[lo]{(We do \emph{not}, however, require a bijection $\iota : \im(s) \to \im(t)$ such that $t(i) = \iota(s(i))$ for all $i \in [\#s]$.)}

\lo{N.B. It is possible for a (surjective but) non-bijective direct arena morphism to induce a bijective $\calI^\ast : \calL \to \calL'$ between sets of justified sequences. To show injectivity of $\calI^\ast$, we need to show: if $\calI^\ast(t) = p$ then for each $n$ such that $p \sdot n \in \calL'$, there is a unique $m$ such that $\calI^\ast(t) \sdot \calI(m) = p \sdot n$.}

%\item Similarly, given direct morphisms of arenas $\calI_i : A_i \to A_i'$ for $i = 1, 2, 3$, and let $\calL$ and $\calL'$ be sets of interaction sequences over triples $(A_1, A_2, A_3)$ and $(A_1', A_2', A_3')$ respectively, we say that the map $(\calI_1^\star, \calI_2^\star, \calI_3^\star) : \calL \to \calL'$, defined by the obvious extension to justified sequences, is a \emph{bijection (induced by $(\calI_1, \calI_2, \calI_3)$)} just if $(\calI_1^\star, \calI_2^\star, \calI_3^\star)$ is a bijection.
%\lo{CHECK: Is ``strong'' necessary?}
%\end{asparaenum}

Henceforth, whenever it is clear from the context, we write $\calI^\ast$ simply as $\calI$.
\end{definition}

\subsection{Game semantics of the lambda calculus}

A \emph{play} over an arena $A$ is a visible justified sequence.
A \emph{P-strategy} over arena $A$, or just \emph{strategy} for short, is a non-empty prefix-closed set $\sigma$ of plays over $A$ satisfying:
\begin{itemize}
\item \emph{Determinacy}. For every odd-length $s \in \sigma$, if $s \sdot a, s \sdot b \in \sigma$ then $a = b$.
\item For every even-length $s \in \sigma$, for every O-move $a$, if $s \sdot a$ is a play, then it is in $\sigma$.
\end{itemize}

We say that $\sigma$ is \emph{total} if for every odd-length $s \in \sigma$, there exists $a$ such that $s \sdot a \in \sigma$.
We say that $\sigma$ is \emph{innocent} if for every even-length $s \, a \in \sigma$ and for every odd-length $t \in \sigma$ if $\pview{s} = \pview{t}$ then $t \sdot a \in \sigma$.
It follows from the definition that an innocent strategy $\sigma$ is determined by the set of even-length P-views in $\sigma$, written $\pview{\sigma}$;
we say that $\sigma$ is \emph{compact} if $\pview{\sigma}$ is a finite set.

%\subsection{A cartesian closed category ${\mathbb A}_\Sigma$}

We can now organise arenas and innocent strategies into a category ${\mathbb I}$: 
objects are O-initial arenas; \lo{We don't seem to need O-initiality.}
and maps $\sigma : A \longrightarrow B$ are innocent strategies over arena $A \funsp B$.

\begin{theorem}
The category ${\mathbb I}$ is cartesian closed and enriched over CPOs.
\end{theorem}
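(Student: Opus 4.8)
The plan is to follow the standard Hyland--Ong development \citep{HO00}, checking each piece of structure against the concrete presentation of innocent strategies via their sets of even-length P-views, $\pview{\sigma}$. First I would establish that ${\mathbb I}$ is a category. The identity $\mathrm{id}_A : A \to A$ is the copy-cat strategy over $A \funsp A$, which answers each O-move by the P-move at the corresponding position in the other copy of $A$; one checks directly that it is a well-defined, visible, innocent strategy. For composition, given $\sigma : A \to B$ and $\tau : B \to C$, I would form \emph{interaction sequences} over the three arenas $A$, $B$, $C$ --- justified sequences whose restriction to $A,B$ lies in $\sigma$ and whose restriction to $B,C$ lies in $\tau$ --- and define $\sigma \compose \tau : A \to C$ as the set of such sequences with all $B$-moves hidden (restricted to $A,C$).

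The two substantive lemmas here are: (i) that hiding and ``zipping'' preserve visibility, determinacy and innocence, so that $\sigma \compose \tau$ is again an innocent strategy; and (ii) that composition is associative. Associativity is proved, as usual, by showing that interaction sequences over four arenas restrict consistently to any adjacent pair, and the identity laws follow from the copy-cat-is-identity (``zig-zag'') argument. I expect this to be the crux of the proof: the well-definedness and innocence of composition, together with associativity, carry essentially all the technical weight.

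Next I would verify the cartesian closed structure. The terminal object is the empty arena, with the unique map being the strategy consisting only of the empty play. Products are given by the product arena $A \times B$ of the paper, with projections $\pi_1, \pi_2$ again copy-cat-like strategies and pairing $\langle \sigma, \tau \rangle$ obtained from the (relabelled) plays of $\sigma$ and $\tau$; the universal property is a routine check on P-views. For closure, the exponential is given by the function-space arena $B \funsp C$, and the required natural bijection ${\mathbb I}(A \times B, C) \cong {\mathbb I}(A, B \funsp C)$ follows from the canonical arena isomorphism $(A \times B) \funsp C \cong A \funsp (B \funsp C)$: since a direct arena isomorphism induces a strong bijection on justified sequences preserving visibility, determinacy and innocence, it restricts to a bijection between the corresponding strategies, and naturality is immediate from the definition of composition.

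Finally, for CPO-enrichment I would order each hom-set by inclusion, $\sigma \sqsubseteq \tau$ iff $\pview{\sigma} \subseteq \pview{\tau}$, with least element the strategy whose only even-length P-view is the empty sequence. Directed (indeed $\omega$-chain) completeness holds because the union of a directed family of innocent strategies is again prefix-closed, deterministic (by directedness) and innocent (innocence being a closure condition stable under unions), so suprema are computed as unions of P-view sets. It then remains to check that composition is continuous in each argument: since $\sigma \compose \tau$ is built from finite interaction sequences by hiding, each P-view of the composite is witnessed by finitely much of $\sigma$ and $\tau$, and unions commute with the interaction-and-hide construction, giving $\bigsqcup_i (\sigma_i \compose \tau) = (\bigsqcup_i \sigma_i) \compose \tau$ and symmetrically. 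Once the composition lemmas of the second paragraph are in hand, this CPO structure reduces to elementary facts about directed unions of P-view sets.
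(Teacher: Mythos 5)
Your proposal is correct and takes essentially the same route as the paper, which states this theorem without proof as a standard result of Hyland--Ong game semantics (the surrounding section, like the Definability theorem that follows it, simply defers to \citep{HO00}); your sketch --- copy-cat identities, composition by interaction-and-hiding with the associated visibility/innocence and associativity lemmas, products and exponentials read off from the product and function-space arenas via the isomorphism $(A \times B) \funsp C \cong A \funsp (B \funsp C)$, and hom-sets ordered by inclusion with directed unions and continuity of composition --- is precisely the development in that reference. Nothing in it deviates from the machinery the paper assumes.
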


Given a typing context $\Gamma = x_1 : A_1, \cdots, x_n : A_n$, we define $\mng{\Gamma} := ((\mng{A_1} \times \mng{A_2}) \cdots \mng{A_{n-1}}) \times \mng{A_n}$, with the empty context interpreted as the terminal object. 
The interpretation of a given term-in-context $\seq{\Gamma}{M : A}$ as a ${\mathbb I}$-map $\mng{\Gamma} \longrightarrow \mng{A}$ is standard, and we omit the definition. 

% \begin{lemma}\label{lem:canequiv}
% Let $\seq{\Gamma}{M : o}$ be a $\ttheory$ term-in-context. Then $\mng{\seq{\Gamma}{M : o}} = \mng{\seq{\Gamma}{\lform{M} : o}}$.
% \end{lemma}

% \begin{proof}
% Since the $(\eta)$-law is valid in the categorical interpretation, it remains to consider the case of the rule $(@)$.  Let $A = (A_1, \cdots, A_n, o)$ and suppose the terms-in-contexts $\seq{\Gamma,
% \xi_{1} : A_{1}, \cdots, \xi_n : A_n}{M : o}$, and $\seq{\Gamma, \eta_{i1} : A_{i 1}, \cdots, \eta_{i r_i} : A_{i r_i}}{N_i : o}$ (with $i$ ranging from $1$ to $n$) are interpreted as maps $\xymatrix{C \ar[r]^d & A}$, and $\xymatrix{C \ar[r]^{e_i} & A_i}$ respectively. 
% We interpret $ \seq{\Gamma}{@ \,(\lterm{\overline \xi}{M}) \,
% (\lterm{\overline{\eta_1}}{N_1}) \cdots
% (\lterm{\overline{\eta_n}}{N_n}) : o}$ as the composite
% \[\xymatrix{C \ar[r] ^-{\roundbra{d, \overline e}} & 
% {A \times A_1 \times \cdots \times A_n} \ar[r] ^-{\mathit{ev}} & o}\]
% where $\mathit{ev}$ is the evident copycat strategy.
% \end{proof}

\begin{theorem}[Definability]
Let $A = (A_1, \cdots, A_n, o)$ be a type. For every compact, total, innocent strategy $\sigma$ over $\mng{A}$, there is a unique $\eta$-long $\beta$-normal form, $\seq{x_1 : A_1, \cdots, x_n : A_n}{M_\sigma : o}$, whose denotation is $\sigma$.
\end{theorem}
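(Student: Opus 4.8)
The plan is to prove definability by induction on the size of $\sigma$, reading the $\eta$-long $\beta$-normal form off its P-views directly. Since $\sigma$ is innocent it is determined by the set $\pview{\sigma}$, and compactness makes this set finite, so I would take $|\pview{\sigma}|$ (equivalently, the total number of move-occurrences across all its P-views) as the induction measure. The arena $\mng{A}$, namely $\Ar(A)$, is pointed and $\OO$-initial, with unique initial move the root $\epsilon$ (labelled $o$). Because the target type is $o$, the only opening play is the single O-move $\epsilon$, and by totality and determinacy $\sigma$ contains a unique play $\epsilon \sdot p_0$. The move $p_0$ is a P-move, hence non-initial, so its justifier must be an already-played O-move; the only candidate is $\epsilon$, whence $p_0$ is a child $i$ of $\epsilon$ in $\Ar(A)$. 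Under the cartesian-closed isomorphism $\mng{A} \cong \mng{\Gamma} \funsp \mng{o}$ with $\Gamma = x_1 : A_1, \ldots, x_n : A_n$, this names the head variable $x_i$ of the normal form; writing $A_i = (B_1, \ldots, B_k, o)$ with $k = \arity(A_i)$, the term will have shape $M_\sigma = x_i\, P_1 \cdots P_k$. If $k = 0$ (so $A_i = o$) this is the base case $M_\sigma = x_i$.

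For the inductive step I would decompose $\sigma$ argument by argument. For each $j \in [k]$ the move $i \sdot j$ is an O-move enabled by $p_0$, and after $\epsilon \sdot p_0$ the P-view is exactly $\epsilon \sdot i \sdot (i \sdot j)$, so $\epsilon$ remains visible; consequently the residual game opened by O playing $i \sdot j$ is the game for the sequent $\seq{\Gamma, \vec{z}_j}{Q_j : o}$, where $B_j = (C_1, \ldots, C_\ell, o)$ and $\vec{z}_j = z_1 : C_1, \ldots, z_\ell : C_\ell$ are the fresh variables introduced by fully $\eta$-expanding the $j$-th argument, so that $P_j = \lterm{\vec{z}_j}{Q_j}$. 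Crucially, this $\eta$-expansion guarantees that each residual sequent again has ground target type $o$, so totality continues to apply at every recursive stage. I would define the $j$-th residual strategy $\sigma_j$ to be the set of plays of $\sigma$ beginning $\epsilon \sdot p_0 \sdot (i \sdot j)$, with the opening pair $\epsilon \sdot p_0$ deleted and $i \sdot j$ reinterpreted as the root of the arena for this smaller sequent (the context part rooted at $\epsilon$ is retained, since a head of $Q_j$ may be a variable of $\Gamma$). Each P-view of $\sigma_j$ is a proper residual of a P-view of $\sigma$, and the two minimal P-views (the empty one and $\epsilon \sdot i$) contribute to no $\sigma_j$, so $|\pview{\sigma_j}| < |\pview{\sigma}|$ and the induction hypothesis yields a unique $\eta$-long $\beta$-normal $Q_j$ with $\mng{Q_j} = \sigma_j$; setting $P_j = \lterm{\vec{z}_j}{Q_j}$ and $M_\sigma = x_i\, P_1 \cdots P_k$ gives the candidate term, which is $\beta$-normal (head variable applied to arguments) and $\eta$-long (every argument is fully abstracted).

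It then remains to check correctness and uniqueness. For correctness I would show $\mng{M_\sigma} = \sigma$ by computing $\mng{x_i\,(\lterm{\vec{z}_1}{Q_1}) \cdots (\lterm{\vec{z}_k}{Q_k})}$ compositionally --- as the pairing/application combination in ${\mathbb I}$ of the $\mng{Q_j} = \sigma_j$ --- and matching it against the reassembly of $\sigma$ from the head $i$ and the family $\{\sigma_j\}_{j \in [k]}$ prescribed by the decomposition. For uniqueness I would observe that every choice in the read-off (the head $i$, and recursively each $\sigma_j$, hence each $Q_j$) is forced by $\sigma$, so any two normal forms denoting $\sigma$ must agree at the head and on every residual, and therefore coincide.

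The main obstacle is the decomposition lemma underlying the inductive step: making precise the retargeting of the residual plays onto the argument arenas and proving, in both directions, that compactness, totality and innocence are preserved, and that reassembly is genuinely inverse to decomposition. This is exactly the point where the factorisation of $\Ar(A)$ through $\mng{\Gamma} \funsp \mng{o}$ must be matched with the application and abstraction combinators of the cartesian-closed category ${\mathbb I}$, and where the visibility condition built into the notion of play guarantees that the $\sigma_j$ are themselves legal strategies over the smaller arenas rather than arbitrary sets of justified sequences.
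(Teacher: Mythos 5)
Your proposal is correct and takes essentially the same route as the proof the paper relies on: the paper itself gives no argument beyond citing \citep{HO00}, and your head-variable read-off from the unique response to the opening move, the residual-strategy decomposition argument by argument (with $\eta$-expansion keeping every residual sequent at ground type so totality persists), and the induction on the number of P-views is exactly the standard Hyland--Ong definability argument given there. The decomposition/reassembly lemma you single out as the main obstacle --- rerooting the residual plays onto the arena of $\seq{\Gamma, \vec{z}_j}{Q_j : o}$ and checking that innocence, totality and compactness are preserved and that the construction inverts the interpretation of $x_i\,P_1 \cdots P_k$ --- is indeed where all the technical content sits in the cited proof, and your outline of it is the right one.
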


\begin{proof}
See \citep{HO00}.
\end{proof}
%In the proof of the theorem, we shall also introduce a systematic way to name moves of an arena by $\Lambda$-symbols.

\section{Traversals over long forms}

\subsection{Long form of a lambda term}
The \emph{long form} of a term is a kind of canonical form, which is obtained by first constructing the $\eta$-long form of the term, 
and then replacing the standard binary application operators by full application operators $@^{\widehat{B}}$, which we called \emph{long application}.
The latter is achieved by replacing every subterm of the $\eta$-long form that has the form $(\lterm{x}{P})\, Q_1 \cdots Q_m : o$ by $@^{\widehat{B}} \,(\lterm{x}{P})\, Q_1 \cdots Q_m : o$ where $m \geq 1$. %and $\Xi = (B_1, \cdots, B_m, o)$.
The type of the function symbol $@^{\widehat{B}}$ is $\widehat{B}$, which is a shorthand for $((B_1, \cdots, B_m, o), B_1, \cdots, B_m, o)$ where $B = (B_1, \cdots, B_m, o)$. 

\begin{definition}[Long form in concrete syntax]
Assume $\seq{\Gamma}{M : (A_1, \cdots, A_n, o)}$. The concrete syntax of the \emph{long form} of $M$, written $\lform{M}$, is defined by cases as follows.
\begin{enumerate}
\item $M$ is an application headed by a variable:
\[
\lform{x \, N_1 \cdots N_m} := \lterm{z_1^{A_1} \cdots z_n^{A_n}}{x \, \lform{N_1} \cdots \lform{N_m} \, \lform{z_1} \cdots \lform{z_n}}
\]
where $z_i \not\in \makeset{x} \cup \bigcup_{j=1}^m\FV\lform{N_j}$ for each $i$.
\item $M$ is an application headed by an abstraction with $m \geq 1$:
\[
\lform{ (\lterm{x}{P}) \, Q_1 \cdots Q_m } := 
\lterm{z_1^{A_1} \cdots z_n^{A_n}}{@^{\widehat{\Xi}} \, \lform{\lterm{x}{P}} \, \lform{Q_1} \cdots \lform{Q_m}   \, \lform{z_1} \cdots \lform{z_n}}
\]
where $\Xi = (B_1, \cdots, B_m, A_1, \cdots, A_n, o)$ which is the type of $\lterm{x}{P}$, and $z_i \not\in \FV\lform{\lterm{x}{P}} \cup \bigcup_{j=1}^m\FV\lform{Q_j}$ for each $i$.
\item $M$ is an abstraction:
\(
\lform{\lterm{x_1^{A_1} \cdots x_i^{A_i}}{P} } := 
\lterm{x_1^{A_1} \cdots x_i^{A_i}}{\lform{P}}
\).
\end{enumerate}
We shall elide the type superscript from variables $x^A$ and long application symbols $@^{\widehat{\Xi}}$, whenever it is clear form the context. 
We assume that bound variables in $\lform{M}$ are renamed afresh where necessary, so that if $\lterm{\overline x}{P}$ and $\lterm{\overline y}{Q}$ are distinct subterms (i.e.~they have different occurrences) then $\makeset{x_1, \cdots, x_m}$ and $\makeset{y_1, \cdots, y_n}$ are disjoint.
It is easy to verify that $\seq{\Gamma}{\lform{M} : A}$.
\end{definition}

If $M$ is $\beta$-normal, and so $\lform{M}$ has no occurrences of $@$, then $\lform{M}$ is essentially the $\eta$-long $\beta$-normal form of $M$.
Note that in $\lform{M}$ we additionally $\eta$-expand every ground-type subterm $P$ of $M$ to $\lterm{}{P}$ (we call $\blambda$ a ``dummy lambda'') provided $P$ occurs at an \emph{operand} position (meaning that $L \, P$ is a subterm of $M$ for some $L$). 
By a \emph{long form}, we mean the long form of a term.

\begin{example}\label{eg:longform1} \rm Consider the term-in-context $\seq{\phi : (((o, o), o, o), o, o, o), a : o}{M : (o, o)}$ where $M = \phi \, (\lterm{x^{(o, o)}}{x}) \, ((\lterm{y^o}{y}) \, a)$. We have
\[\lform{M} = \lterm{z_1}{\phi \, 
\big(
\lterm{x^{(o, o)} z^o}{x \, (\lterm{}{z})}
\big) 
\, 
\big(
\lterm{}{@ \, (\lterm{y^o}{y}) \, (\lterm{}{a})}
\big) 
\, 
\big(
\lterm{}{ z_1 }
\big)
} \; : \; (o, o).
\] 
\end{example}

We organise the abstract syntax tree (AST) of a long form, somewhat non-standardly, as a $\LamAt$-labelled (binding) tree where 
\[
\LamAt \; := \; \underbrace{\makeset{\blambda{\overline x} \mid \overline x = x_1 \cdots x_n \in \Var^\star}}_{\hbox{\emph{lambdas}}} \; \cup \; \underbrace{\Var \cup \makeset{@^{\widehat{A}} \mid A \in \mathit{Types}, \arity(A) > 0}}_{\hbox{\emph{non-lambdas}}}
\]
is a ranked alphabet such that $\arity(x^A) := \arity(A)$; %$\arity(\blambda x_1^{A_1} \cdots x_{n+1}^{A_{n+1}}) := 1$ and $\arity(\blambda) := 0$; 
$\arity(\blambda \overline x) = 1$,
and $\arity(@^{\widehat{A}}) := \arity(A) + 1$.
%By abuse of language, we say that the lambda $\blambda \xi_1^{A_1} \cdots \xi_n^{A_n}$ has type $(A_1, \cdots, A_n, o)$.
%It follows that dummy lambdas have type $o$.
By construction, in the AST of a long form, nodes on levels 0, 2, 4, etc., are labelled by 
%expressions of the form $\blambda \overline x$ called 
\emph{lambdas}, 
and nodes on levels 1, 3, 5, etc., are labelled by \emph{non-lambdas}.

\begin{definition}[$\Sigma$-labelled binding tree]\label{def:bindingtree}
\begin{enumerate}
\item A \emph{$\lambda$-alphabet} is a ranked alphabet $\Sigma$ which is  partitioned into $\Sigma_\lambda, \Sigma_\Var$ and $\Sigma_{\mathit{aux}}$, such that $\Sigma_\lambda$ consists of binders which have arity 1, $\Sigma_\Var$ consists of variables (whether bound or not), and $\Sigma_{\mathit{aux}}$ consists of the remaining auxiliary symbols.
\item A \emph{$\Sigma$-labelled binding tree} is a triple $(T, B, \ell)$ where $\Sigma$ is a $\lambda$-alphabet, $\ell : T \to \Sigma$ is a $\Sigma$-labelled tree, and $B : T \pto T$ is a partial function called \emph{binder}, satisfying: for all $\beta \in T$
\begin{description}
\item[(Bind)] $\ell(\beta) \in \Sigma_\Var \iff \beta \in \dom(B)$; and if $\beta \in \dom(B)$ then $B(\beta) < \beta$ and $\ell(\beta) \in \Sigma_\lambda$. 
By convention, if $\ell(\beta)$ is a free variable then $B(\beta) = \epsilon$.
\item[(Label)] The labelling function $\ell$ maps elements in $T$ of even lengths (including 0) into $\Sigma_\lambda$, and elements in $T$ of odd lengths into $\Sigma_\Var \cup \Sigma_\mathit{aux}$.
\end{description}
\end{enumerate}
\end{definition}

\begin{remark}
Our definition of binding tree %(in Definition~\ref{def:bindingtree}) 
is slightly more permissive than the original \citep{Stirling09}:
unlike Stirling, 
we do not assume that terms are closed.
\end{remark}

% Formally we represent (the AST of) a long form as a \emph{$\Siglform$-labelled binding tree}, which is a triple $(T, B, \ell)$ such that $\ell : T \to \Siglform$ is a $\Siglform$-labelled tree, and $B : T \pto T$ is the \emph{binder function}, satisfying:
% \begin{enumerate}[(i)]
% \item $\beta \in \dom(B)$ if, and only if, $\ell(\beta)$ is a bound variable
% \item If $\ell(\beta) = x_i$ and $\beta \in \dom(B)$ then $\ell(\alpha)$ has the form $\blambda x_1 \cdots x_n$ with $1 \leq i \leq n$ which binds $x_i$ if, and only if, $B(\beta) = \alpha$; it follows that $\alpha < \beta$.
% \end{enumerate}

Observe that $\LamAt$ is a $\lambda$-alphabet: $\LamAt_\lambda$ consists of the lambdas, $\LamAt_\Var$ consists of the variables, and $\LamAt_\mathit{aux}$ consists of the long application symbols. 
The AST of a long form is a $\LamAt$-labelled binding tree.

\begin{example}\label{eg:longform2} \rm 
Take $M = \phi \, (\lterm{x^{(o, o)}}{x}) \, ((\lterm{y^o}{y}) \, a)$ of Example~\ref{eg:longform1}.
The abstract syntax tree of the long form 
\[\lform{M} = \lterm{z_1}{\phi \, 
\big(
\lterm{x^{(o, o)} z^o}{x \, (\lterm{}{z})}
\big) 
\, 
\big(
\lterm{}{@ \, (\lterm{y^o}{y}) \, (\lterm{}{a})}
\big) 
\, 
\big(
\lterm{}{ z_1 }
\big)
} \; : \; (o, o).
\]
is displayed in Figure~\ref{fig:lform-eg1}.
Let $(T, B, \ell)$ be the $\LamAt$-labelled binding tree representation of $\lform{M}$.
The binder function $B$ is indicated by the dotted arrows in the figure i.e.~\(\xymatrix@C=.7cm{m & m' \ar@{.>}@/_.5pc/[l]}\) means $B(m') = m$.
By convention (nodes that are labelled with) free variables are mapped by $B$ to the root node: thus $B: 1 \mapsto \epsilon, 12121 \mapsto \epsilon$.
\end{example}

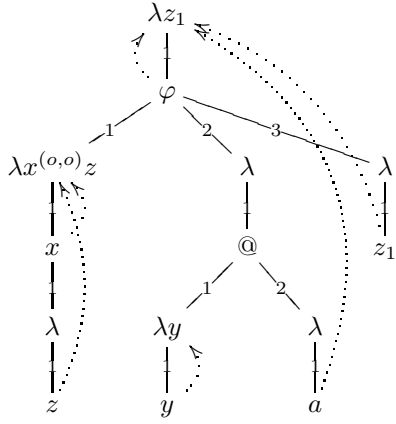
\begin{figure}[t]
\[\xymatrix@R-.3cm@C-.4cm{
 & \blambda z_1 \ar@{-}[d]|1  & & & \\ 
 & \phi \ar@{.>}@/^1pc/[u] \ar@{-}[ld]|1 \ar@{-}[rd]|2 \ar@{-}[rrrd]|3& & & \\
\blambda x^{(o, o)}z \ar@{-}[d]|1 
%\ar@{.>}@/_1pc/[ru]
& & \blambda \ar@{-}[d]|1 
%\ar@{.>}@/_1pc/[lu]
& & \blambda \ar@{-}[d]|1 
%\ar@{.>}@/_1pc/[lllu]
\\
x \ar@{-}[d]|1 \ar@{.>}@/_1pc/[u]& & @ \ar@{-}[ld]|1 \ar@{-}[rd]|2 & & z_1 \ar@{.>}@/_1.5pc/[llluuu]\\
\blambda \ar@{-}[d]|1 
%\ar@{.>}@/_/[u]
& \blambda y \ar@{-}[d]|1 
%\ar@{.>}@/_1pc/[ru]
& & \blambda \ar@{-}[d]|1 
%\ar@{.>}@/_1pc/[lu]
& \\
z \ar@{.>}@/_1pc/[uuu]& y \ar@{.>}@/_1pc/[u]& & a \ar@{.>}@/_3pc/[lluuuuu] &}
\]
\caption{The long form of the term $M$ in Example~\ref{eg:longform1}\label{fig:lform-eg1}}
\end{figure}

\begin{lemma}\label{lem:bindingtree-char}
A finite $\LamAt$-labelled binding tree $(T, B, \ell)$ is the long form of a term if, and only if, it satisfies the following labelling axioms:
\begin{description}
%\item[(Label)] Nodes on levels 0, 2, 4, etc., are labelled by lambdas; those on levels 1, 3, 5, etc., are labelled by non-lambda.

\item[(Lam)] If $\ell(\beta) = y$ and $\ell(B(\beta)) = \blambda \overline x$ then $y \in \overline x$ (i.e.~$y$ is bound in the term) or $B(\beta) = \epsilon$ ($y$ is a free variable).

\item[(Leaf)] A node $\alpha$ is maximal in $T$ if, and only if, $\ell(\alpha)$ is a ground-type variable.

\item[(TVar)] A node labelled by $@^{\widehat{A}}$ where $A = (A_1, \cdots, A_n, o)$ has $n+1$ children with lambda labels of types $A, A_1, \cdots, A_n$ respectively (see left of Figure~\ref{fig:TVarLA}).
% %$\blambda \overline \xi : A, \blambda \overline{\eta_1} : A_1, \cdots, \blambda \overline{\eta_n} : A_n$ respectively:
% \[
% \xymatrix@R-.3cm@C-.1cm{ & & @^A
%  \ar@{-}[dll]|0 \ar@{-}[dl]|1 \ar@{-}[dr]|n & \\ \blambda \xi_1^{A_1} \cdots
%  \xi_n^{A_n} & \blambda \overline{\eta_1} & \cdots & \blambda
%  \overline{\eta_n} }
% \] 

\item[(T@)] A node labelled by a {variable} $\phi : (A_1, \cdots, A_n, o)$ has $n$ children with lambda labels of types $A_1, \cdots, A_n$ respectively  (see right of Figure~\ref{fig:TVarLA})
% %\blambda \overline{\eta_1} : A_1, \cdots, \blambda \overline{\eta_n} : A_n$ respectively:
% \[
% \xymatrix@R-.3cm@C-.3cm{ & \phi \ar@{-}[dl]|1 \ar@{-}[dr]|n & \\ \blambda \overline{\eta_1} & \cdots & \blambda \overline{\eta_n} }
% \] 
\end{description}
\end{lemma}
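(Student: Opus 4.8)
The statement is an \emph{if and only if}, so I would prove the two implications separately, using the concrete-syntax definition of $\lform{M}$ as the bridge between terms and trees.

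For the forward direction ($\Rightarrow$) I would argue by structural induction on $M$, following the three clauses of the concrete-syntax definition of the long form, and verify that $\lform{M}$ satisfies all four axioms assuming the long forms of the immediate subterms already do. In clause~1 (head a variable) the head $x : (B_1,\ldots,B_m,A_1,\ldots,A_n,o)$ is applied to exactly $m+n$ long-form operands, which is precisely $\arity$ of its type, so the variable-node axiom (T@) holds, its children being lambdas of the advertised types; the outer binder $\blambda z_1 \cdots z_n$ binds the trailing $\eta$-expansion arguments, which together with the fresh-naming convention yields (Lam). Clause~2 is the same except the head is $@^{\widehat{\Xi}}$ with $\arity(\Xi)+1$ children, giving (TVar). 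Clause~3 merely peels lambdas and recurses. In every clause, full $\eta$-expansion forces an applicative node of ground type to receive no operands and hence to be a leaf, while every non-ground operand is expanded down to a ground variable, which gives (Leaf); the (Label) and (Bind) conditions of Definition~\ref{def:bindingtree} follow at once from the lambda/non-lambda alternation built into the clauses.

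For the backward direction ($\Leftarrow$), given a tree $(T,B,\ell)$ satisfying the four axioms, I would define, by recursion on $T$, a read-back term $M$: each lambda node $\blambda\overline x$ becomes an abstraction $\lterm{\overline x}{\cdot}$ (erasing dummy lambdas $\blambda^\alpha$), each variable node becomes its variable applied to the read-backs of its children, and each $@$ node becomes an ordinary application of the read-backs of its children. Axioms (TVar) and (T@) supply exactly the arities and argument types needed to see that $M$ is well typed, and (Lam) guarantees that each variable occurrence is bound by an enclosing lambda (or is free with binder $\epsilon$), so the binding structure of $M$ agrees with $B$. It then remains to check $\lform{M}=(T,B,\ell)$: since $M$ is already fully $\eta$-expanded and fully applied (by (T@), (TVar), (Leaf)), the $\eta$-expansion step of the long-form construction adds nothing, while the redex-marking step reinserts $@$ at precisely the nodes where $\ell$ was $@^{\widehat{A}}$, and a routine induction on $T$ matches the two trees node for node.

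The main obstacle I expect is the bookkeeping in the backward direction around the fresh $\eta$-expansion variables and dummy lambdas. In clauses~1 and~2 the long form introduces fresh trailing arguments $z_1,\ldots,z_n$ bound by the outermost $\blambda z_1 \cdots z_n$, and the read-back must not conflate these with genuine operands; verifying that $\lform{M}$ reproduces the binder $B$ \emph{exactly} therefore requires tracking how the fresh-naming convention lines up the $z_i$ with the variable nodes of $T$ bound at the corresponding lambda node. Isolating, as a preliminary lemma, the fact that the long-form construction is invariant (up to these naming conventions) on terms that are already $\eta$-long and fully applied --- so that applying $\lform{\cdot}$ only re-marks redexes --- is where the care is needed; once that lemma is in hand, both the identity $\lform{M}=(T,B,\ell)$ and the forward induction reduce to routine case analysis.
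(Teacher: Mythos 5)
Your proposal is correct, and in both directions it is at heart the same structural induction as the paper's; the difference lies in how the backward direction is packaged. The paper's proof of ``$\Leftarrow$'' never erases the long-form structure: it decomposes the tree at the root, localises the binder function to each subtree (defining $B_{@\alpha}$ so that variables whose binders lie strictly above $\alpha$ are re-mapped to $\epsilon$, i.e.\ become free variables of the subtree), observes that the resulting $\LamAt$-labelled binding trees again satisfy the axioms, and then the induction hypothesis directly yields terms whose long forms are exactly these subtrees; gluing them back together as $\lterm{\overline x}{@ \, \lform{M} \, \lform{N_1} \cdots \lform{N_n}}$ (or the variable-headed analogue) is then immediate from the defining clauses of $\lform{\cdot}$, so no further lemma is needed. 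You instead erase the dummy lambdas and the $@$'s to obtain a raw read-back term and must then prove the round trip $\lform{M} = (T,B,\ell)$, which is precisely why you need to isolate the invariance lemma (that $\lform{\cdot}$ applied to an $\eta$-long, fully applied term only re-inserts the erased markers). That lemma is the extra cost of your route; what it buys is an explicit, self-contained read-back function, which is somewhat more modular and reusable. Note also that the bookkeeping you rightly flag as the main obstacle --- free variables and $\eta$-expansion variables of subterms --- is exactly what the paper's $B_{@\alpha}$ localisation device handles: your round-trip induction on $T$ will need the same device when it descends past the root lambda (a subterm's read-back has free variables bound higher up), so it should be made explicit rather than left to the fresh-naming convention.
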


\begin{figure}[ht]
\centering 
\(
\xymatrix@R-.3cm@C-.2cm{ & & @^A
 \ar@{-}[dll]|1 \ar@{-}[dl]|2 \ar@{-}[dr]|{n+1} & & \quad \qquad & \phi \ar@{-}[dl]|1 \ar@{-}[dr]|n & \\ 
 \blambda \xi_1^{A_1} \cdots
 \xi_n^{A_n} & \blambda \overline{\eta_1} : A_1 & 
 \cdots & \blambda \overline{\eta_n} : A_n
 & \blambda
 \overline{\eta_1} : A_1 & \cdots & \blambda \overline{\eta_n} : A_n
 }
\)
\caption{Labelling rules for long application symbols and variables. \label{fig:TVarLA}}
\end{figure}
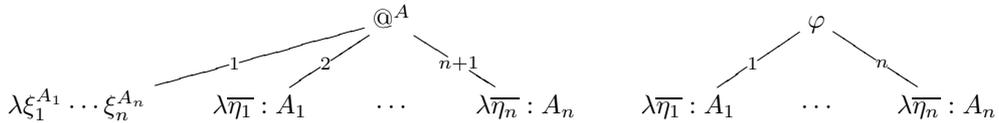

\begin{proof} 
The direction ``$\Rightarrow$'' can be proved straightforwardly by  induction on the rules that define $\lform{M}$.
For ``$\Leftarrow$'', take a $\LamAt$-labelled binding tree $(T, B, \ell)$ that satisfies the axioms. Because of (Label), $\ell(\epsilon)$ has the form $\blambda x_1 \cdots x_m$; and because of (Leaf), every maximal element in $T$ has an odd length.
The base cases are therefore long forms of the shape $\lterm{\overline x}{y}$ with $B : 1 \mapsto \epsilon$.
For the inductive cases, $\ell(1)$ is either $@^{\widehat{A}}$ or $x^A$ where $A = (A_1, \cdots, A_n, o)$ with $n \geq 1$.
Suppose the former.
By assumption, for all $\beta \in \dom(B)$, $B(\beta) < \beta$. 
Let $\alpha \in T$. The tree \emph{$T$ rooted at $\alpha$}, denoted $T_{@\alpha}$, is the set $\makeset{\gamma \mid \alpha \sdot \gamma \in T}$. 
%Given a $\Sigma$-labelled tree $F : T \to \Sigma$, and let $\alpha \in T$, the \emph{$\Sigma$-labelled tree rooted at $\alpha$}, denoted $F_{@\alpha}$, is the function $T_{@\alpha} \to \Sigma$ defined by $\gamma \mapsto F(\alpha \sdot \gamma)$.
Define $B_{@\alpha} : T_{@\alpha} \pto T_{@\alpha}$ by setting $\dom(B_{@\alpha}) := \makeset{\beta \mid \alpha \sdot \beta \in \dom(B)}$; and
$B_{@\alpha}(\beta) := \beta'$ if $B(\alpha \sdot \beta) = \alpha \sdot \beta'$, and 
$B_{@\alpha}(\beta) := \epsilon$ if $B(\alpha \sdot \beta) < \alpha$.
Next define $\ell_{@\alpha} : T_{@\alpha} \to \LamAt$ by $\gamma \mapsto \ell(\alpha \sdot \gamma)$.
It follows that for each $i \in [n+1]$, $(T_{@1 \sdot i}, B_{@1 \sdot i}, \ell_{@1 \sdot i})$ is a $\LamAt$-labelled binding tree that satisfies the axioms. 
By the induction hypothesis, suppose the binding trees are the ASTs of the long forms $\lform{M}, \lform{N_1}, \cdots, \lform{N_n}$ respectively.
Then $(T, B, \ell)$ is the AST of the long form $\lterm{\overline x}{@ \, \lform{M} \, \lform{N_1} \cdots \lform{N_n}}$ where $\ell(\epsilon) = \blambda \overline x$.
The latter case of $\ell(1) = x^A$ is similar.
\end{proof}

%\changed[lo]{For convenience, we write $\lform{M} = (T_M, B_M, \ell_M)$.}
\subsection{Explicit arenas, succinct arenas and succinct long form}
%\emph{Henceforth, by the long form of a term ${M}$ we mean its presentation as a $\LamAt$-labelled binding tree, $\lform{M} = (T_M, B_M, \ell_M)$.}

Given a term, we identify two arenas that are associated with $\lform{M}$, namely, explicit arena $\ArExp\lform{M}$, and succinct arena $\ArSuc\lform{M}$.
The former arena provides the setting for traversals: 
we will define traversals over $\lform{M}$ as justified sequences over the explicit arena $\ArExp\lform{M}$ that satisfy certain constraints.

Fix a term-in-context $\seq{z_1 : A_1, \cdots, z_i : A_i }{M : (A_{i+1}, \cdots, A_n, o)}$, and write $A = (A_1, \cdots, A_n, o)$ and $\lform{M} = (T, B, \ell)$. 
Furthermore assume that $\makeset{(\gamma_1, \Xi_1), \cdots, (\gamma_r, \Xi_r)} = \makeset{(\gamma, \Xi) \mid \ell(\gamma) = @^{\widehat{\Xi}}}$.

\begin{definition}[Explicit Arena of $\lform{M}$, $\ArExp\lform{M}$]
%Let $\lform{M} = (T, B, \ell)$.
The \emph{explicit arena} of $\lform{M} = (T, B, \ell)$, $\ArExp\lform{M}$, is defined as follows.
The underlying move-set $\moves{\ArExp\lform{M}} := T$.
The enabling relation, $\vdash_{\ArExp\lform{M}}$, is defined as follows: 
\begin{itemize}
\item $\star \vdash_{\ArExp\lform{M}} \epsilon$; 
and for all $i \in [r]$, $\star \vdash_{\ArExp\lform{M}} \gamma_i$.

\item If $\ell(\alpha) = x^{(B_1, \cdots, B_m, o)}$ then for each $i \in [m]$, $\alpha \vdash_{\ArExp\lform{M}} \alpha \sdot i$.

\item If $\ell(\alpha) = @^{\widehat{C}}$ where $C = (C_1, \cdots, C_l, o)$ then for each $i \in [l+1]$, $\alpha \vdash_{\ArExp\lform{M}} \alpha \sdot i$.

\item If $\ell(\alpha) = \blambda x_1 \cdots x_m$ then: for all $\alpha' \in B^{-1}(\alpha)$ %let $\ell(\alpha') = y$
\begin{itemize}
\item if $\ell(\alpha') = x_i$ for some $i \in [m]$ then $\alpha \vdash_{\ArExp\lform{M}} \alpha'$. 
\lo{In which case we write $\alpha \vdash_{\ArExp\lform{M}}^i \alpha'$, with the superscript $i$, to indicate $\ell(\alpha') = x_i$} 
\item if $\ell(\alpha') \not\in \makeset{x_1, \cdots, x_m}$ (it follows from (Lam) that $\alpha = \epsilon$ and $\ell(\alpha') = z_k \in \FV(M)$ for some $k \in [i]$) then $\epsilon \vdash_{\ArExp\lform{M}} \alpha'$. 
\lo{In which case we write $\epsilon \vdash_{\ArExp\lform{M}}^k \alpha'$, with the superscript $k$, to indicate $\ell(\alpha') = z_k$.}
\end{itemize} 
\end{itemize}
Finally $\lambda_{\ArExp\lform{M}}(\alpha) = \OO \iff |\alpha|$ is even. 
\end{definition}

Thus the explicit arena $\ArExp\lform{M}$ has the same underlying node-set $T$ as the long form $\lform{M} = (T, B, \ell)$.
Every lambda-labelled node is enabled by its predecessor in $T$.
A variable-labelled node $\alpha$ is enabled by its binder $B(\alpha)$; by convention, a node $\alpha$ labelled by a free variable is enabled by the root $\epsilon = B(\alpha)$.
A node is an O-move if and only if its label is a lambda.

% \lo{NOTE. We check that the definition of $\vdash_{\ImAr\lform{M}}$ is unambiguous. 
% If $b = \blambda \overline x$ then,
% thanks to the renaming rule, there is a unique $\beta$ such that $\ell(\beta) = b$. 
% It follows that there is a unique $\alpha$ such that $\alpha \vdash_{\ArExp\lform{M}} \beta$.
% Here there is a unique $a = \ell(\alpha)$ such that $a \vdash_{\ImAr\lform{M}} b$.
% If $b = x_i$ and $\ell(\beta) = \ell(\beta') = b$. 
% By definition of long form, there is a unique $\alpha = B(\beta) = B(\beta')$ such that $\ell(\alpha) = \blambda \overline x$.
% Hence there is a unique $a = \blambda \overline x$ such that $a \vdash_{\ImAr\lform{M}} b$.}

% \begin{lemma}
% The arena $\ImAr\lform{M}$ (and in particular $\vdash_{\ImAr\lform{M}}$) is well-defined; and the function $\ell : \dom(\ell) \to \im(\ell)$ is a direct arena morphism from $\ArExp\lform{M}$ to $\ImAr\lform{M}$.
% \end{lemma}

% \begin{proof}
% Immediate consequence of the definition of $\ArExp\lform{M}$ to $\ImAr\lform{M}$.
% \end{proof}

% \begin{example}
% Let \lo{TO DO}
% \end{example}

The succinct arena of a long form is part of a compact representation of the long form as a binding tree, called succinct long form.

%Fix a term-in-context $\seq{z_1 : A_1, \cdots, z_i : A_i }{M : (A_{i+1}, \cdots, A_n, o)}$; we write $A = (A_1, \cdots, A_n, o)$ and let $\lform{M} = (T, B, \ell)$
%Assuming $\makeset{(\gamma_1, \Xi_1), \cdots, (\gamma_r, \Xi_r)} = \makeset{(\gamma, \Xi) \mid \ell(\gamma) = @^{\widehat{\Xi}}}$, 

\begin{definition}[Succinct Arena of $\lform{M}$, $\ArSuc\lform{M}$]
The \emph{succinct arena} of $\lform{M}$, $\ArSuc\lform{M}$, is defined to be the arena $\Ar(A) \, \times \, \prod_{i=1}^r \Ar^\bot(\widehat{\Xi_i})$. 
\end{definition} 

The arena $\ArSuc\lform{A}$ is a disjoint union of $\Ar(A), \Ar^\bot(\widehat{\Xi_1}), \cdots, \Ar^\bot(\widehat{\Xi_r})$, \emph{qua} labelled directed graphs.
Notice that $\ArSuc\lform{M}$ depends only on the list of types 
(viz.~$A, \Xi_1, \cdots, \Xi_r$) that occur in $\lform{M}$, and not on the size of $M$; furthermore, in case $M$ is $\beta$-normal, $\ArSuc\lform{M} = \Ar(A)$.

Next we define a function $\hat{\ell} : |\ArExp\lform{M}| \to |\ArSuc\lform{M}|$ where 
\[
|\ArSuc\lform{M}| := \makeset{\epsilon} \times \moves{\Ar(A)}
\; \cup \;
\bigcup_{i=1}^r \makeset{\gamma_i} \times \moves{\Ar^\bot(\widehat{\Xi_i})}
\] 
is a (convenient) representation of the underlying set of the arena $\ArSuc\lform{M}$.
To aid the definition of $\hat{\ell}$, we use a predicate 
\(
S \; \subseteq \; T \times \LamAt \times |\ArSuc\lform{M}|
\). 
The idea is that $(\alpha, s, \gamma, \beta) \in S$ means: 
\begin{inparaenum}[(i)]
\item $\alpha \in T$ and $\ell(\alpha) = s$, and 
\item $\alpha$ is hereditarily enabled by $\gamma$ in the arena $\ArExp\lform{M}$ where $\gamma \in \makeset{\epsilon, \gamma_1, \cdots, \gamma_r}$, and
\item $\alpha$ is mapped by $\hat{\ell}$ to $(\gamma, \beta)$. 
\end{inparaenum}

The predicate $S$ is defined by induction over the following rules:
\begin{itemize}
\item $(\epsilon, \ell(\epsilon), %\blambda x_{i+1}^{A_{i+1}} \cdots x_n^{A_n}, 
\epsilon, \epsilon) \in S$; 
for all $i \in [r]$, $(\gamma_i, \ell(\gamma_i), \gamma_i, \epsilon) \in S$.

\item If $(\alpha, x^{(B_1, \cdots, B, o)}, \gamma, \beta) \in S$ then for all $i \in [m]$, $(\alpha \sdot i, \ell(\alpha \sdot i), \gamma, \beta \sdot i) \in S$. \lo{Notice that, by construction of long form, $\ell(\alpha \sdot i)$ has the form $\blambda y_1^{C_1} \cdots y_l^{C_l}$ where $B_i = (C_1, \cdots, C_l , o)$.}

\item If $(\gamma, @^{\widehat{C}}, \gamma, \epsilon) \in S$ where $C = (C_1, \cdots, C_l, o)$ then for all $i \in [l+1]$, $(\gamma \sdot i, \ell(\gamma \sdot i), \gamma, i) \in S$. 
\lo{Notice that $\ell(\gamma \sdot 1)$ has the form $\blambda y_1^{C_1} \cdots y_l^{C_l}$; 
and for all $i \in \makeset{2, \cdots, l+1}$, $\ell(\gamma \sdot i)$ has the form $\blambda z_1^{B_1} \cdots z_m^{B_m}$ where $C_{i-1} = (B_1, \cdots, B_m, o)$. }

\item If $(\alpha, \blambda x_1^{B_1} \cdots x_m^{B_m}, \gamma, \beta) \in S$ then: for all $\alpha' \in B^{-1}(\alpha)$%, let $\ell(\alpha') = y$
\begin{itemize}
\item if $\ell(\alpha') = x_i^{B_i}$ for some $i \in [m]$ then $(\alpha', \ell(\alpha'), \gamma, \beta \sdot i) \in S$
\item if $\ell(\alpha') \not\in \makeset{x_1, \cdots, x_m}$ then (by (Lam)) $\alpha = \epsilon$ and $\ell(\alpha') = z_k^{A_k} \in \FV(M)$ for some $k \in [i]$, and we have $(\alpha', \ell(\alpha'), \epsilon, k) \in S$
%if $\ell$for all $i \in [m]$ and $\alpha' \in T$, if $B(\alpha') = \alpha$ and $\ell(\alpha') = x_i$ then $(\alpha', \ell(\alpha'), \gamma, \beta \sdot i) \in S$.

\end{itemize}
\end{itemize}
Then, by a straightforward induction over the length of $\alpha$, we have: for all $\alpha \in T$, there exist unique $\gamma$ and $\beta$ such that $(\alpha, \ell(\alpha), \gamma, \beta) \in S$.

%Let $\seq{z_1 : A_1, \cdots, z_i : A_i }{M : (A_{i+1}, \cdots, A_n, o)}$ be a term-in-context where $A = (A_1, \cdots, A_n, o)$; and assume $\lform{M} = (T, B, \ell)$ with $\makeset{(\gamma_1, \Xi_1), \cdots, (\gamma_r, \Xi_r)} = \makeset{(\gamma, \Xi) \mid \ell(\gamma) = @^{\widehat{\Xi}}}$.

\begin{definition}[Succinct Long Form]\label{def:clform}
We organise $|\ArSuc\lform{M}|$ into a $\lambda$-alphabet as follows:
\begin{align*}
|\ArSuc\lform{M}|_\lambda := {} & \makeset{(\epsilon, \alpha) \mid \alpha \in \moves{\Ar(A)}, \hbox{$|\alpha|$ even}} \; \cup \\
& \bigcup_{i=1}^r \makeset{(\gamma_i, \alpha) \mid \alpha \in |\Ar^\bot(\widehat{\Xi_i})|, \hbox{$|\alpha|$ odd}} \\
|\ArSuc\lform{M}|_\Var := {} & \makeset{(\epsilon, \alpha) \mid \alpha \in \moves{\Ar(A)}, \hbox{$|\alpha|$ odd}} \cup {}\\
& \bigcup_{i=1}^r \makeset{(\gamma_i, \alpha) \mid \alpha \in |\Ar^\bot(\widehat{\Xi_i})|, |\alpha| > 0, \hbox{$|\alpha|$ even}}\\
|\ArSuc\lform{M}|_\mathit{aux} := {} &
\makeset{(\gamma_1, \epsilon), \cdots, (\gamma_r, \epsilon)}
%|\ArSuc\lform{M}| \setminus (|\ArSuc\lform{M}|_\lambda \cup |\ArSuc\lform{M}|_\Var)
\end{align*}
The \emph{succinct long form} of a term $M$ is the $|\ArSuc\lform{M}|$-labelled binding tree, $(T, B, \hat{\ell})$, where %the $|\ArSuc\lform{M}|$-labelled tree
\(
\hat{\ell} : T \to 
|\ArSuc\lform{M}|
\)
is given by $\hat{\ell}(\alpha) = (\gamma, \beta)$ just if $(\alpha, \ell(\alpha), \gamma, \beta) \in S$.

\end{definition}

\lo{N.B. (i) The even maximal nodes in $\moves{\Ar(A)}$ and the odd maximal nodes in $|\Ar^\bot(\widehat{\Xi_i})|$ are the dummy lambdas; the root nodes in $|\Ar^\bot(\widehat{\Xi_i})|$ are the long-apply nodes.

(ii) It is straightforward to check that $(T, B, \hat{\ell})$ is a $|\ArSuc\lform{M}|$-labelled binding tree: ${\hat{\ell}(\beta) \in |\ArSuc\lform{M}|_\Var}$ iff ${\beta \in \dom(B)}$; and if $\beta \in \dom(B)$ then $\ell(\beta) \in |\ArSuc\lform{M}|_\lambda$.}

\begin{lemma}
For every term $M$, the function $\hat{\ell} : \moves{\ArExp\lform{M}} \to \moves{\ArSuc\lform{M}}$ gives a direct arena morphism from $\ArExp{\lform{M}}$ to $\ArSuc\lform{M}$. 
In general $\hat{\ell}$ is neither injective nor surjective. 
\end{lemma}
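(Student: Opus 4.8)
The plan is to verify the three defining clauses of a direct arena morphism for $\hat{\ell}$ — preservation of ownership, preservation of initiality, and preservation of the enabling relation — all by induction following the rules that define the predicate $S$, equivalently by induction on $|\alpha|$. We have already shown that for every $\alpha \in T$ there are unique $\gamma,\beta$ with $(\alpha, \ell(\alpha), \gamma, \beta) \in S$, so $\hat{\ell}(\alpha) = (\gamma,\beta)$ is total and well defined; the work is to show it respects the arena structure. Throughout I would carry a strengthened invariant recording, for each $(\alpha, \ell(\alpha), \gamma, \beta) \in S$: (i) that $\beta$ is a genuine node of the relevant type tree, i.e.\ $\beta \in \Tree{A}$ when $\gamma = \epsilon$ and $\beta \in \Tree{\widehat{\Xi_i}}$ when $\gamma = \gamma_i$; (ii) that the subtree rooted at $\beta$ matches the type of $\ell(\alpha)$, so that $\arity(\ell(\alpha))$ equals the number of children of $\beta$ in that type tree; and (iii) the parity fact $|\alpha| \equiv |\beta| \pmod 2$ when $\gamma = \epsilon$ and $|\alpha| \not\equiv |\beta| \pmod 2$ when $\gamma = \gamma_i$. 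The side-conditions already annotated in the definition of $S$ (that children of a variable- or $@$-node carry the expected lambda types) are precisely the type-bookkeeping needed for (i)--(ii).

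Ownership is then immediate from (iii). When $\gamma = \epsilon$ the image lies in $\Ar(A)$, where $\lambda(\beta) = \OO$ iff $|\beta|$ is even, which by (iii) holds iff $|\alpha|$ is even, i.e.\ iff $\lambda_{\ArExp\lform{M}}(\alpha) = \OO$. When $\gamma = \gamma_i$ the image lies in the opposite arena $\Ar^\bot(\widehat{\Xi_i})$, whose ownership is flipped, so $\lambda(\beta) = \OO$ iff $|\beta|$ is odd, which by (iii) again holds iff $|\alpha|$ is even. Equivalently, one checks that $\hat{\ell}$ sends lambda-labelled (even-level) nodes into $|\ArSuc\lform{M}|_\lambda$ and non-lambda (odd-level) nodes into $|\ArSuc\lform{M}|_\Var \cup |\ArSuc\lform{M}|_\mathit{aux}$, exactly as Definition~\ref{def:clform} partitions the succinct moves. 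For initiality, the first clause of $S$ gives $\hat{\ell}(\epsilon) = (\epsilon,\epsilon)$ and $\hat{\ell}(\gamma_i) = (\gamma_i,\epsilon)$, and these are precisely the initial moves of $\ArSuc\lform{M} = \Ar(A) \times \prod_i \Ar^\bot(\widehat{\Xi_i})$; moreover they are the only nodes whose image has empty second component, so initiality is both preserved and reflected.

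Preservation of enabling is the case analysis that consumes most of the work, and I would drive it by matching the clauses of $S$ against the enabling clauses of $\ArExp\lform{M}$. If $\alpha \vdash \alpha \sdot i$ because $\ell(\alpha)$ is a variable or an $@$, the corresponding $S$-clause gives $\hat{\ell}(\alpha \sdot i) = (\gamma, \beta \sdot i)$ (the $@$-case accounting for the index offset into the product component), and invariant (i) guarantees $\beta \sdot i$ is a node of the type tree, so $\beta \vdash \beta \sdot i$ holds in the relevant component and hence $\hat{\ell}(\alpha) \vdash_{\ArSuc\lform{M}} \hat{\ell}(\alpha \sdot i)$. If instead $\alpha$ is a lambda enabling a bound occurrence $\alpha'$ with $\ell(\alpha') = x_i$, the lambda-clause of $S$ gives $\hat{\ell}(\alpha') = (\gamma, \beta \sdot i)$, again an immediate child of $\hat{\ell}(\alpha)$; and the free-variable subcase maps $\alpha'$ to $(\epsilon, k)$, a child of $(\epsilon,\epsilon) = \hat{\ell}(\epsilon)$, matching $\epsilon \vdash_{\ArExp\lform{M}} \alpha'$. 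In every case the image of the (unique) enabler of $\alpha'$ is the (unique) enabler of $\hat{\ell}(\alpha')$.

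The delicate point — and the reason the second sentence of the lemma matters — is the reflection half of the enabling clause, which is exactly where I expect the main obstacle. Because $\hat{\ell}$ is genuinely non-injective (two occurrences of one bound variable, together with their subtrees, are identified: in $\lform{\lterm{x\,y}{x\,(x\,y)}}$ the two $x$-nodes collapse to one succinct move, and so do their arguments), there are $\alpha,\alpha'$ with $\hat{\ell}(\alpha) \vdash_{\ArSuc\lform{M}} \hat{\ell}(\alpha')$ that are not themselves adjacent in $\ArExp\lform{M}$. The correct reading, which I would make explicit, is that reflection is about \emph{enablers}: since every arena move has a unique enabler, the content of ``$\hat{\ell}$ respects $\vdash$'' is that $\hat{\ell}$ commutes with taking the enabler and sends initial moves bijectively to initial moves, both verified above. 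Non-surjectivity I would exhibit dually: $\ArSuc\lform{M}$ depends only on the types $A, \Xi_1, \dots, \Xi_r$, so whenever the term fails to realise some position of a type tree (for instance a bound variable that does not occur in its scope), the corresponding succinct move has no $\hat{\ell}$-preimage. The genuine bijection is recovered only after passing to justified sequences and restricting to projected traversals and P-views, which is the business of Theorem~\ref{thm:corr} rather than of this lemma.
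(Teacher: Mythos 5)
Your proposal follows the same core route as the paper's own proof: verification of the three clauses by case analysis along the defining rules of $S$, initiality from the base clauses, ownership from the parity invariant (with the flip in the $\Ar^\bot(\widehat{\Xi_i})$ components), and preservation of enabling by cases on $\ell(\alpha)$ — the paper, too, works out only the $\blambda x_1 \cdots x_m$ case and leaves the parity facts as ``straightforward to see'', whereas you carry them as an explicit induction invariant. Where you genuinely depart from the paper is the reflection half of the enabling condition, and here your caution exposes a defect in the paper rather than in your argument: the paper's proof asserts the biconditional $\alpha \vdash_{\ArExp\lform{M}} \alpha' \iff \hat{\ell}(\alpha) \vdash_{\ArSuc\lform{M}} \hat{\ell}(\alpha')$ but only ever checks the left-to-right direction, and the right-to-left direction is false exactly as you predict. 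In the paper's own illustration $M = \lterm{x}{x\,(\lterm{y}{y})\,(\lterm{z}{x\,(\lterm{y'}{y'})\,(\lterm{z'}{z'})})}$, the first occurrence of $x$ (node $1$) and the lambda-child of the second occurrence (node $1211$) satisfy $\hat{\ell}(1) = 1 \vdash 11 = \hat{\ell}(1211)$ in $\ArSuc\lform{M}$, yet $1 \not\vdash_{\ArExp\lform{M}} 1211$, whose enabler is $121$; so under a literal reading of the paper's definition of direct arena morphism, the lemma as stated cannot be proved. Your repaired reading — $\hat{\ell}$ sends the unique enabler of $\alpha'$ to the unique enabler of $\hat{\ell}(\alpha')$, and a node is initial iff its image is — is precisely what holds (your observation that only initial nodes have images with empty second component gives the reflection of initiality), and it is all that is needed downstream: it suffices for $\hat{\ell}^\ast$ to map justified sequences to justified sequences isomorphic as justified sequences, while bijectivity on projected traversals is the separate business of Theorem~\ref{thm:corr}. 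Finally, your witnesses for the second sentence of the lemma — non-injectivity from the collapse of multiple occurrences of a bound variable and of their subtrees, non-surjectivity because $\ArSuc\lform{M}$ depends only on the types $A, \widehat{\Xi_1}, \dots, \widehat{\Xi_r}$ and not on which positions of those type trees the term actually realises — are correct and fill in a claim the paper states but never argues.
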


\begin{proof}
It follows from the definition that $\star \vdash_{\ArExp\lform{M}} \alpha \, \iff \, \star \vdash_{\ArSuc\lform{M}} \hat{\ell}(\alpha)$.

To verify $\alpha \vdash_{\ArExp\lform{M}} \alpha' \iff \hat{\ell}(\alpha) \vdash_{\ArSuc\lform{M}} \hat{\ell}(\alpha')$, we analyse the cases of $\ell(\alpha)$.
To illustrate, consider the case of $\ell(\alpha) = \blambda x_1 \cdots x_m$. 
Take $\alpha' \in B^{-1}(\alpha)$. %with $y = \ell(\alpha')$.
If $\ell(\alpha') = x_i$ for some $i \in [m]$ then $\alpha \vdash_{\ArExp\lform{M}} \alpha'$.
Suppose $\hat{\ell}(\alpha) = (\gamma, \beta)$, say. 
Then $\hat{\ell}(\alpha') = (\gamma, \beta \sdot i)$, and we have $\hat{\ell}(\alpha) \vdash_{\ArSuc\lform{M}} \hat{\ell}(\alpha')$ as desired.
If for some $k \in [i]$, $\ell(\alpha') = z_k \not\in \makeset{x_1, \cdots, x_m}$, then $\alpha = \epsilon$, and $\alpha \vdash_{\ArExp\lform{M}} \alpha'$. Now $\hat{\ell}(\alpha) = (\epsilon, \epsilon)$ and $\hat{\ell}(\alpha') = (\epsilon, k)$, and we have $\hat{\ell}(\alpha) \vdash_{\ArSuc\lform{M}} \hat{\ell}(\alpha')$ as desired.

Finally, to show $\lambda_{\ArSuc\lform{M}}(\hat{\ell}(\alpha)) = \lambda_{\ArExp\lform{M}}(\alpha)$, notice that $\lambda_{\ArExp\lform{M}} (\alpha) = \OO \iff |\alpha|$ even.
Let $\hat{\ell}(\alpha) = (\gamma, \beta)$ and suppose $|\alpha|$ is even.
It is straightforward to see that if $\gamma = \epsilon$ then $|\alpha| \equiv |\beta| \mod 2$, and so $\lambda_{\Ar(A)}(\beta) = \OO = \lambda_{\ArSuc\lform{M}} (\hat{\ell}(\alpha))$. And
if $\gamma = \gamma_i$ for some $i \in [r]$ then $|\alpha| \equiv |\beta| + 1  \mod 2$,
and so $\lambda_{\Ar^\bot(\widehat{\Xi_i})}(\beta) = \OO = \lambda_{\ArSuc\lform{M}} (\hat{\ell}(\alpha))$.
The other case (i.e.~$|\alpha|$ is odd) is symmetric.
\end{proof}

Take a term-in-context $\seq{}{M : A}$. The direct arena morphism $\hat{\ell} : \ArExp\lform{M} \to \ArSuc\lform{M}$ is closely related to the game semantics of $M$, $\mng{\seq{}{M}}$, an innocent strategy over the arena $\mng{A}$. The function $\hat\ell$ maps nodes of the tree $|\ArExp\lform{M}|$ to moves of the arena $\mng{A}$.
We illustrate this map in the following example.

\begin{example}
Take $\seq{}{M : A}$ where $M = \lterm{x}{x \, (\lterm{y}{y}) \, (\lterm{z}{x \, (\lterm{y'}{y'}) \, (\lterm{z'}{z'})})}$ and $A = (((o, o), (o, o), o), o)$.
Since $M$ is $\beta$-normal, we have $\ArSuc\lform{M} = \Ar(A)$.
In the following, we display the direct arena morphism $\hat\ell : |\ArExp\lform{M}| \to |\ArSuc\lform{M}|$ by annotating $\hat{\ell}(\alpha)$ next to the node $\alpha$, separated by $:$.
\[\xymatrix@R-.3cm@C-.4cm{
 & {\blambda x} : {\epsilon} \ar@{-}[d] & & \\
 & x:{1} \ar@{-}[dl] \ar@{-}[dr] & & \\ 
{\blambda y}:{11} \ar@{-}[d] & & \blambda z:{12} \ar@{-}[d] & \\
 y:{111} & & x:{1} \ar@{-}[dl] \ar@{-}[dr] & \\
& {\blambda y'}:{11}  \ar@{-}[d] & & {\blambda z'}:{12}  \ar@{-}[d] \\
 & z:{121} & & {z'}:{121}}
\]
%Take $M = \lterm{x}{x \, (\lterm{y}{y}) \, (\lterm{z}{x \, (\lterm{y'}{y'}) \, (\lterm{z'}{z'})})} : A$ 

\end{example}

\begin{remark}[Every path in the long form is a P-view]\label{rem:pview-path}
Take a long form $\lform{M} = (T, B, \ell)$. 
Every path $\alpha_1 \sdot \alpha_2\sdot \cdots \sdot \alpha_n$ in the tree $T$ is the underlying sequence of a (unique) P-view $p$ over the arena $\ArExp\lform{M}$, whose pointers are defined as follows.
Suppose $i \geq 3$ is odd; then the O-move $\alpha_i$ (where $\ell(\alpha_i)$ is necessarily a lambda) is justified by $\alpha_{i-1}$; note that we have $\alpha_{i-1} \vdash_{\ArExp\lform{M}} \alpha_i$. 
Suppose $i$ is even; if $\ell(\alpha_i)$ is a variable, then the P-move $\alpha_i$ is justified by $B(\alpha_i)$ -- note that $B(\alpha_i) < \alpha_i$; otherwise, $\ell(\alpha_i) = @$, and $\alpha_i$ is initial in $\ArExp\lform{M}$. 
Since $\hat{\ell}$ is a direct arena morphism, $\hat{\ell}(p)$ is a P-view over the arena $\ArSuc\lform{M}$.
\end{remark}

% \changed[lo]{
% \begin{lemma}\label{lem:strong-bijection-nf}
% Let $\seq{z_1 : A_1, \cdots, z_i : A_i}{M : (A_{i+1}, \cdots, A_n, o)}$ be a $\beta$-normal term-in-context
% (and so $\ArSuc\lform{M} = \Ar(A_1, \cdots, A_n, o)$) with long form $\lform{M} = (T, B, \ell)$. Then $\hat{\ell}^\ast : \mathit{Path}(T) \to \pview{\mng{\seq{\overline z}{M}}}$ is a strong bijection. 
% \end{lemma}
% }

% \begin{proof}

% \end{proof}

\begin{example}\label{eg:kierstead}
Consider $\lform{K} = \lterm{f}{f(\lterm{x}{f(\lterm{y}{f(\lterm{z}{\lterm{}{x}})})})} : (((o, o), o), o)$. 
\emph{Qua} $\LamAt$-labelled binding tree $(T, B, \ell)$, the tree $T$ consists of all prefixes of $1^7 = 1 \sdot 1 \sdot 1 \sdot 1 \sdot 1 \sdot 1 \sdot 1$ %\underbrace{1 \cdots 1}_5$, 
and $B : 1 \mapsto \epsilon, 1^3 \mapsto \epsilon, 1^5 \mapsto \epsilon, 1^7 \mapsto 1^2$.
Take the P-view $p$ whose underlying sequence of moves is the maximal path of the tree $T$. 
(In the following, pointers from O-moves are not displayed.)

\noindent The P-view $p$ over the explicit arena $\ArExp\lform{K}$ is
\begin{equation}
p = \Pstr[0.7cm]{(a){\epsilon} \; \sdot \; (b-a){1} \; \sdot \; (c){1^2} \; \sdot \; (d-a){1^3} \; \sdot \; 1^4 \; \sdot \; (e-a){1^5} \; \sdot \; 1^6 \; \sdot \; (f-c){1^7}}
\label{eq:kierstead-dom}
\end{equation}
The ``justified sequence'' of the $\LamAt$-labels traced out by $p$, $\ell^\ast(p)$,  is
\begin{equation}
\ell^\ast(p) = \Pstr[0.7cm]{(a){\blambda f} \; \sdot \; (b-a){f} \; \sdot \; (c){\blambda x} \; \sdot \; (d-a){f} \; \sdot \; \blambda y \; \sdot \; (e-a){f} \; \sdot \; \blambda z \; \sdot \; (f-c){x}}
\label{eq:kierstead-im}
\end{equation}
The P-view, ${\hat{\ell}}^\ast(p)$, over the succinct arena $\ArSuc\lform{M} = \Ar \, (((o, o), o), o)$ is
\begin{equation}
\hat{\ell}^\ast(p) = \Pstr[0.7cm]{(a){\epsilon} \; \sdot \; (b-a){1} \; \sdot \; (c){1^2} \; \sdot \; (d-a){1} \; \sdot \; 1^2 
\; \sdot \; (e-a){1} \; \sdot \; 1^2 \; \sdot \; (f-c){1^3}}
\label{eq:kierstead-ar}
\end{equation}
The P-view, $\hat{\ell}^\ast(p)$, over a $\Lam$-representation of %\lo{Resolve concrete representation of $\Ar(A)$.} 
$\ArSuc\lform{M}$ is
%$\Ar \, (((o, o), o), o)$ is
\begin{equation}
\hat{\ell}^\ast(p) = \Pstr[0.7cm]{(a){\blambda f} \; \sdot \; (b-a){f} \; \sdot \; (c){\blambda x} \; \sdot \; (d-a){f} \; \sdot \; \blambda x 
\; \sdot \; (e-a){f} \; \sdot \; \blambda x \; \sdot \; (f-c){x}}
\label{eq:kierstead-ar}
\end{equation}
Note that the set of elements that occur in $p$ (respectively $\ell^\ast(p)$ and ${\hat{\ell}}^\ast(p)$) has size 8 (respectively 6 and 4).

\end{example}

\subsection{Traversals over a long form}
\emph{Henceforth we write a long form \emph{qua} $\LamAt$-labelled binding tree as $\lform{M} = (T_M, B_M, \ell_M)$.} %where $\ell_M : T_B \to \LamAt$.

%\subsection{Traversals over a long form}
\begin{definition}[Traversals] \rm\label{def:traversal}\emph{Traversals} over a long form $\lform{M}$ are justified sequences over the arena $\ArExp \lform{M}$ defined by induction over the rules in Table~\ref{tab:traversals}. 
We write $\trmng{M}$ for the set of traversals over $
\lform{M}$.
It is convenient to refer to elements $\alpha \in |\ArExp\lform{M}| = T_M$ by their labels $\ell_M(\alpha)$.
We shall do so in the following whenever what we mean is clear from the context. 
\end{definition}

\begin{remark}\label{rem:traversals}
\begin{asparaenum}[(i)]
\item The rule (App) says that if a traversal ends in a @-labelled node $n$, 
then the traversal extended with the first (left-most) child of $n$ is a traversal.

\item The rule (Lam) says that if a traversal $t$ ends in a $\blambda$-labelled node $n$, then $t$ extended with the child node of $n$, $n'$, is also a traversal. 
To illustrate how the pointer of $n'$ in $t \sdot n'$ is determined, 
first note that every path in the (abstract syntax tree of the) long form $\lform{M}$ is a justified sequence which is a P-view (Remark~\ref{rem:pview-path}). 
Take, for example, traversal (\ref{eq:traversal-intro-1}) truncated at the 15-th move---call it $t \sdot \lambda$.  
By (Lam), $t \sdot \blambda \sdot y$ is a traversal, where $y$ is the child node of $\blambda$.
Notice that there are two occurrences of $\lambda f y$ (5th and 9th move respectively) in $t$ to which $y$ could potentially point.
However, in $\pview{t \sdot \lambda \sdot y} = 
\Pstr[0.65cm]{
(n1){\blambda}\, 
(n2){@}\, 
(n3-n2,30:2){\blambda f y}\, 
(n4-n3){f}\, 
(n5-n4,30:1){\blambda}\, 
(n6-n1){g}\,
(n7-n6,30:2){\blambda}\, 
(n8-n3){y}\,
}
$, which is a path in the long form $\lform{N \, P \, R}$, $y$ is bound by the 3rd move.
Since it is the 5th move of $t \sdot \lambda \sdot y$ which is mapped by $\pview{\hbox{-}}$ to the 3rd move of $\pview{t \sdot \lambda \sdot y}$, (Lam) says that $y$ in $t \sdot \lambda \sdot y$ points to the 5th-move.
%Note that every node of a traversal in an even position is constructed by rule (Lam).

\item If a traversal ends in a node labelled with a variable $\xi_i$,
then there are two cases, corresponding to whether $\xi_i$ is hereditarily justified by a bound (BVar) or free (FVar) variable in the long form $\lform{M}$.
Observe that some pointers in (BVar) %and (FVar) 
are labelled; %\(\xymatrix@C=.7cm{n & n' \ar@{.>}@/_.5pc/[l]}\)
for example, 
\(
\Pstr[0.65cm]{
(n1){n}\; 
\cdots\; 
(n2-n1,30:i+1){\blambda \overline \eta}
}
\)
means that the node $\blambda \overline \eta$ is the $(i+1)$-th child of $n$. 
Intuitively, the rules (BVar) capture the switching of control between caller and callee, or between formal and actual parameters.
Thus, in rule (BVar).2, $\xi_i$ is the $i$-th formal parameter, and $\blambda \overline \eta$---the $i$-th child of $n$---the (root of the) $i$-th actual parameter. 
Note, however, that in rule (Bvar).1, although $\blambda \overline \eta$ is the $(i+1)$-th child, it is actually the $i$-th actual parameter because the 1st-child of $@$ is not the 1st actual parameter, but rather the body of the function call itself.

\item The rule (FVar) is the only rule that permits traversals to branch, and grow in different directions.
\end{asparaenum}
\end{remark}

\begin{table}[ht]
\begin{center}
\mbox{\begin{shadowbox}[13.5cm]
\begin{itemize}
\item[(\emph{Root})]  %The singleton sequence, comprising the root node, is a traversal.
$\epsilon \in \trmng{M}$.

\item[(\emph{App})] %If $\; t \sdot @ \;$ is a traversal, so is $\Pstr[0.7cm]{t \sdot (app){@} \sdot (lx-app,45:0){\blambda \overline{\xi}}}$.
If $\; t \sdot @ \; \in \; \trmng{M}$ then
$\Pstr[0.7cm]{t \sdot (app){@} \sdot (lx-app,45:1){\blambda \overline{\xi}}} \; \in \; \trmng{M}$.

\item[(\emph{BVar})] %[(\emph{VarCC})] 
%If $\Pstr[0.7cm]{t \sdot (n){n} \sdot (lxi){\blambda \overline\xi} \cdots (xi-lxi,45:i){\xi} }$ is a traversal and $\xi$ is hereditarily justified by an $@$ then $\Pstr[0.7cm]{t \sdot (n){n} \sdot (lxi){\blambda \overline\xi} \cdots (xi-lxi,45:i){\xi} \sdot (leta-n,45:i){\blambda\overline\eta}}$ is a traversal. 
If $\Pstr[0.7cm]{t \sdot (n){n} \sdot (lxi){\blambda \overline\xi} \cdots (xi-lxi,45){\xi_i} } \; \in \; \trmng{M}$ where $\overline \xi = \xi_1 \cdots \xi_n$ and $\xi_i$ is hereditarily justified by an $@$ then
\begin{enumerate} 
\item if $n$ is (labelled by) $@$ then $\Pstr[0.7cm]{t \sdot (n){n} \sdot (lxi){\blambda \overline\xi} \cdots (xi-lxi,45){\xi_i} \sdot (leta-n,45:{i+1}){\blambda\overline\eta}}  \; \in \; \trmng{M}$
\item if $n$ is (labelled by) a variable then $\Pstr[0.7cm]{t \sdot (n){n} \sdot (lxi){\blambda \overline\xi} \cdots (xi-lxi,45){\xi_i} \sdot (leta-n,45:i){\blambda\overline\eta}}  \; \in \; \trmng{M}$.
\end{enumerate}

\item[(\emph{FVar})] %[(\emph{Var/$\Sigma$})] 
If $\;t \sdot \xi  \; \in \trmng{M}$ and the variable $\xi $ is \emph{not} hereditarily justified by an $@$ (equivalently, $\xi$ is hereditarily justified by the opening node $\epsilon$) then
$\Pstr[0.7cm]{t \sdot (d){\xi} \sdot (leta-d,45:j){\blambda \overline\eta}} \; \in \trmng{M}$, for every $1 \leq j \leq \arity(\xi)$. 

\item[(\emph{Lam})] 
If $\;t \sdot \blambda\overline{\xi} \; \in \; \trmng{M}$ and let $n$ be the (unique) child node of $\blambda \overline{\xi}$ in $T_M$, then 
$\;t\sdot \blambda\overline{\xi} \sdot n  \; \in\;  \trmng{M}$.
By a straightforward induction, $\pview{t \sdot \blambda \overline \xi \sdot n}$ is a (justified) path in the tree $T_M$. If $n$ is labelled by a variable (as opposed to $@$) then its pointer in $t \sdot \blambda \overline \xi \sdot n$ is determined by the justified sequence $\pview{t \sdot \blambda \overline \xi \sdot n}$ which is guaranteed to be a path in $T_M$.
\changed[lo]{Precisely if in the P-view $\pview{t \sdot \blambda \overline \xi \sdot n}$, $n$ points to the $i$-th move, then $n$ in $t \sdot \blambda \overline \xi \sdot n$ points to the $j$-th move, where the $j$-th move is the necessarily unique move-occurrence that is mapped to the $i$-th move under the P-view transformation: $\pview{\hbox{-}} : t \sdot \blambda \overline \xi \sdot n \mapsto \pview{t \sdot \blambda \overline \xi \sdot n}$.}
%\lo{DEPRECATED. If $\;t \sdot \blambda\overline{\xi} \; \in \; \trmng{M}$ and let $n$ be the node such that $\pview{t \sdot \blambda\overline \xi} \sdot n$ is (the P-view determined by) a path in the tree $T_B$
%$\moves{\ArExp\lform{M}}$ (see Remark~\ref{rem:pview-path}), then $\;t\sdot \blambda\overline{\xi} \sdot n  \; \in\;  \trmng{M}$.}
\end{itemize}
\caption{Rules that define traversals over a long form \label{tab:traversals}}
\lo{N.B. By definition of binder function of a $\LamAt$-labelled binding tree, if the node $\alpha$ in $\lform{M}$ is labelled by a free variable, then $B(\alpha) = \epsilon$.}
\end{shadowbox}}
\end{center}
\end{table}

\begin{example}\label{eg:traversal}
There are three maximal traversals over the long form defined in Example~\ref{eg:longform1} (Figure~\ref{fig:lform-eg1}) as follows: 
%\lo{TODO: insert pointers}
\[
\begin{array}{l}
\Pstr[0.65cm]{
(n1){\blambda z_1} \sdot 
(n2-n1){\phi} \sdot 
(n3-n2,30:1){\blambda x z} \sdot 
(n4-n3){x} \sdot 
(n5-n4){\blambda} \sdot 
(n6-n3){z}
} 
\\
\Pstr[0.65cm]{
(n1){\blambda z_1} \sdot 
(n2-n1){\phi} \sdot 
(n3-n2,30:2){\blambda} \sdot 
(n4){@} \sdot 
(n5-n4){\blambda y} \sdot 
(n6-n5){y} \sdot 
(n7-n4){\blambda} \sdot 
(n8-n1,40){a}
}
\\
\Pstr[0.65cm]{
(n1){\blambda z_1} \sdot 
(n2-n1){\phi} \sdot 
(n3-n2,30:3){\blambda} \sdot 
(n4-n1){z_1}
}
\end{array}
\]
% \ar@/_1pc/@{.>}[lllu]
\end{example}

\begin{lemma}
\label{lem:pview-path}
Let $t$ be a traversal over $\lform{M} = (T_M, B_M, \ell_M)$. Then $t$ is a well-defined justified sequence over $\ArExp\lform{M}$. Further
\begin{enumerate} 
%\item $t$ is a justified sequence over the arena $\ArExp\lform{M}$. 
\item The sequence underlying $\pview{t}$ is a path in the tree $T_M$, and the P-view determined by this path (Remark~\ref{rem:pview-path}) is exactly $\pview{t}$.
\item If $t$ is maximal then the last node of $t$ is labelled by a variable of ground type.
\item If $M$ is $\beta$-normal then $t$ is a P-view over $\ArExp\lform{M}$ (i.e.~$t = \pview{t}$). %, and the underlying sequence of $t$ is a path in the tree $T_M$.
\end{enumerate}
\end{lemma}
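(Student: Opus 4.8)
The plan is to prove the whole lemma by a single induction on the rules of Table~\ref{tab:traversals} that generate $\trmng{M}$, establishing simultaneously that $t$ is a well-defined justified sequence over $\ArExp\lform{M}$ and statement~1; these must be done together because the pointer introduced by the (Lam) rule is only well-defined once its P-view is known to be a path. First I would record the basic parity fact: by (Label) the labelling $\ell_M$ sends even-length nodes to lambdas and odd-length nodes to non-lambdas, and $\lambda_{\ArExp\lform{M}}(\alpha) = \OO$ iff $|\alpha|$ is even; hence the strict alternation between lambda- and non-lambda-labelled nodes forced by the rules is exactly the $\OO/\PP$ alternation required of a justified sequence, the (Root) move $\epsilon$ being an $\OO$-move in the odd position $1$.

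For the inductive step of statement~1 I would read off each rule against the clauses of the P-view definition. In (App), (FVar) and (BVar) the new node is a lambda ($\OO$-move) whose justifier is, respectively, the immediately preceding $@$, the immediately preceding variable, or the earlier node $n$; in each case the $\OO$-move clause gives $\pview{t'\sdot\blambda\overline\eta}=\pview{t''}\sdot\blambda\overline\eta$, where $t''$ is the prefix ending at the justifier, and by induction $\pview{t''}$ is a path ending there; since $\blambda\overline\eta$ is a child of that node in $T_M$, the extension is again a path, and the new $\OO$-move points to the preceding move of the P-view exactly as Remark~\ref{rem:pview-path} prescribes. In (Lam) the new node $n$ is a $\PP$-move, so $\pview{t\sdot\blambda\overline\xi\sdot n}=\pview{t\sdot\blambda\overline\xi}\sdot n$; as $n$ is the unique child of the last node of $\pview{t\sdot\blambda\overline\xi}$, the result is a path whose pointer from $n$ is the canonical one ($n$ initial if $n=@$, else justified by its binder $B_M(n)$). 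The only delicate point is well-definedness of that pointer when $n$ is a variable: because $B_M(n)<n$ is an ancestor of $n$, it lies on the path $\pview{t\sdot\blambda\overline\xi\sdot n}$ and so occurs at some P-view position $i$; since the P-view is literally the restriction of the traversal to a set of positions, that occurrence comes from a \emph{unique} position $j$ of $t\sdot\blambda\overline\xi$, and taking $j$ as the justifier of $n$ is legitimate since the node there is $B_M(n)$, so $B_M(n)\vdash_{\ArExp\lform{M}}n$.

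Statement~3 is then immediate: if $M$ is $\beta$-normal, $\lform{M}$ has no $@$, so neither (App) (no $@$-node) nor (BVar) (no variable hereditarily justified by an $@$) can fire, leaving only (Root), (Lam) and (FVar). Each (Lam) step appends a $\PP$-move, for which the P-view extends without discarding anything, and each (FVar) step appends an $\OO$-move whose justifier is the immediately preceding variable, so the $\OO$-move clause again discards nothing; hence $\pview{t}=t$ by induction.

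Statement~2 carries the real work. A traversal ending in a lambda extends by (Lam) and one ending in $@$ extends by (App), so a maximal traversal ends in a variable $\xi$, and I must show $\arity(\xi)=0$. If $\xi$ is hereditarily justified by $\epsilon$, then (FVar) supplies an extension precisely when $\arity(\xi)>0$, forcing $\arity(\xi)=0$. The hard case is $\xi$ hereditarily justified by an $@$, where I claim (BVar) \emph{always} applies, contradicting maximality. Since $\xi$ is not free, its binder $\blambda\overline\xi$ is not the root and has an immediate predecessor $n$ in $t$, necessarily a non-lambda; if $n=@$ then $\blambda\overline\xi$ was created by (App) as the first child of $@^{\widehat{C}}$, so it binds $\arity(C)$ variables, $i\le \arity(C)$, and the $(i+1)$-st child of $@$ exists, so (BVar).1 fires. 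If $n$ is a variable, (BVar).2 requires $n$ to have an $i$-th child, i.e.\ $\arity(n)\ge i$, and \emph{this is the main obstacle}; indeed it is already needed to make the (BVar).2 rule well-defined in the base induction. I would discharge it by carrying a type-matching invariant: whenever a lambda that is hereditarily justified by an $@$ is immediately preceded by a variable $n$, that lambda and $n$ have the \emph{same} type. Granting this, $\arity(n)=\arity(\blambda\overline\xi)\ge i$ and (BVar).2 fires, as required. The invariant is proved by the same induction using the child-type facts of Lemma~\ref{lem:bindingtree-char}: a lambda preceded by a variable and hereditarily justified by $@$ can only have been introduced by (BVar) (not (FVar), which would make it hereditarily justified by $\epsilon$); in the (BVar).1 origin the $i'$-th formal parameter $\xi'_{i'}=n$ and the $(i'+1)$-st child of $@$ manifestly share the argument type $C_{i'}$, and in the (BVar).2 origin the equality of types is inherited from the predecessor lambda by the induction hypothesis. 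This is exactly the caller/callee (formal/actual parameter) type correspondence of Remark~\ref{rem:traversals}, which is precisely the bookkeeping encoded by $\ArSuc\lform{M}$ and $\hat\ell$.
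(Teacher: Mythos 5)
Your proof is correct and follows essentially the same route as the paper's: a single induction on the length (equivalently, the derivation) of the traversal for well-definedness and part~1, an appeal to the labelling axioms of Lemma~\ref{lem:bindingtree-char} for part~2, and for part~3 the observation that without $@$ only (Root), (FVar) and (Lam) can fire, so the traversal is itself a path and hence a P-view. The paper compresses all of this into ``a straightforward induction\dots appeal to the labelling axioms''; your explicit caller/callee type-matching invariant (a lambda hereditarily justified by $@$ and immediately preceded by a variable $n$ has the same type as $n$) is precisely the content hidden in that phrase, and is needed both for the well-definedness of (BVar).2 and to show it always fires, so spelling it out is a genuine improvement in rigour rather than a different approach.
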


\begin{proof}
(1) and (2) can be proved by a straightforward induction on the length of $t$; for (2), we appeal to the labelling axioms of Lemma~\ref{lem:bindingtree-char}.
For (3), since $\lform{M}$ does not have any $@$-labelled nodes, the traversal $t$ is constructed using only the rules (Root), (FVar) and (Lam). 
Thus $t$ is a path, which, thanks to Remark~\ref{rem:pview-path}, determines a P-view.
\end{proof}

Let $t \in \trmng{M}$ and $\Theta \subseteq \moves{\ArExp\lform{M}}$, define $\proj{t}{\Theta}$ to be the (justified) subsequence of $t$ consisting of nodes that are \changed[lo]{hereditarily justified} by some occurrence of an element of $\Theta$ in $t$.
\changed[lo]{If $\Theta$ is a set of initial moves, then} $\proj{\trmng{M}}{\Theta} := \makeset{\proj{t}{\Theta} \mid t \in \trmng{M}}$ is a well-defined set of justified sequences over $\ArExp\lform{M}$ (see e.g.~\citep[Lemma 2.6]{McCusker00}). 
%\lo{Explain / prove the preceding. Check McCusker's paper / thesis for proof.}
Let $\alpha \in \moves{\ArExp\lform{M}}$, we write $\proj{\trmng{M}}{(\alpha, \Theta)}$ to mean $\proj{\trmng{M}}{(\makeset{\alpha} \cup \Theta)}$.
% where $\proj{t}{\Theta}$ means the justified subsequence of $t$ consisting of \changed[lo]{moves that are from $\Gamma$} and moves that are hereditarily justified by an %\footnote{In the situations that will arise in the sequel, each $\gamma \in \Theta$ will occur at most once in $t$.} 
% \changed[lo]{occurrence} \lo{Check whether we need ``occurrence''.} of some $\gamma \in \Theta$. 
% \changed[lo]{By convention, nodes in the sequence $\proj{t}{\Theta}$ that are from $\Gamma$ point to the opening node.}
% In the following, in case $\Theta$ is $\makeset{\epsilon}$ and $\makeset{\epsilon; @}$ respectively (where $\epsilon$ refers to the root node of $\lform{M}$), we shall write $\proj{t}{\Theta}$ as $\proj{t}{\epsilon}$ and $\proj{t}{(\epsilon, \Theta)}$ respectively.
%\changed[lo]{By abuse of notation, we shall identify $\trmng{\lform{M}}$ with $\trmng{M}$.}

The rest of the section is about the following theorem and its proof. 
Given a term-in-context $\seq{\Gamma}{M : A}$ where $\Gamma = x_1 : C_1, \cdots, x_n : C_n$, recall that $\hat{\ell} : \ArExp\lform{M} \to \ArSuc\lform{M}$ is a direct arena morphism, and $\pview{\mng{\seq{\Gamma}{M : A}}}$ is a set of justified sequences over the arena $\mng{C_1 \to \cdots \to C_n \to A}$ and hence over the succinct arena $\ArSuc\lform{M}$ (the former is a subarena of the latter).

\begin{theorem}[Strong Bijection]\label{thm:corr}
Let $\seq{\Gamma}{M : A}$ be a term-in-context, with long form $\lform{M} = (T, B, \ell)$. 
The extension
\(
{\hat{\ell}^\ast} : \proj{\trmng{M}}{\epsilon}\; 
\stackrel{\sim}{\longrightarrow} 
\; \pview{\mng{\seq{\Gamma}{M : A}}}
\)
induced by the direct arena morphism $\hat{\ell}$ is a strong bijection.
\end{theorem}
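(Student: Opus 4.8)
The plan is to reduce the statement to an ordinary bijection and then establish that bijection by induction on length, exploiting that both sides are prefix-closed sets of alternating justified sequences. The \emph{strong} part of the conclusion comes for free: as recorded immediately after the definition of $\calI^\ast$, for any justified sequence $s$ the sequences $s$ and $\hat\ell^\ast(s)$ have the same length and the same pointer function, hence are isomorphic as justified sequences. So it suffices to show that $\hat\ell^\ast$ restricts to an injection of $\proj{\trmng{M}}{\epsilon}$ whose image is exactly $\pview{\mng{\seq{\Gamma}{M : A}}}$. Both sets are prefix-closed: $\proj{\trmng{M}}{\epsilon}$ by prefix-closure of $\trmng{M}$ (immediate from the rules, since traversals are built from $\epsilon$ by appending) together with the projection lemma (\citep[Lemma 2.6]{McCusker00}), and $\pview{\mng{\seq{\Gamma}{M : A}}}$ because it is the set of even-length P-views of an innocent strategy. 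Since $\hat\ell$ is a direct arena morphism it preserves length, polarity and pointer structure, so, following the criterion in the marginal note on injectivity, I would prove that $\hat\ell^\ast$ sets up a bijective correspondence between the one-move extensions available on each side; this yields injectivity, surjectivity and well-definedness simultaneously by induction on length.

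The base case matches the root $\epsilon$ (the O-move at level $0$) with the initial move of the revealed arena, using $\star\vdash_{\ArExp\lform{M}}\epsilon$ and $\hat\ell(\epsilon)=(\epsilon,\epsilon)$. For the extension of a projected traversal $s$ by an \emph{O-move} (a $\blambda$-node), observe that before such a step $s$ has even length, so after projection its last move is a \emph{P-move}, i.e.\ a variable-node hereditarily justified by $\epsilon$. Being hereditarily justified by the opening node, the only applicable rule is (FVar), which permits branching to each child $\blambda\overline\eta$ for $1\le j\le\arity(\xi)$. This matches exactly O's freedom in an innocent strategy: after an even-length P-view every enabled O-move yields a legal P-view. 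The correspondence is bijective because $\hat\ell$ preserves and reflects the enabling relation and, by the labelling axioms of Lemma~\ref{lem:bindingtree-char}, the children of $\xi$ are in arity-preserving bijection with the moves enabled by $\hat\ell(\xi)$.

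The extension of $s$ by a \emph{P-move} corresponds, on the game side, to P's \emph{deterministic} innocent response to the last O-move. On the traversal side, after the $\blambda$-node rule (Lam) forces the \emph{unique} child $n$. If $n$ is a variable hereditarily justified by $\epsilon$ it survives the projection and is, via $\hat\ell$, the required P-response. Otherwise $n$ is an $@$, or a variable hereditarily justified by an $@$, and the traversal descends into the ``hidden'' part of $\ArExp\lform{M}$; there the rules (App) and (BVar) are \emph{deterministic} (each forces a unique next node and a unique justifier), so the traversal proceeds uniquely until it re-emerges at a variable hereditarily justified by $\epsilon$, and that variable is the projected P-response. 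The key lemma to prove here is that this hidden computation \emph{terminates and re-emerges}, and that the $\hat\ell$-image of the emergent variable is precisely the innocent strategy's response; this is the game-semantic content that the (BVar) rules implement the switching between caller and callee, i.e.\ that traversal-through-$@$-nodes computes \emph{composition of innocent strategies} and that projection to $\epsilon$ computes the attendant \emph{hiding}. I would prove it by structural induction on $\lform{M}$ following the three clauses of its definition (application headed by a variable, application headed by an abstraction, abstraction), matching the redex clause against the composition-with-hiding formula for innocent strategies, as in the definability argument of \citep{HO00}. \textbf{This step is the main obstacle}: everything else is bookkeeping about pointers and polarities, whereas here one must show that the local, substitution-free traversal rules reproduce the global effect of $\beta$-reduction at the level of plays.

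To conclude, determinism of (Lam), (App) and (BVar) gives that each O-move of a projected traversal has at most one P-continuation, and the re-emergence lemma gives that it has exactly one; this yields both well-definedness of $\hat\ell^\ast$ into $\pview{\mng{\seq{\Gamma}{M : A}}}$ (the image of each projected traversal is a genuine P-view of the strategy, since its extensions mirror those of the strategy, and by Remark~\ref{rem:pview-path} the $\hat\ell$-image of a path is a P-view) and injectivity via the unique-extension criterion of the injectivity note. Surjectivity follows symmetrically: every even-length P-view of $\mng{\seq{\Gamma}{M : A}}$ is built by alternately letting O play an enabled move and letting P respond innocently, and the two preceding paragraphs show each such step is realised by a unique one-move extension of a projected traversal. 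Hence $\hat\ell^\ast$ is a bijection, and by the first paragraph it is strong.
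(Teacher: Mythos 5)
Your reduction is sound as far as it goes, and it matches the paper's guiding intuition exactly: strongness is automatic from the definition of the extension of a direct arena morphism; O-extensions of a projected traversal correspond to (FVar); and P-extensions correspond to a deterministic excursion through the $@$-justified (``hidden'') part of the arena, ending at a variable hereditarily justified by $\epsilon$. But the statement you flag as ``the main obstacle'' --- that this hidden excursion terminates, re-emerges, and re-emerges at precisely the move prescribed by the innocent strategy $\mng{\seq{\Gamma}{M : A}}$ --- is not a step you may defer: it \emph{is} the theorem. Everything before it is, as you say yourself, bookkeeping, so the proposal as written reduces the theorem to itself. Moreover, the one-sentence plan you give for the missing lemma (``structural induction on $\lform{M}$ \ldots\ as in the definability argument of \citep{HO00}'') fails in its naive form: for a redex $\lform{M} = \lterm{}{@ \; (\lterm{\overline\xi}{P}) \; (\lterm{\overline{\eta_1}}{Q_1}) \cdots (\lterm{\overline{\eta_n}}{Q_n})}$, the hidden excursion interleaves visits to nodes of $P$ and of the $Q_i$ and is not a traversal of any single subterm, so the induction hypothesis for the subterms cannot be applied to it directly; and the denotation offers nothing to compare the excursion against until it has been exhibited as an interaction of the sub-strategies.

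The paper closes exactly this gap by strengthening the statement and arguing by mutual induction on term size (Lemma~\ref{lem:corr}): clause (i) is the theorem itself, while clause (ii) asserts, for redexes, a bijection between $\proj{\trmng{M}}{(\epsilon, \Theta)}$ (projection to the root \emph{together with} the children of the top-level $@$) and the set of P-visible interaction sequences $\intseqpv{\anglebra{p, q_1, \cdots, q_n}, \pview{\mathit{ev}}}$, where $p, q_i$ are the P-view sets of the sub-strategies and $\mathit{ev}$ is the evaluation copycat. The Component Projection Lemma (Lemma~\ref{lem:pviewproj}) supplies the decomposition your sketch lacks: it splits a traversal of the redex into components $\comp{t}{\theta}$, each a traversal of one subterm, so that induction hypothesis (i) applies componentwise, and the traversal rules (Lam), (BVar), (FVar) are then matched case by case against the interaction rules (IS2)-$\sigma$, (IS2)-$\tau$, (IS4). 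Your ``re-emergence lemma'' is never proved in isolation at all: termination and correctness of the hidden excursion are inherited from the game-semantic fact $\proj{\intseqpv{\sigma, \tau}}{(A, C)} = \pview{\sigma ; \tau}$ once the traversal--interaction bijection (ii) is in place, and hiding on both sides of (ii) yields (i). So: right architecture, correct diagnosis of where the difficulty sits, but the difficulty itself is untouched, and it is the entire content of the result.
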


In general $\hat{\ell}$ is neither injective nor surjective;
nevertheless $\hat{\ell}^\ast$ defines a bijection from $\proj{\trmng{M}}{\epsilon}$ to $\pview{\mng{\seq{\Gamma}{M : A}}}$, which is \emph{strong},
in the sense that for each $t \in \proj{\trmng{M}}{\epsilon}$, the two justified sequences $t$ and $\hat{\ell}^\ast(t)$ are isomorphic.

%Because of the strong correspondence given by $\cal C$, (suitably projected) traversals $t \in \proj{\trmng{M}}{\epsilon}$ may be regarded as representations of the P-view ${\cal C}(t)$ in $\pview{\mng{\seq{\Gamma}{M : A}}}$. %: moves in the arenas.

% We introduce \emph{concrete names} for moves of an arena. 
% Since each move of an arena is the root of a unique subarena, we shall do so by naming subarenas. 
% Fix an arena $A$. 
% %A subarena of $A$ is said to be \emph{negative} if it is in the lefthand scope of an odd number of the arrow constructor $\funsp$; otherwise the subarena is \emph{positive}. 
% Subarenas rooted at levels $0, 2, 4$, etc., are said to be \emph{positive}; 
% subarenas rooted at levels $1, 3, 5$, etc., are said to be \emph{negative}.  

% Assume a naming scheme that gives each negative subarena $C$ of $A$ a variable name $\rtneg_A(C)$. 
% Then, for each positive subarena of $A$, $(C_1, \cdots, C_n, o)$, if $\rtneg_A(C_i) = x_i$ for each $i$, we define $\rtpos_A(C_1, \cdots, C_n, o) := \blambda x_1 \cdots x_n$. 
% Thus (the roots of) positive subarenas are named by lambdas; and (the roots of) negative subarenas are named by variables, which are bound by the lambda of their parent node.  

\begin{example}\label{eg:theorem}
%\lo{TO CHECK}
To illustrate Theorem~\ref{thm:corr}, consider the term-in-context of Example~\ref{eg:longform1}, $\seq{\Gamma}{M : (o, o)}$, where $\Gamma = \makeset{\phi : (((o, o), o, o), o, o, o), \; a : o}$ and
\[\lform{M} = \lterm{z_1^o}{\phi \, (
\lterm{x^{(o, o)} z^o}{x \, (\lterm{}{z})}
) \, (\lterm{}{@ \, (\lterm{y^o}{y}) \, (\lterm{}{a})}) \, 
(\lterm{}{ z_1 })}
\]
The interpretaton, $\mng{\seq{\Gamma}{M : (o, o)}}$, is an innocent strategy over the arena $\Ar(B)$ where $B = ((((o, o), o, o), o, o, o), o, o, o)$.
Take a $\Lam$-representation of $\Ar(B)$ as follows:
\[
\xymatrix@R-.5cm@C-.5cm{
((((o, & o), & o, & o), & o, & o, & o), & o, & o, & o)\\
 &&& &&& &&& \blambda \phi a z_1 \ar@{-}[llld] \ar@{-}[lld] \ar@{-}[ld] \\
 &&& &&& \phi \ar@{-}[llld] \ar@{-}[lld] \ar@{-}[ld]& a & z_1 & \\
 &&& \blambda x z \ar@{-}[lld] \ar@{-}[ld] & \blambda^{12} & \blambda^{13} & &&& \\
 & x \ar@{-}[ld] & z & &&& &&& \\\
\blambda^{1111} &&& &&& &&&
}
\]
Note that when restricted to $\proj{\trmng{M}}{\epsilon}$, the image of $\hat{\ell}^\ast$ consists of justified sequences over $\Ar(B)$, a subarena of $\ArSuc\lform{M}$.
Thus maximal justified sequences in $\hat{\ell}(\proj{\trmng{{M}}}{\epsilon}) $ are as follows (omitting the pointers) %(eliding the subscripts of the dummy lambdas $\blambda$):
\[
\begin{array}{l}
	\blambda \phi a z_1 \sdot \phi \sdot \blambda x z \sdot x \sdot \blambda^{1111} \sdot z
	\\
\blambda \phi a z_1 \sdot \phi \sdot \blambda^{12} \sdot a
\\
\blambda \phi a z_1 \sdot \phi \sdot \blambda^{13} \sdot z_1
\end{array}
\]
coninciding with the maximal P-views in the strategy denotation $\mng{\seq{\Gamma}{M : (o, o)}}$. 
\end{example}

\subsection{Proof of the strong bijection theorem}

We first state and prove a useful lemma. 
Let $t$ be a traversal over the long form
\[{\lform{M} = \lterm{}{@ \; (\lterm{\overline\xi}{P}) \; (\lterm{\overline{\eta_1}}{Q_1}) \cdots (\lterm{\overline{\eta_n}}{Q_n})} : o }\]
%where $P, Q_1, \cdots, Q_n$ are of ground type. 
Let $\theta_1, \cdots, \theta_l$ be a list of all the occurrences of the nodes, ${\blambda \overline\xi}, \blambda \overline{\eta_{1}}, \cdots, \blambda \overline{\eta_{n}}$, in $t$. Then, except for the first two nodes (i.e.~$\blambda$ and $@$) of the traversal $t$, 
every node occurrence $m$ in $t$ belongs to one of the \emph{components}, $\comp{t}{\theta_1}, \cdots, \comp{t}{\theta_l}$, defined as follows: 
\begin{itemize}
\item If $m$ is hereditarily justified by $\theta_i$ then $m$ belongs to $\comp{t}{\theta_i}$.
\item If $m$ is hereditarily justified by an \emph{internal} $@$ (as opposed to the top-level $@$), or by the root $\epsilon$ \changed[lo]{(i.e.~by a free variable)}, 
let $m'$ be the \emph{last} node occurrence in $t$ which precedes $m$ and which is hereditarily justified by $\theta_i$ for some $i$, then $m$ belongs to $\comp{t}{\theta_i}$.
\end{itemize}
Henceforth, by abuse of notation, by $\comp{t}{\theta_i}$ we mean the subsequence of $t$ determined by the set of node occurrences $\comp{t}{\theta_i}$.

\lo{N.B.
Suppose $m$ is justified by $m'$ in $t$. 
It follows from the definition that if $m$ belongs to $\comp{t}{\theta}$ and $m$ is not (an occurrence of) a free variable, then $m'$ also belongs to $\comp{t}{\theta}$. 
Hence, if $m$ is a free variable, then $m$ points to $\theta$, the opening node of $\comp{t}{\theta}$.
}

\lo{N.B. In general, $\comp{t}{\theta_i} \not= \proj{t}{\theta_i}$ (because the former, but not the latter, may contain nodes hereditarily justified by an internal $@$;
similarly $\comp{t}{\theta_i} \not= \proj{t}{(\theta_i, \epsilon)}$.}

\begin{lemma}[Component Projection]\label{lem:pviewproj}
Using the preceding notation, let $t$ be a traversal over the long form
\[{\lform{M} = \lterm{}{@ \; (\lterm{\overline\xi}{P}) \; (\lterm{\overline{\eta_1}}{Q_1}) \cdots (\lterm{\overline{\eta_n}}{Q_n})} : o }\]
If the last node of $t$ \changed[lo]{is a lambda node} which belongs to $\comp{t}{\theta}$ then  
\begin{enumerate}[(i)]
\item \(
\comp{t}{\theta} \; \in \; \trmng{{\theta . M}}
\) 
\item $\pview{t} = \blambda \sdot @ \sdot \pview{\comp{t}{\theta}}$
\end{enumerate}
where $\theta . M := \lterm{\overline \xi}{P}$ (respectively $\lterm{\overline{\eta_{j}}}{Q_{j}}$) in case $\theta$ is an occurrence of $\blambda \overline{\xi}$ (respectively $\blambda \overline{\eta_{j}}$).
\end{lemma}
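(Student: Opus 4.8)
The plan is to prove a statement slightly stronger than (i) and (ii) by induction on the length of $t$, so that the induction goes through at every traversal rule and not only when the last node happens to be a lambda. Concretely, I would show, for every prefix $t'$ of a traversal: (a) for \emph{each} occurrence $\theta$ of a child of the top-level $@$ appearing in $t'$, the subsequence $\comp{t'}{\theta}$ — with the pointers of its free variables (those whose justifier in $t'$ is the root $\blambda$) redirected to its opening node $\theta$ — is a traversal over $\theta.M$; and (b) \emph{whenever} the last node of $t'$ lies in some $\comp{t'}{\theta}$, one has $\pview{t'} = \blambda \sdot @ \sdot \pview{\comp{t'}{\theta}}$. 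Specialising to the case where the last node is a lambda yields (i) and (ii). The two halves are mutually dependent: part (b) is precisely what lets me transfer the pointer-determination conditions of (Lam) and (BVar) between $t'$ and $\comp{t'}{\theta}$, while part (a) guarantees that the local rule invoked inside $\theta.M$ actually applies.

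The base case is $t' = \blambda \sdot @ \sdot \blambda\overline\xi$, the shortest prefix containing a $\theta$, obtained by (Root), (Lam) and (App); here the one component $\comp{t'}{\theta}$ is the single node $\blambda\overline\xi$, a traversal over $\theta.M = \lterm{\overline\xi}{P}$ by (Root), and $\pview{t'} = \blambda \sdot @ \sdot \blambda\overline\xi$ as required. For the inductive step I extend $t'$ to $t' \sdot m$ by one rule. The routine cases are (Lam) and (App): there $m$ is a tree-child of the last node of $t'$, so — whether $m$ is a bound variable (whose binder lies, by hypothesis (b), on the P-view path inside the component), a free variable, or an internal $@$ (both initial/root-pointing, hence absorbed by the \emph{last preceding node hereditarily justified by some $\theta_i$} clause) — it joins the same $\comp{t'}{\theta}$ as that last node, and the identical rule applied inside $\theta.M$ produces $\comp{t'}{\theta} \sdot m$. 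The P-view equation propagates because appending the P-move of (Lam) leaves the prefix P-view intact, whereas appending the non-initial O-move of (App) truncates both $\pview{t'}$ and $\pview{\comp{t'}{\theta}}$ at the justifier, which hypothesis (b) identifies on the two sides.

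The delicate case is (BVar), where control switches between caller and callee. The new lambda $\blambda\overline\eta$ is justified by the node $n$ immediately preceding the binder $\blambda\overline\xi$ of $\xi_i$ in $t'$, and I split on whether $n$ is the top-level $@$. If it is (necessarily via (BVar).1), then $\blambda\overline\eta$ is its $(i+1)$-th child, i.e.\ a \emph{fresh} occurrence $\theta'$ of one of the argument roots $\blambda\overline{\eta_1}, \dots, \blambda\overline{\eta_n}$; this opens a brand-new singleton component $\comp{t' \sdot \theta'}{\theta'} = \theta'$, a traversal over $\theta'.M$ by (Root), and since $\theta'$ is justified by the top-level $@$ at position $2$ we get $\pview{t' \sdot \theta'} = \blambda \sdot @ \sdot \theta'$, confirming (b) for the new component (the old components are unchanged, so (a) persists). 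Otherwise $n$ is an internal $@$ or a variable and already lies in some $\comp{t'}{\theta}$, into which $\blambda\overline\eta$ then falls, and the very same instance of (BVar) is available inside $\theta.M$; the agreement of the pointer's target between $t'$ and $\comp{t'}{\theta}$ is exactly what hypothesis (b) supplies, since it equates $\pview{\comp{t'}{\theta}}$ with the tail of $\pview{t'}$ past $\blambda \sdot @$. The rule (FVar) is handled the same way, using the observation recorded before the lemma that a variable free in $\theta.M$ points, inside $\comp{t'}{\theta}$, to the opening node $\theta$.

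I expect the main obstacle to be the bookkeeping of justification pointers across the component boundary. The subsequence $\comp{t'}{\theta}$ is not literally justification-closed inside $t'$, since variables free in $\theta.M$ point in $t'$ to the root $\blambda$ rather than into the component; turning $\comp{t'}{\theta}$ into a genuine justified sequence over $\ArExp\lform{\theta.M}$ therefore requires redirecting exactly those pointers to $\theta$ (the root of $\theta.M$, which is initial there), and then checking that the resulting pointers are precisely those the traversal rules for $\theta.M$ would assign. The crux is verifying that the \emph{local} pointer-determination of (Lam) and (BVar) inside $\theta.M$ coincides with the one \emph{inherited} from $t'$; this is discharged by carrying the P-view decomposition (b) through the induction, so that the side-conditions requiring $\pview{\cdot}$ to be a path in the tree refer, on both sides, to the same path up to the leading $\blambda \sdot @$.
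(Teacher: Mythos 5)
Your overall architecture---an induction on the traversal rules that carries the P-view identity (b) alongside the component claim (a)---is essentially the argument that the paper's block decomposition encodes, and your base case, (Lam), (App), (FVar) and the (BVar).1-at-the-top-level-$@$ cases are handled correctly. The gap is in the residual (BVar) case, which you dispose of with ``the very same instance of (BVar) is available inside $\theta.M$''. That claim is correct only when the triggering variable $\xi_i$ is hereditarily justified by an \emph{internal} $@$, for then the trigger, its binder and $n$ all lie in one component. But when $\xi_i$ is hereditarily justified by the \emph{top-level} $@$ and $n$ is a variable lying in a component different from that of $\xi_i$---the return-of-control case, which is the generic one (it produces moves 7, 11, 13 and 17 of the paper's traversal~(\ref{eq:traversal-intro-1}))---the (BVar) premise $n \sdot \blambda\overline\chi \cdots \chi_i$ is simply not present in $\comp{t'}{\theta_b}$: the trigger and its binder belong to the other component. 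The step would have to be matched inside $\theta_b.M$ by (FVar) (there $n$ is hereditarily justified by the opening node), and since (FVar) extends a traversal only at its \emph{final} node, you would also need that $n$ is the final node of $\comp{t'}{\theta_b}$. Neither the (BVar)-to-(FVar) transfer nor this well-bracketing property appears in your proposal, and invariants (a)+(b) cannot supply them, because (b) constrains only the component containing the last move of $t'$.

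The difficulty is not cosmetic: your strengthened invariant (a) is in fact refuted by the paper's own traversal~(\ref{eq:traversal-intro-1}). Take its prefix of length 17. The 17th move $\blambda$ is created by (BVar).2 from the 16th move $y$ and points to the \emph{4th} move $\phi$, which lies in the component of the 3rd move $\blambda\phi z$; but the last node of that component is the 12th move $x$. So for $\theta = \blambda\phi z$ and this prefix, $\comp{t'}{\theta}$ is $\blambda\phi z \sdot \phi \sdot \blambda x \sdot \phi \sdot \blambda x' \sdot x \sdot \blambda$ with the final $\blambda$ pointing back to the earlier $\phi$; since $\theta.M = N$ is $\beta$-normal, its traversals are exactly the paths from the root (Lemma~\ref{lem:pview-path}(3)), and nothing can follow the ground-type variable $x$ on a path. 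Hence this sequence is not in $\trmng{\theta.M}$, the induction cannot close, and (since the prefix ends in a lambda belonging to that very component) the difficulty is not an artifact of your strengthening. What does survive, and what the proof of Theorem~\ref{thm:corr} actually consumes, is the P-view information: the component of the head subterm is in general a \emph{play with O-backtracking} whose P-view is a traversal, not a traversal itself, so the inductive invariant has to be recast in terms of $\pview{\comp{t'}{\theta}}$ (together with (b)) rather than $\comp{t'}{\theta}$ itself. This is precisely the subtlety that the paper's Type-II block condition---that the returning lambda $l_i$ points at the block's opening variable $n_i$---quietly assumes away; your proposal, by making the induction explicit, runs into it head-on but does not resolve it.
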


\begin{proof}
\begin{asparaenum}[(i)]
\item Let $t$ be a traversal satisfying the premises of the lemma. 
It follows from the rules of Definition~\ref{def:traversal} that $t$ is a prefix of a justified sequence of the following shape:
\begin{equation}
t = \Pstr[0.7cm]{\blambda \sdot (a){@} \;\cdots\; (l-a){{\theta}} \sdot \; B_{1} \; \cdots \; B_{l}}
\label{eq:decomp_t}
\end{equation}
where ${\theta}$ is justified by $@$, and each $B_i$ is a block of nodes of one of two types:
\begin{itemize}
\item[I.] Two-node block $n_{i} \sdot l_{i}$ where the non-lambda node $n_{i}$ and the lambda node $l_{i}$ are hereditarily justified by an %(as opposed to the top-level) 
$@$ that belongs to $\comp{t}{\theta}$. 

\item[II.] 
$\Pstr[0.7cm]{(n){n_i} \; m_1 \cdots m_r \; (l-n){l_i}}$
where the lambda node $l_{i}$ is justified by $n_{i}$, which belongs to $\comp{t}{\theta}$ and is hereditarily justified by $ \theta$ or by $\epsilon$. 
\lo{N.B. In case $l_i$ and $n_i$ are hereditarily justified by $\epsilon$ then $r = 0$.}
%Note that the nodes $m_i$, if any, do not belong to $\comp{t}{\theta}$.
\end{itemize}
It follows that for each $i$, both $n_i$ and $l_i$ are occurrences of nodes from the long form $\lform{\theta . M}$ \emph{qua} subtree of $\lform{M}$.
Now define $B_i'$ to be $n_i \sdot l_i$ for each $i$.
Observe that
\(
\comp{t}{\theta} = \theta \sdot B_1' \, \cdots \, B_l'.
\)
It then follows that $\comp{t}{\theta} \in \trmng{\seq{\Gamma}{\theta . M}}$.

\item Immediate consequence of (\ref{eq:decomp_t}) and
\(
\comp{t}{\theta} = \theta \sdot B_1' \, \cdots \, B_l'.
\)
\end{asparaenum}
\end{proof}

\lo{N.B. Thanks to the convention that free variables that occur in $\comp{t}{\theta}$ point to $\theta$, it follows that $\proj{\comp{t}{\theta}}{\theta} \in \proj{\trmng{\theta . M}}{\epsilon}$. 
%However $\proj{t}{(\theta, \epsilon)}$ does not make sense.
}

%Fix a term-in-context $\seq{\Gamma}{M : A}$ where $\Gamma = x_1 : C_1, \cdots, x_m : C_m$ and $A = (A_1, \cdots, A_n, o)$. 
%Call $R$ the arena / type $(C_1, \cdots, C_m, A_1, \cdots, A_n, o)$. 
%For nodes $n$ in $\lform{M}$ that are hereditarily justified by the root or by a free variable from $\Gamma$, we define its \emph{occurrence}, $\occ(n) \in \natnum^\star$, as follows.

\begin{lemma}
Let $\seq{\Gamma}{M:A}$ be in $\beta$-normal form with long form $\lform{M} = (T, B, \ell)$. 
The extension
$\hat{\ell}^\ast : \proj{\trmng{M}}{\epsilon} \; \stackrel{\sim}{\longrightarrow} \; \pview{\mng{\seq{\Gamma}{M : A}}}$ induced by $\hat{\ell}$ is a strong bijection.
\end{lemma}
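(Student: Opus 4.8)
The plan is to exploit the strong simplification afforded by $\beta$-normality. Since $M$ is $\beta$-normal, its long form $\lform{M}=(T,B,\ell)$ contains no $@$-labelled node; consequently the only initial move of $\ArExp\lform{M}$ is the root $\epsilon$, and the succinct arena collapses to $\ArSuc\lform{M}=\Ar(A)$. As every move of a justified sequence is hereditarily justified by an initial move, every node of every traversal is hereditarily justified by $\epsilon$, so $\proj{t}{\epsilon}=t$ for all $t\in\trmng{M}$ and hence $\proj{\trmng{M}}{\epsilon}=\trmng{M}$. Moreover, because $\hat\ell$ is a direct arena morphism, the induced $\hat\ell^\ast$ automatically sends each $t$ to an isomorphic justified sequence; thus \emph{strongness is free}, and the entire content of the lemma is that $\hat\ell^\ast$ restricts to a \emph{bijection} from $\trmng{M}$ onto $\pview{\mng{\seq{\Gamma}{M:A}}}$.

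First I would identify the source set with paths in the tree. By Lemma~\ref{lem:pview-path}(3) each traversal is a P-view ($t=\pview{t}$), and by part~(1) its underlying sequence is a path in $T$; so $\trmng{M}\subseteq\mathit{Path}(T)$ under the canonical P-view reading of a path (Remark~\ref{rem:pview-path}). For the converse I would induct on the length of a path $\alpha_1\sdot\cdots\sdot\alpha_k$: if $\alpha_k$ is a $\blambda$-node, rule (Lam) extends the traversal by its unique child; if $\alpha_k$ is a variable node then, no $@$ being present, $\alpha_k$ is hereditarily justified by $\epsilon$, so rule (FVar) fires and extends the traversal by \emph{each} child $\blambda\overline\eta$ of $\alpha_k$, pointing back to $\alpha_k$ exactly as the tree branches. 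This matches tree-branching at variables with the single successor at $\blambda$-nodes, giving $\mathit{Path}(T)\subseteq\trmng{M}$ and hence equality as sets of justified sequences over $\ArExp\lform{M}$.

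It then remains to show that $\hat\ell^\ast$ carries these paths bijectively onto $\pview{\mng{\seq{\Gamma}{M:A}}}$, which I would prove by structural induction on $M$ using the head-variable shape of its long form. Absorbing the leading abstractions by currying in the cartesian closed category $\mathbb I$, it suffices to treat $\lform{M}=\lterm{\overline z}{x\,\lform{N_1}\cdots\lform{N_m}\,\lform{z_1}\cdots\lform{z_n}}$, the long form of a variable-headed application. Every traversal begins $\blambda\overline z\sdot x$ (rules (Root) and (Lam)); then (FVar) branches to the children of $x$, whose images under $\hat\ell$ are precisely the moves of $\Ar(A)$ enabled by the move $\hat\ell(x)$ that the strategy plays in response to the initial question---the projection move if $x$ is some $z_i$, or the $\Gamma$-move if $x$ is free. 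I would check that $\hat\ell$ restricts to a direct arena morphism from $\ArExp\lform{N_j}$ onto the sub-arena of $\Ar(A)$ hanging below the $j$-th such move, so that the induction hypothesis applied to each $N_j$ supplies the bijection on the corresponding sub-family of P-views; assembling these along the head move yields the bijection for $M$. Injectivity of the assembled map, despite $\hat\ell$ itself being neither injective nor surjective, follows because within a single P-view the justification pointers determine the branch uniquely, so each extension $p\sdot n$ of an image lifts to a unique traversal extension (the reconstruction criterion recorded after the definition of strong bijection).

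The main obstacle is the bookkeeping of justification pointers and arena indices across $\hat\ell^\ast$: one must verify that the $i$-th child of $x$ in $T$ lands on the correctly-indexed child of $\hat\ell(x)$ in $\Ar(A)$, handle uniformly the case split between a bound head variable $x=z_i$ and a free head variable $x\in\FV(M)$, and confirm that the dummy-lambda $\eta$-expansions $\lform{z_i}$ introduce no spurious P-views. The Kierstead-style phenomenon---distinct nodes of $\ArExp\lform{M}$ collapsing to a single node of $\Ar(A)$---is exactly what makes injectivity non-trivial, and its resolution via pointer-reconstruction inside the P-view is the crux of the argument.
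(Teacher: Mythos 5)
Your proposal is correct and takes essentially the same route as the paper: after noting that $\beta$-normality gives $\proj{\trmng{M}}{\epsilon} = \trmng{M}$, that $\ArSuc\lform{M} = \Ar(A)$, and that strongness is automatic because $\hat\ell$ is a direct arena morphism, you argue by structural induction on the head-variable form of the long form, decomposing each traversal as $\blambda\overline{z} \sdot x \sdot t$ and each P-view along the corresponding head move, and applying the induction hypothesis to the argument subterms for both injectivity and surjectivity---which is precisely the paper's proof. The only cosmetic difference is your intermediate identification of $\trmng{M}$ with $\mathit{Path}(T)$ via Lemma~\ref{lem:pview-path}, which the paper does not need to invoke explicitly.
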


\begin{proof}
Take the term-in-context $\seq{\Gamma}{M : o}$ where $\Gamma = x_1 : A_1, \cdots, x_n : A_n$, $A_i = (B_1, \cdots, B_m, o)$, 
%each $B_j = (C_{i1}, \cdots, C_{i r_i}, o)$, 
and $M = x_i \, (\lterm{\overline{y_1}}{P_1}) \cdots (\lterm{\overline{y_m}}{P_m})$. 
We prove by induction on the size of $M$.
Observe that, since $M$ is $\beta$-normal, $\proj{\trmng{M}}{\epsilon} = \trmng{M}$.
First we show that the map $\hat\ell^\ast$ is injective. 
Let $\blambda \sdot x_i \sdot t, \blambda \sdot x_i \sdot t' \in \trmng{M}$ such that $\hat\ell^\ast(\blambda \sdot x_i \sdot t) = \hat\ell^\ast(\blambda \sdot x_i \sdot t')$, which implies that $\hat\ell^\ast(t) = \hat\ell^\ast(t')$, and $t, t' \in \trmng{\lterm{\overline{y_j}}{P_j}}$ for some $j \in [m]$. 
Since $\hat\ell^\ast : \trmng{\lterm{\overline{y_j}}{P_j}} \; \stackrel{\sim}{\longrightarrow} \; \pview{\mng{\seq{\Gamma}{\lterm{\overline{y_j}}{P_j} : B_j}}}$ is injective, we have $t = t'$ and hence $\blambda \sdot x_i \sdot t = \blambda \sdot x_i \sdot t'$ as desired.

For surjectivity of $\hat\ell^\ast$, take a P-view $p \in \pview{\mng{\seq{\Gamma}{M : o}}}$.
Notice that $p$ is a justified sequence over $\Ar(A_1, \cdots, A_n, o)$.
Since the head variable of $M$ is $x_i$, we have $p = \epsilon \sdot i \sdot (p' \uparrow i)$ and $p' \in \pview{\mng{\seq{\Gamma}{\lterm{\overline{y_j}}{P_j} : B_j}}}$ for some $j \in [m]$. (Given a sequence of nodes, $p = \alpha_1, \cdots, \alpha_n$, we write $p \uparrow i$ for the sequence $i \sdot \alpha_1, \cdots, i \sdot \alpha_n$.)
By the induction hypothesis, there exists $t \in \trmng{\lterm{\overline{y_j}}{P_j}}$ such that $\hat\ell^\ast(t) = p'$.
Thus we have $\blambda \sdot x_i \sdot t \in \trmng{M}$, and $\hat\ell^\ast(\blambda \sdot x_i \sdot t) = p$ as desired.
%\changed[lo]{Consider the proof of the definability theorem. Observe that, since $M$ is in $\beta$-normal form, the maximal P-views in $\mng{\seq{\Gamma}{M:A}}$ are in strong bijection with the P-views determined by the maximal paths of $\lform{M}$ (\emph{qua} $\LamAt$-labelled binding tree), and which, in turn, are in strong bijection with the maximal traversals over $\lform{M}$ (because there are no occurrences of $@$ in $\lform{M}$).}
\end{proof}

\begin{corollary}[Variable] \label{cor:var-corrfinite}
Let $\lform{\phi} = (T, B, \ell)$ where $\phi$ is a variable. 
The extension
$\hat{\ell}^\ast : \proj{\trmng{\phi}}{\epsilon} \; \stackrel{\sim}{\longrightarrow} \; \pview{\mng{\seq{\Gamma}{\phi : A}}}$ induced by $\hat{\ell}$ is a strong bijection.
\end{corollary}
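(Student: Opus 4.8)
The plan is to observe that a variable is already a $\beta$-normal term, so the Corollary is an immediate instance of the preceding Lemma (the $\beta$-normal case of Theorem~\ref{thm:corr}) with $M := \phi$. First I would record that, writing $A = (A_1, \cdots, A_k, o)$, the long form of the variable $\phi$ is
\[
\lform{\phi} \;=\; \lterm{z_1^{A_1} \cdots z_k^{A_k}}{\phi \, \lform{z_1} \cdots \lform{z_k}}
\]
(with $k = 0$ giving $\lform{\phi} = \lterm{}{\phi}$), obtained by the long-form rule for an application headed by a variable with no proper arguments. In particular $\lform{\phi}$ contains no $@$-labelled node, so $\ArSuc\lform{\phi} = \Ar(A)$ and $\proj{\trmng{\phi}}{\epsilon} = \trmng{\phi}$; thus the projection plays no role here.

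Since $\phi$ has exactly the shape $x_i \, (\lterm{\overline{y_1}}{P_1}) \cdots (\lterm{\overline{y_m}}{P_m})$ treated in the preceding Lemma---the head being the single variable $\phi$, and the ``arguments'' being the $\eta$-expansion witnesses $\lform{z_j}$ of strictly smaller size---the strong bijection for $\phi$ is precisely the base case of the induction on term-size carried out there. Concretely, the maximal traversal over $\lform{\phi}$ begins $\blambda z_1 \cdots z_k \sdot \phi$ and, since $\phi$ is hereditarily justified by $\epsilon$, continues by (FVar) into each child $\lform{z_j}$; the direct arena morphism $\hat{\ell}$ carries this, move for move, onto the (copy-cat) P-views of the variable strategy $\mng{\seq{\Gamma}{\phi : A}}$. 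Injectivity, surjectivity, and the move-for-move (hence isomorphism-of-justified-sequences) property are then inherited verbatim from the Lemma.

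There is essentially no obstacle: the only point requiring a moment's care is to confirm that a variable really does fall under the ``application headed by a variable'' case of the long-form definition (with zero genuine operands) and that the leading $\eta$-abstraction and the operands $\lform{z_j}$ introduce no redex, hence no $@$. Once this is checked, no separate argument is needed---the Corollary is literally the instance $M := \phi$ of the Lemma.
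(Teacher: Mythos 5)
Your proposal is correct and takes essentially the same route as the paper: the paper offers no separate proof for this corollary precisely because a variable is $\beta$-normal, so the statement is literally the instance $M := \phi$ of the preceding lemma, which is your argument. One minor imprecision only: for a higher-type $\phi$ the lemma's application is not just its ``base case''---the lemma's induction still recurses through the $\eta$-expansion arguments $\lform{z_j}$---but since you invoke the lemma as a black box this does not affect the validity of the derivation.
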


%The proof is immediate.

% \begin{proof}
% WLOG assume $\Gamma = \phi : A$. We prove by induction on type $A$. The base case of $A = o$ is straightforward. Suppose $A = (A_1, \cdots, A_n, o)$ where $n \geq 1$. Then $\lform{\phi} = \lterm{x_1 \cdots x_n}{\phi \; \lform{x_1} \cdots \lform{x_n}}$. For each $i$, let $\lform{x_i} = \lterm{{\xi_{i1} \cdots \xi_{i n_i}}}{x_i \, \lform{\xi_{i1}} \cdots \lform{\xi_{i n_i}}}$. On the one hand, we have $t \in \trmng{\seq{\phi : A}{\phi : A}}$ iff there exists $i \in \makeset{1, \cdots, n}$ such that $t = \blambda{\overline{x}}  \sdot \phi \sdot t'$ with $t' = \blambda \overline{\xi_i} \sdot x_i \sdot \cdots \in \trmng{\seq{x_i : A_i}{x_i : A_i}}$, and in $t$, $x_i$ points to $\blambda \overline x$, and $\blambda \overline{\xi_i}$ points to $\phi$. On the other hand, a P-view $p \in \mng{\seq{\phi : A}{\phi : A}}$ iff there exists $i \in \makeset{1, \cdots, n}$ such that $p = m_0 \sdot m_1  \sdot p'$ with $p' = m_2 \sdot m_3 \cdots \in \mng{\seq{x_i : A_i}{x_i : A_i}}$ and in $p$, $m_3$ points to $m_0$, and $m_2$ points to $m_1$. Now by the induction hypothesis, for each $i$, 
% \[
% \proj{\trmng{\seq{x_i : A_i}{x_i : A_i}}}{\epsilon} \; \stackrel{\sim}{\longrightarrow} \; \pview{\mng{\seq{x_i : A_i}{x_i : A_i}}}
% \]
% Thus we have the desired result.
% \end{proof}

\begin{example}\rm
Consider the long form 
\[
\lform{\lterm{\chi}{\chi}} \; = \; \lterm{\chi \Phi \phi}{\chi \; (\lterm{\psi}{\Phi \; (\lterm{y}{\psi (\lterm{}{y})})})\; (\lterm{x}{\phi \; (\lterm{}{x})})}.
\] 
Since there is no occurrence of $@$ in $\lform{\lterm{\chi}{\chi}}$, traversals over it coincide with paths from the root. For example, the traversal $\blambda \chi \Phi \phi \sdot \chi \sdot \blambda \psi \sdot \Phi \sdot \blambda y \sdot \psi \sdot \blambda \sdot y$ (pointers are omitted) represents a P-view in the copycat strategy $\mng{\seq{}{\lterm{\chi}{\chi : A \to A}}}$. % over the arena $A \funsp A$. 
This illustrates the strong bijection of Lemma~\ref{cor:var-corrfinite}.
\end{example}

\lo{Should say what $\proj{(-)}{A}$ means in game semantics.}

\medskip

We recall the notion of interaction sequences and the associated notation from \citep{HO00}.
Let $\sigma : A \mor B$ and $\tau : B \mor C$ be innocent strategies, 
and let us write $\pview{\sigma}$ for the collection of P-views in $\sigma$.
Given a justified sequence $t$ over the triple $(A, B, C)$ of arenas,  \lo{Clarify definition of justified sequences over $(A, B, C)$} let $X$ range over the \emph{components} $(B, C)$ and $(A, B)_{b}$ where $b$ ranges over the occurrences of initial moves of $B$ in $t$; set 
\[
\rho_{X} \; := \;
\left\{
\begin{array}{ll}
\sigma & \hbox{if $X = (A, B)_{b}$}\\
\tau & \hbox{if $X = (B, C)$}
\end{array}
\right.
\]
Similarly we define $\pview{\rho_X}$ to mean $\pview{\sigma}$ or $\pview{\tau}$ depending on what $X$ is.
The set of \emph{interaction sequences between $\sigma$ and $\tau$}, $\intseq{\sigma, \tau}$, consists of justified sequences $t$ over $(A, B, C)$, which are defined by induction over the rules (IS1), (IS2) and (IS3):
\begin{itemize}
\item[(IS1)] $c \in \intseq{\sigma, \tau}$ where $c$ ranges over the initial moves of $C$.
\item[(IS2)] If $t \sdot m \in \intseq{\sigma, \tau}$, and $m$ is a \emph{generalised O-move} of the component $X$ (i.e.~either an O-move of $A \funsp C$ or a move of $B$), and $\pview{\proj{t \sdot m}{X}} \sdot m' \in \rho_{X}$, then $t \sdot m \sdot m' \in \intseq{\sigma, \tau}$.

\item[(IS3)] If $t \sdot m \in \intseq{\sigma, \tau}$, and $m$ is a P-move of $A \funsp C$, and $(\proj{t \sdot m}{(A, C)}) \sdot m'$ is a play of $A \funsp C$, then $t \sdot m \sdot m' \in \intseq{\sigma, \tau}$.
\end{itemize}

\begin{definition}\label{def:pvintseq} \rm
The set of \emph{P-visible interaction sequences between $\sigma$ and $\tau$}, $\intseqpv{\sigma, \tau}$ , consists of justified sequences over $(A, B, C)$, defined by induction over the rules (IS1), (IS2), and (IS4) as follows:

\begin{enumerate}
\item[(IS4)] If $t \sdot m \in \intseqpv{\sigma, \tau}$, and $m$ is a P-move of $A \funsp C$, and $m'$ is an O-move  justified by $m$, then $t \sdot m \sdot m' \in \intseqpv{\sigma, \tau}$.
\end{enumerate}
\end{definition}

It is straightforward to see that $\proj{\intseqpv{\sigma, \tau}}{(A, C)} = \pview{\sigma ; \tau}$.
Note that the definition would still make sense if $\sigma$ and $\tau$ in (IS1), (IS2) and (IS4) are replaced by $\pview{\sigma}$ and $\pview{\tau}$ respectively, and $\rho_X$ replaced by $\pview{\rho_X}$. 
In other words~$\intseqpv{\pview{\sigma}, \pview{\tau}}$ is well-defined, and coincides with $\intseqpv{\sigma, \tau}$.
Thus we have $\proj{\intseqpv{\pview{\sigma}, \pview{\tau}}}{(A, C)} = \pview{\sigma ; \tau}$.

%\subsection{Proof of Theorem~\ref{thm:corr}}

\begin{lemma}\label{lem:corr}
\begin{enumerate}[(i)]
\item Let $\seq{\Gamma}{M : A}$ be a term-in-context with long form $\lform{M} = (T, B, \ell)$. There is a $\hat{\ell}$-induced strong bijection 
\(
\hat{\ell}^\ast \; : \; \proj{\trmng{M}}{\epsilon}\; \stackrel{\sim}{\longrightarrow} \; \pview{\mng{\seq{\Gamma}{M : A}}}.
\)
%between justified sequences in $\proj{\trmng{\seq{\Gamma}{M : o}}}{\Gamma}$ on the one hand, and P-views in the strategy denotation $\mng{\seq{\Gamma}{M : o}}$ on the other. 
%Further the correspondence is \emph{strong} in the sense that for every $t \in \proj{\trmng{M}}{\Gamma}$, the play ${\cal C}(t)$, which is a P-view, is isomorphic to the traversal $t$ with respect to the monoidal (i.e.~concatenation) and pointer structure. 

\item \label{lem:corrfinite}
Suppose \(\seq{\Gamma}{\lform{M} = \lterm{}{@ \; (\lterm{\overline\xi}{P}) \; (\lterm{\overline{\eta_1}}{Q_1}) \cdots (\lterm{\overline{\eta_n}}{Q_n})} : o }\), and $\lform{M} = (T, B, \ell)$.
There is a $\hat{\ell}$-induced %strong 
bijection.
\begin{equation}
{\hat{\ell}^\ast} \; : \; \proj{\trmng{{M}
}}{(\epsilon, \Theta)} \; \stackrel{\sim}{\longrightarrow} \; 
\intseqpv{\anglebra{p, q_1, \cdots, q_n}, \pview{\mathit{ev}}} \label{eq:ICorr1}
\end{equation} 
%where $@^0$ is the the top-level $@$-node of $\lform{M}$, 
%\changed[lo]{$t \in \proj{\trmng{{M : o} }}{(\epsilon, \Theta)}$ just if $t$ is a justified subsequence of a traversal over $\lform{M}$ consisting of nodes that are hereditarily justified by $\epsilon$, or by occurrences of nodes that are  justified by $@^0$ (i.e.~one of $\blambda \overline\xi, \blambda \overline{\eta_1}, \cdots, \blambda \overline{\eta_n}$)},
where $\Theta := \makeset{\alpha \in T \mid \exists i \, . \, \alpha = 1 \sdot i }$, $p = \pview{\mng{\seq{\Gamma}{\lterm{\overline \xi}{P} : (\prod_{i=1}^n B_i) \funsp o}}}$, 
$q_{i} = \pview{\mng{\seq{\Gamma}{\lterm{\overline{\eta_i}}{Q_i} : B_i }}}$ for each $i$, and 
\(\xymatrix{ ((\prod_{i=1}^n B_i) \funsp o) \times \prod_{i=1}^n B_i \ar[r] ^-{\mathit{ev}} & o}\) is the obvious copycat strategy.
\end{enumerate}
\end{lemma}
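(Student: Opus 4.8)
The plan is to prove parts (i) and (ii) by \emph{simultaneous} induction on the size of $M$, with part (ii) supplying the inductive step for part (i) in the case where the body of $\lform{M}$ is headed by $@$. For part (i) I would split on the head of $\lform{M} = \lterm{\overline z}{N}$. If $M$ is a variable, this is Corollary~\ref{cor:var-corrfinite}. If $N$ is headed by a variable $x_i$, the same reasoning as in the $\beta$-normal lemma above applies: a traversal begins $\blambda\overline z \sdot x_i$ and then enters exactly one argument subterm $\lterm{\overline{y_j}}{P_j}$, so $\hat{\ell}^\ast$ factors through the component bijections, which are now obtained by invoking the induction hypothesis (i) on the strictly smaller subterms $\lterm{\overline{y_j}}{P_j}$ (these may contain $@$, but that is immaterial). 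The remaining case, $N$ headed by $@$, is reduced to part (ii) as explained next.

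For the reduction of the $@$-headed case of (i) to (ii), I would use compositionality of the game semantics: the denotation of $\lterm{}{@\,(\lterm{\overline\xi}{P})\,(\lterm{\overline{\eta_1}}{Q_1})\cdots(\lterm{\overline{\eta_n}}{Q_n})}$ is the composite $\anglebra{p, q_1, \cdots, q_n}\,; \mathit{ev}$, with $p, q_i, \mathit{ev}$ as in the statement. Recall the fact recorded just before the lemma, $\proj{\intseqpv{\sigma, \tau}}{(A, C)} = \pview{\sigma\,; \tau}$, together with $\intseqpv{\pview{\sigma}, \pview{\tau}} = \intseqpv{\sigma, \tau}$. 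Since $\hat{\ell}$ sends the nodes hereditarily justified by $\epsilon$ (the revealed moves) to the external arena $A \funsp C$ and the nodes hereditarily justified by the top-level $@$ (the hidden moves) to the middle arena $B$, the traversal projection $\proj{-}{\epsilon}$ corresponds under $\hat{\ell}^\ast$ to the interaction-sequence projection $\proj{-}{(A, C)}$. Composing the bijection of (ii) with these two matched projections then yields the strong bijection $\hat{\ell}^\ast : \proj{\trmng{M}}{\epsilon} \stackrel{\sim}{\longrightarrow} \pview{\mng{\seq{\Gamma}{M : A}}}$ of (i); the outer binder $\blambda\overline z$ and the context $\Gamma$ are absorbed by the opening O-move of the game and a corresponding enlargement of the context, reducing to the ground-type form of (ii).

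The heart of the proof is part (ii). Here I would establish the bijection by induction on the length of the projected traversal, exhibiting a rule-by-rule match between the five traversal rules of Table~\ref{tab:traversals} and the interaction-sequence rules (IS1), (IS2), (IS4). The key structural tool is the Component Projection Lemma~\ref{lem:pviewproj}: for a traversal $t$ over this long form, each component $\comp{t}{\theta_i}$ is itself a traversal over the subterm $\theta_i . M$ (one of $\lterm{\overline\xi}{P}$ or $\lterm{\overline{\eta_i}}{Q_i}$), and $\pview{t} = \blambda \sdot @ \sdot \pview{\comp{t}{\theta}}$. Applying the induction hypothesis (i) to these strictly smaller subterms identifies each component, after $\hat{\ell}^\ast$, with a P-view of the corresponding strategy $p$ or $q_i$; this is exactly the side-condition $\pview{\proj{t \sdot m}{X}} \sdot m' \in \rho_X$ demanded by (IS2), with $\rho_X$ the component strategy. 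The initial segment $\blambda \sdot @ \sdot \blambda\overline\xi$ corresponds to (IS1) together with the opening copycat response of $\mathit{ev}$ into the function component $p$; transitions of control through the top-level $@$ (rule (BVar).1) correspond to the generalised O-moves of the hidden arena handled by (IS2) with $\rho_X = \pview{\mathit{ev}}$; moves on free variables (rule (FVar)) are P-moves of $A \funsp C$ handled by (IS4); and rule (Lam) supplies the pointer determination. Injectivity and surjectivity follow by reading this correspondence in both directions, and strongness is automatic because $\hat{\ell}$ is a direct arena morphism, so every traversal is isomorphic as a justified sequence to its $\hat{\ell}^\ast$-image.

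I expect the main obstacle to be this rule-by-rule matching in (ii), and in particular verifying that the visibility and P-view side-conditions built into (IS2) coincide exactly with the pointer-determination mechanism of the (Lam) rule (via Remark~\ref{rem:pview-path} and the equality $\pview{t} = \blambda \sdot @ \sdot \pview{\comp{t}{\theta}}$), together with the correct bookkeeping of the control switches at $@$ effected by the copycat $\mathit{ev}$ and encoded by the two subcases of (BVar). Care is also needed to check that the projection set $(\epsilon, \Theta)$ with $\Theta = \makeset{1 \sdot i}$ retains precisely the moves appearing in the interaction sequence over $(A, B, C)$, so that the two projections used in the reduction of (i) are genuinely matched by $\hat{\ell}^\ast$.
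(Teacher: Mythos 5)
Your proposal follows essentially the same route as the paper's proof: parts (i) and (ii) are established by mutual induction, with (i) split into the variable case (Corollary~\ref{cor:var-corrfinite}), the variable-headed case (handled directly by the induction hypothesis, exactly as in the $\beta$-normal lemma), and the $@$-headed case reduced to (ii) by matching the projection $\proj{-}{\epsilon}$ on traversals against $\proj{-}{(A,C)}$ on interaction sequences; and (ii) is proved by the same rule-by-rule correspondence the paper uses---component moves against (IS2)-$\sigma$ via Lemma~\ref{lem:pviewproj} and I.H.(i), the (BVar) switches through the top-level $@$ against (IS2)-$\tau$ (the copycat $\mathit{ev}$), (FVar) against (IS4), and (Lam) supplying the pointer determination---with both directions of the correspondence giving injectivity and surjectivity.

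One inaccuracy to correct: your closing claim that strongness in (ii) is ``automatic because $\hat{\ell}$ is a direct arena morphism'' is wrong, and the paper is deliberately careful here (Remark~\ref{rem:bijection}): part (ii) asserts only a \emph{bijection}, not a strong one. The reason is that the $(\epsilon,\Theta)$-projection deletes the occurrences of the top-level $@$, so the occurrences of $\blambda\overline\xi, \blambda\overline{\eta_1}, \cdots, \blambda\overline{\eta_n}$ in $\proj{t}{(\epsilon,\Theta)}$ lose their justification pointers, whereas their images---initial moves of the middle arena of the interaction sequence---do carry pointers; hence $\proj{t}{(\epsilon,\Theta)}$ and its $\hat{\ell}^\ast$-image are \emph{not} isomorphic as justified sequences. (The direct-arena-morphism argument applies to full traversals, not to these relaxed projections.) Strongness reappears only after the further projection to $\epsilon$ performed in part (i), where every surviving move has its justifier inside the projected sequence. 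This slip does not invalidate your proof of the lemma as stated, but the distinction is exactly what makes (i) strong while (ii) is not, so it needs to be tracked when you compose (ii) with the two projections.
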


\begin{remark}\label{rem:bijection}
In (ii), since $\ell(1) = @$, each $t \in \proj{\trmng{{M} }}{(\epsilon, \Theta)}$ is a subsequence of a traversal over $\lform{M}$ consisting of nodes that are hereditarily justified by $\epsilon$, %(i.e.~nodes that are labelled by free variables and those hereditarily justified by one of them), 
or by an occurrence of one of $\blambda \overline\xi, \blambda \overline{\eta_1}, \cdots, \blambda \overline{\eta_n}$ (each being a child of node 1 in the tree $\lform{M}$).
The bijection $\hat{\ell}$ in (ii) would be strong if for each $t \in \proj{\trmng{{M} }}{(\epsilon, \Theta)}$, pointers were added from every occurrence of $\blambda \overline \xi, \blambda \overline{\eta_1}, \cdots, \blambda \overline{\eta_n}$ to the opening node $\epsilon$.
%Note that the proof of (i) does not require the bijection $\hat{\ell}$ of (ii) to be strong.
\end{remark}

\begin{proof} We shall prove (i) and (ii) by mutual induction.

\medskip

\begin{asparaenum}[(i)]
\item
The term $M$ has one of the following shapes: 
\begin{enumerate}[(a)]
\item abstraction $\lterm{\xi}{P}$
\item variable $\phi$
\item application $N \, L_1 \cdots L_n$ where $n \geq 1$ and $N $ has shape (a) or (b).
\end{enumerate}

\medskip

First we reduce case (a) to case (b) or case (c). 
Let $A = (A_1, \cdots, A_n, o)$ and the $\eta$-long normal form of $\lterm{\xi}{P}$ be $\lterm{x_1 \cdots x_n}{R}$ where $R$ is a term of either case (b) or (c). 
Plainly $\trmng{{\lterm{\xi}{P}}} = \trmng{{\lterm{\overline x}{R}}}$. 
It remains to observe that, on the one hand, there is a strong bijection between $\trmng{{\lterm{\overline x}{R}}}$ and %$\trmng{\seq{\Gamma, \overline x : \overline A}{{R}}}$;
$\trmng{{{R}}}$ (note that pointers from node occurrences labelled with free variables are to the opening node);
%---given a traversal from the former, the corresponding traversal is obtained by replacing the opening move $\blambda \overline x$ by $\blambda$; \lo{Say this holds by convention.} % and removing all justification pointers to it; \lo{Problem: not isomorphic as justified sequences} 
and on the other, there is a strong bijection between $\pview{\mng{\seq{\Gamma}{\lterm{\overline x}{R : A}}}}$ and $\pview{\mng{\seq{\Gamma, \overline x : \overline A}{{R} : o}}}$. 

\medskip

Case (b) is just Corollary~\ref{cor:var-corrfinite}. 

\medskip

As for case (c), suppose $N$ is an abstraction. 
W.l.o.g.~assume
\[
{\lform{M} = \lterm{}{@ \; (\lterm{\overline\xi}{P}) \; (\lterm{\overline{\eta_1}}{Q_1}) \cdots (\lterm{\overline{\eta_n}}{Q_n})} : o}.
\] 
Then, by (ii) and using the notation therein, we have a bijection
\begin{equation}
\hat{\ell} \; : \; \proj{\trmng{{M}
}}{(\epsilon, \Theta)} \; \stackrel{\sim}{\longrightarrow} \; 
\intseqpv{\anglebra{p, q_1, \cdots, q_n}, \pview{\mathit{ev}}} 
\label{eq:ICorr2}
\end{equation}
Observe that applying $\proj{(-)}{\epsilon}$ on the LHS of (\ref{eq:ICorr2}) corresponds to applying $\proj{(-)}{(\mng{\Gamma}, o)}$ on the RHS.
Thus, in view of Remark~\ref{rem:bijection}, $\proj{\proj{\trmng{{M}
}}{(\epsilon, \Theta)}}{\epsilon} = \proj{\trmng{{M}}}{\epsilon}$ is in strong bijection with
$\proj{\intseqpv{\anglebra{p, q_1, \cdots, q_n}, \pview{\mathit{ev}}}}{(\mng{\Gamma}, o)} = \pview{\mng{\seq{\Gamma}{M : o}}}$, as desired.

\medskip

Finally suppose $N$ is a variable $\phi$. 
W.l.o.g.~assume $M = \phi (\lterm{\overline{\eta_1}}{Q_1}) \cdots (\lterm{\overline{\eta_m}}{Q_m}) : o$.
Then it follows from the respective definitions that
\begin{align*}
\trmng{{M}} &= 
\bigcup_i 
\makeset{
\blambda \sdot \phi \sdot t
\mid t \in 
\trmng{
\seq{\Gamma}{
\lterm{\overline{\eta_i}}{Q_i}
}
}
} 
\cup \makeset{\epsilon, \blambda}
\\
\pview{\mng{\seq{\Gamma}{M}}} &= 
\bigcup_i 
\makeset{
\hat{\ell}(\blambda) \sdot \hat{\ell}(\phi) \sdot t
\mid
t \in \pview{\mng{\seq{\Gamma}{\lterm{\overline{\eta_i}}{Q_i}}}} 
}
\cup \makeset{\epsilon, \hat{\ell}(\blambda)}
\end{align*}
It follows from the induction hypothesis that there is a strong bijection
\[
\hat{\ell} \; : \; \proj{\trmng{{M}}}{\epsilon}\; \stackrel{\sim}{\longrightarrow} \; \pview{\mng{\seq{\Gamma}{M : o}}}.
\]

\item
We first show that for every $t \in \trmng{{ M}}$, we have ${\hat{\ell}}(\proj{t}{(\epsilon, \Theta)}) \in \intseqpv{\anglebra{p, q_1, \cdots, q_n}, \pview{\mathit{ev}}}$. 
The proof is by induction on the length of $t$, with case distinction on the last node of $t$, using the notation of Lemma~\ref{lem:pviewproj}.

\medskip

\emph{Case 1}. The last node of $t$ is in the component $\comp{t}{\theta}$ where $\theta$ is the unique occurrence of $\blambda \overline \xi$ in $t$.
Let ${m}$ be the last lambda node in $t$ that is hereditarily justified by $\epsilon$ or $\theta$. There are two subcases.
\begin{itemize}

\item[\emph{Case 1.1}.] \lo{(IS2)-$\sigma$} The traversal $t = \cdots {m} \sdot d_{1} \, d_2 \cdots d_l$ where $l \geq 0$ and each $d_{i}$ is a Type-I node (i.e.~not hereditarily justified by $\epsilon$ or $\theta$), as defined in the proof of Lemma~\ref{lem:pviewproj}.
Let $m'$ be a Type-II node (i.e.~hereditarily justified by $\epsilon$ or $\theta$) such that $t \sdot d_{l+1} \cdots d_{l'} \sdot m' \in \trmng{{M}}$ where $l' \geq l$ and each $d_{j}$ is Type I. 
Writing $t' = t \sdot d_{n+1} \cdots d_{n'}$,
we claim that ${\hat{\ell}}(\proj{(t' \sdot m')}{(\epsilon, \Theta)}) = {\hat{\ell}}(\proj{t'}{(\epsilon, \Theta)}) \sdot {\hat{\ell}}(m') \in \intseqpv{\anglebra{p, \overline q}, \pview{\mathit{ev}}}$ by rule (IS2)-$\sigma$.
By assumption ${\hat{\ell}}(m')$, which is a P-move, belongs to the component
\[
(\mng{\Gamma}, (\prod_{i=1}^n B_i \funsp o) \times \prod_{i=1}^n B_i)_{o} = (\mng{\Gamma}, \prod_{i=1}^n B_i\funsp o)_{o}
\]
in the sense of the strategy composition $\mng{\Gamma} \xrightarrow{ \anglebra{p, \overline q} } (\prod_{i=1}^n B_i \funsp o) \times \prod_{i=1}^n B_i \xrightarrow{\mathit{ev}} o$. 
Observe that 
\begin{enumerate}[(A)]
\item $\proj{{\hat{\ell}}(\proj{t'}{(\epsilon, \Theta)})}{(\mng{\Gamma}, \prod_{i=1}^n B_i \funsp o)_o} = {\hat{\ell}}(\proj{\proj{t'}{(\epsilon, \Theta)}}{\theta})$,
\item $\proj{\proj{t'}{(\epsilon, \Theta)}}{\theta} = \proj{t'}{\theta} = \proj{\comp{t'}{\theta}}{\theta}$, because all node occurrences in $t'$ hereditarily justified by $\theta$ are in $\comp{t'}{\theta}$.
\end{enumerate}
Thus we have
\[
\begin{array}{lll}
& \pview{\proj{{\hat{\ell}}(\proj{t'}{(\epsilon, \Theta)})}{(\mng{\Gamma}, \prod_{i=1}^n B_i \funsp o)}} \sdot {\hat{\ell}}(m') & \\
= & & \hbox{(A)}\\
& \pview{{\hat{\ell}}(\proj{\proj{t'}{(\epsilon, \Theta)}}{\theta})} \sdot {\hat{\ell}}(m') & \\
= & & \hbox{(B)}\\
& \pview{{\hat{\ell}}(\proj{\comp{t'}{\theta}}{\theta})} \sdot {\hat{\ell}}(m') & \\
= & & \hbox{$m'$ is a non-lambda node}\\
 & \pview{{\hat{\ell}}(\proj{\comp{t' \sdot m'}{\theta}}{\theta})} & \\
\in & & \hbox{Lemma~\ref{lem:pviewproj}(i) \& I.H.(i)}\\
 & \pview{\mng{\seq{\Gamma}{\lterm{\overline \xi}{P}}}}
\end{array}
\]
as desired.

\item[\emph{Case 1.2}.] The traversal $ t = \cdots m \sdot d_{1} \cdots d_{l} \sdot m'$ where $l \geq 0$ and each $d_{i}$ is of Type I (i.e.~not hereditarily justified by $\epsilon$ or $\theta$), and $m'$ is a non-lambda node. 

There are two subcases.
\begin{itemize}

\item[\emph{Case 1.2.1}.] \lo{(IS4)} The non-lambda node $m'$ is hereditarily justified by $\epsilon$. 

By rule (FVar) of the definition of traversal, for each lambda node $m''$ that is  justified by $m'$, we have $t \sdot m'' \in \trmng{{M}}$. 
We claim that ${\hat{\ell}}(\proj{t \sdot m''}{(\epsilon, \Theta)}) \in \intseqpv{\anglebra{p, \overline{q}}, \pview{\mathit{ev}}}$. 
By the induction hypothesis, $\proj{t}{(\epsilon, \Theta)} = (\proj{\trunc{t}{d_l}}{(\epsilon, \Theta)}) \sdot m'$ is mapped by ${\hat{\ell}}$ into $\intseqpv{\anglebra{p, \overline{q}}, \pview{\mathit{ev}}}$. 
Since ${\hat{\ell}}(m')$ is an P-move of the arena $\mng{\Gamma} \funsp o$, 
it follows from rule (IS4) of the definition of P-visible interaction sequence that ${\hat{\ell}}(\proj{\trunc{t}{d_l}}{(\epsilon, \Theta)} \sdot m') \sdot {\hat{\ell}}(m'') =  {\hat{\ell}}(\proj{t \sdot m''}{(\epsilon, \Theta)}) \in \intseqpv{\anglebra{p, \overline{q}}, \pview{\mathit{ev}}}$ as required.

\item[\emph{Case 1.2.2}.] \lo{(IS2)-$\tau$} The non-lambda node $m'$ is hereditarily justified by $\theta$. 

Suppose $t = \cdots \underline{m}'' \sdot \underline{m}' \cdots m \sdot \overline d \sdot m'$
such that $m'$ is explicitly $i$-justified by the lambda node $\underline{m}'$. 
By rule (BVar) of the definition of traversal, $\underline{m}''$ is a non-lambda node in $\comp{t}{\theta'}$ where $\theta'$ is an occurrence of some $\blambda \overline{\eta_j}$ in $t$, and $t \sdot m'' \in \trmng{{M}}$ where $m''$ is  $i$-justified by $\underline{m}''$. 
We claim that $\proj{t \sdot m''}{(\epsilon, \Theta)} = (\proj{t}{(\epsilon, \Theta)}) \sdot m''$ is mapped by ${\hat{\ell}}$ into $\intseqpv{\anglebra{p, \overline{q}}, \pview{\mathit{ev}}}$. 
Writing $\widetilde{t} =  \proj{t}{(\epsilon, \Theta)}$, by the induction hypothesis, we have ${\hat{\ell}}(\widetilde{t}) \in \intseqpv{\anglebra{p, \overline{q}}, \pview{\mathit{ev}}}$. 
By rule (IS2)-$\tau$ of the definition of interaction sequence, it suffices to show $\pview{\proj{{\hat{\ell}}(\widetilde{t})}{((\prod_{i=1}^n B_i \funsp o) \times \prod_{i=1}^n B_i, o)}}  \sdot {\hat{\ell}}(m'') \in \pview{\mathit{ev}}$. 
Since projecting to $((\prod_{i=1}^n B_i \funsp o) \times \prod_{i=1}^n B_i, o)$ reverses the P/O polarity of moves, we have \[
\pview{\proj{{\hat{\ell}}(\widetilde{t})}{((\prod_{i=1}^n B_i \funsp o) \times \prod_{i=1}^n B_i, o)} }  \sdot {\hat{\ell}}(m'') = 
\Pstr[0.7cm]{\cdots \sdot (n){{\hat{\ell}}(\underline{m}'')} \sdot (lxi){{\hat{\ell}}(\underline{m}')} \sdot (xi-lxi,40:i){{\hat{\ell}}(m')} \sdot (leta-n,35:i){{\hat{\ell}}(m'')}}
\] which is in $\pview{\mathit{ev}}$ as required.
\end{itemize}
\end{itemize}

It remains to show that for every $u \in \intseqpv{\anglebra{p, \overline{q}}, \pview{\mathit{ev}}}$, 
there exists a unique $t_u \in \trmng{{M}}$ such that ${\hat{\ell}}(\proj{t_u}{(\epsilon, \Theta)}) = u$. 
%\changed[lo]{Since ${\hat{\ell}}$ is defined by extension of a direct (iso)morphism of arenas, bijectivity of ${\hat{\ell}}$ is an immediate consequence.}

We argue by induction on the length of $u$. 
The base case of (IS1) is trivial.
For the inductive case, suppose $u \sdot n \in \intseqpv{\anglebra{p, \overline{q}}, \pview{\mathit{ev}}}$.
With reference to Definition~\ref{def:pvintseq}, there are three cases, namely, (IS2)-$\sigma$, (IS2)-$\tau$ and (IS4), which correspond to the preceding cases of 1.1, 1.2.2 and 1.2.1 respectively.
Here we consider the case of (IS2)-$\sigma$ for illustration; the other cases are similar and simpler.
I.e.~by assumption, we have $\pview{\proj{u}{(\mng{\Gamma}, \prod_{i=1}^n B_i \funsp o)_o}} \sdot n \in \pview{\mng{\seq{\Gamma}{\lterm{\overline \xi}{P}}}}$ where $n$ is a P-move of $\mng{\seq{\Gamma}{\lterm{\overline \xi}{P}}}$.
By the induction hypothesis, there exists a unique maximal $t_u \in \trmng{{M}}$ such that ${\hat{\ell}}(\proj{t_u}{(\epsilon, \Theta)}) = u$.
%\lo{(Then $t_u = \cdots m \sdot \overline d$ where $m$ is a lambda node, and each $d_i$ is a Type-I node, with $m$ and $\overline d$ in $\comp{t_u}{\theta}$.) - This sentence may not be necessary}
Since ${\hat{\ell}}(\proj{\comp{t_u}{\theta}}{\theta}) = \proj{{\hat{\ell}}(\proj{t_u}{(\epsilon, \Theta)})}{(\mng{\Gamma}, \prod_{i=1}^n B_i \funsp o)_o}$ (which is (B) of Case 1.1),
we have $\pview{\proj{{\hat{\ell}}(\comp{t_u}{\theta}}{\theta})} \sdot {\hat{\ell}}(m') \in \pview{\mng{\seq{\Gamma}{
\lterm{\overline \xi}{P}
}}}$ where $n = {\hat{\ell}}(m')$ for some $m'$.
Thanks to the strong bijection of (i), we have $\pview{\proj{\comp{t_u}{\theta}}{\theta} \sdot m'} \in \proj{\trmng{{\lterm{\overline \xi}{P}}}}{\epsilon}$ for a unique $m'$.
Then, because $\comp{t_u}{\theta} \in \trmng{{\lterm{\overline \xi}{P}}}$ by Lemma~\ref{lem:pviewproj}(i) and because the last node of $t_u$ is a lambda node, we have $\comp{t_u}{\theta} \sdot m' \in \trmng{{\lterm{\overline \xi}{P}}}$. 
By rule (Lam) of the definition of traversals, we have the $\pview{\comp{t_u}{\theta}} \sdot m'$ is a path in the tree $\lform{\lterm{\overline \xi}{P}}$.
It follows that $\blambda \sdot @ \sdot \pview{\comp{t_u}{\theta} \sdot m'}$ is a path in the tree $\lform{M}$.
But, by Lemma~\ref{lem:pviewproj}(ii), $\pview{t_u} \sdot m' = \blambda \sdot @ \sdot \pview{\comp{t_u}{\theta}} \sdot m'$. Hence, by rule (Lam), $t_u \sdot m' \in \trmng{{M}}$ with ${\hat{\ell}}(\proj{t_u \sdot m'}{(\epsilon, \Theta)}) = u \sdot n$ as desired.

\medskip

\emph{Case 2}. The last node of $t$ belongs to $\comp{t}{\theta'}$ where $\theta'$ is an occurrence of $\blambda\overline{\eta_i}$ in $t$.
This case is symmetrical to Case 1.
\end{asparaenum}
\end{proof}

\newcommand\traversal[1]{{\sf tr}(#1)}

\section{Application}

% \subsection{B\"ohm Trees and Innocent Strategies}

% It is often said in the literature that B\"ohm trees are just innocent strategies.
% Here we clarify this informal statement.
% For simplicity, by B\"ohm trees, we mean finite, total B\"ohm trees.

% A \emph{B\"ohm tree} is just the succinct long form $(T_M, B_M, \hat{\ell}_M)$ of some $\beta$-normal term $\seq{z_1 : A_1, \cdots, z_i : A_i}{M : (A_{i+1}, \cdots, A_n, o)}$, where $\hat{\ell}_M : \ArExp\lform{M} \to \Ar(A_1, \cdots, A_n, o)$ is the direct arena morphism.
% It is convenient to use a $\Lam$-representation of the arena $\Ar(A_1, \cdots, A_n, o)$.
% \lo{It is annoying that $\Lam \not\subseteq \LamAt$ (because the superscripts of the dummy lambdas).}
% Equivalently, in view of Lemma~\ref{lem:bindingtree-char}, a B\"ohm tree is a $\moves{\Ar(A_1, \cdots, A_n, o)}$-labelled binding tree that satisfies (Lam), (Leaf) and (TVar).

% \begin{theorem}
% For each type $A$, there is a bijection between compact, total, innocent strategies $\sigma$ over the arena $\mng{A}$ and (closed) B\"ohm trees of type $A$.
% Further there is a strong bijection $\hat{\ell} : \proj{\trmng{M}}{\epsilon} \stackrel{\sim}{\to} \pview{\mng{\seq{}{M : A}}}$.
% \lo{State strong bijection between paths and P-views.}
% \end{theorem}

% \begin{proof}
% \end{proof}

\subsection{Interpreting higher-order recursion schemes}
%\lo{Define HORS and infinite HORS.}
We assume the standard notion of higher-order recursion scheme \citep{KnapikNU02,Ong06}.
Fix a (possibly infinite) higher-order recursion scheme $\calG = \roundbra{\Sigma, {\cal N}, {\cal R}, F_{1}}$ over a ranked alphabet $\Sigma = \makeset{a_{1} : r_{1}, \ldots, a_{l} : r_{l}}$ where $r_i$ is the arity of the terminal $a_{i}$; with non-terminals ${\cal N} = \ong{\makeset{F_i :  A_i \mid i \in {\cal I}}}$ %where the type $A_i = (A_{i1}, \cdots, A_{ir_i}, o)$ 
and rules $F_{i} \to \lterm{\overline{x_{i}}}{M_{i}}$ for each \ong{$i \in {\cal I}$}, and $F_{1} : o$ is the start symbol. 
Note that we do not assume $\cal I$ to be finite.
\ong{Henceforth we assume $\cal I = \omega$ for convenience, and regard $\Sigma$ as a set of free variables.}
%however the game semantics of $\calG$ does not require $\cal I$ to be countable.) 

We first give the semantics of $\calG$. 
Writing $\mng{\Sigma} := \prod_{i = 1}^{l} \mng{o^{r_{i}} \to o}$ and $\mng{{\cal N}} := \prod_{\ong{i \in \omega}} \mng{A_{i}}$, 
the semantics of $\calG$, $\mng{\seq{\Sigma}{F_1 : o}} : \mng{\Sigma} \longrightarrow \mng{o}$, is the composite
\[ \mng{\Sigma} 
%\stackrel{\Lambda({\mathbf g})}{\longrightarrow} 
\xrightarrow{\Lambda({\mathbf g})}
(\mng{{\cal N}} \Rightarrow \mng{{\cal N}}) 
%\stackrel{ {\cal Y}_{\mng{\cal N}} }{\longrightarrow}  
\xrightarrow{{\cal Y}_{\mng{\cal N}}}
\mng{{\cal N}} 
%\stackrel{\pi_{1}}{\longrightarrow} 
\xrightarrow{\pi_1}
\mng{o}\]
in the category ${\mathbb I}$ of arenas and innocent strategies, where
\begin{itemize}
\item ${\mathbf g} : \mng{\Sigma} \times \mng{{\cal N}} \longrightarrow \mng{{\cal N}}$ is $\mng{ {\Sigma, {\cal N}} \vdash {
\roundbra{\lterm{\overline{x_{1}}}{M_{1}}, \cdots,
\lterm{\overline{x_{m}}}{M_{m}}, \ong{\cdots}}
: \prod_{\ong{i \in\omega}} A_{i}
} }$, and $\Lambda(\hbox{-})$ is currying
\item ${\cal Y}_{\ong{A}} : (A \funsp A) \to A$ is the fixpoint strategy (see~\citep[\S 7.2]{HO00}) \ong{over an arena $A$}
\item $\pi_{1}$ is the projection map. 
\end{itemize}

\begin{remark}\rm\label{rem:gsemHORS}
Since $\pview{\mng{\seq{\Sigma}{F_1:o}}}$ coincide with the branch language\footnote{Let $m$ be the maximum arity of the $\Sigma$-symbols, and write $[m] := \makeset{1, \cdots, m}$. The \emph{branch language} of
  $t : \mathit{dom}(t) \longrightarrow \Sigma$ consists of
  \begin{inparaenum}[(i)]
  \item infinite words $(f_1, d_1)(f_2, d_2) \cdots $ such that there exists $d_1 \, d_2 \cdots \in {[m]}^\omega$ such that $t(d_1
  \cdots d_i) = f_{i+1}$ for every $i \in \omega$ and
  \item finite words $(f_1, d_1) \cdots (f_n,  d_n) \, f_{n+1}$ such that there exists $d_1 \cdots d_n \in {[m]}^\star$ such that $t(d_1 \cdots d_i) =
  f_{i+1}$ for $0 \leq i \leq n$, and the arity of \(f_{n+1}\) is \(0\).
  \end{inparaenum}
} of the $\Sigma$-labelled tree generated by $\calG$, we can identify $\pview{\mng{\seq{\Sigma}{F_1:o}}}$ with $\mng{\calG}$, the tree generated by $\calG$.
\end{remark}

\subsection{The traversal-path correspondence theorem}

%\subsection{Higher-order recursion schemes, fixpoints and the type theory $\fixttheory$}

\newcommand\myf{{\mathbb F}}
\newcommand\Gapprox[1]{\textbf{G}^{\langle #1 \rangle}}
\newcommand\Yapprox[2]{{Y}^{\langle #1 \rangle}_{#2}}
\newcommand\Ghors[1]{{G}^{\langle #1 \rangle}}

Fix a higher-order \ong{infinite} recursion scheme $\calG = \roundbra{\Sigma, {\cal N}, {\cal R}, F_{1}}$, using the same notation as before. 
Define an $\omega$-indexed family of $\lambda$-terms, $\Gapprox{n} : \prod_{\ong{i \in \omega}} A_{i}$ with $n$ ranging over $\omega$, as follows:
\begin{align*}
\Gapprox{0} & := \roundbra{\lterm{\overline{x_{1}}}{\bot^{o}}, \cdots, \lterm{\overline{x_{m}}}{\bot^{o}}, \ong{\cdots}}\\
\Gapprox{n+1} & := \roundbra{
\lterm{\overline{x_{1}}}{M_{1}[\Gapprox{n} / \overline{F}]}, \cdots,
\lterm{\overline{x_{m}}}{M_{m}[\Gapprox{n} / \overline{F}], \ong{\cdots}}
}
\end{align*}
where $\bot^{A}$ is a constant symbol of type $A$, and $(\hbox{-})[\Gapprox{n} / \overline{F}]$ means the simultaneous substitution $(\hbox{-})[\pi_{1} \, \Gapprox{n} / F_{1}, \cdots, \pi_{m} \,\Gapprox{n} / F_{m}, \ong{\cdots} ]$, and $\pi_i \, (s_1, s_2, \cdots)$ is a short hand for $s_i$.
Write $\Ghors{n} := \pi_{1 } \, \Gapprox{n}$ for each $n \in \omega$. Note that each $\Ghors{n}$ is a (recursion-free) $\lambda$-term of type $o$.

\begin{lemma}\label{lem:gsemcont}
$\mng{\seq{\Sigma}{F_1 : o}} = \lub_{n \in \omega} \mng{\seq{\Sigma}{\Ghors{n} : o}}$.
\end{lemma}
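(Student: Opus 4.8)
The plan is to reduce the statement to the Kleene characterisation of the (parametrised) fixpoint, which is available because ${\mathbb I}$ is CPO-enriched. Recall that $\mng{\seq{\Sigma}{F_1 : o}} = \pi_1 \circ {\cal Y}_{\mng{\cal N}} \circ \Lambda({\mathbf g})$. Define an $\omega$-chain of morphisms $h_n : \mng{\Sigma} \longrightarrow \mng{\cal N}$ by $h_0 := \bot$ (the least element of the homset) and $h_{n+1} := {\mathbf g} \circ \langle \mathrm{id}_{\mng{\Sigma}}, h_n \rangle$. Since ${\cal Y}_{\mng{\cal N}}$ is the least-fixed-point operator and composition, pairing and currying are all continuous, the standard parametrised-fixpoint (Kleene) theorem gives ${\cal Y}_{\mng{\cal N}} \circ \Lambda({\mathbf g}) = \lub_{n \in \omega} h_n$.

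The heart of the argument is to prove, by induction on $n$, that $h_n = \mng{\seq{\Sigma}{\Gapprox{n} : \prod_{i \in \omega} A_i}}$. For the base case, every component of $\Gapprox{0}$ has body $\bot^o$, whose denotation is the empty (least) strategy over $\mng{o}$; hence each component abstraction denotes the least element of the corresponding $\mng{A_i}$, and the tuple $\mng{\Gapprox{0}}$ is the bottom of the product CPO $\mng{\cal N} = \prod_{i \in \omega} \mng{A_i}$ (where suprema and the bottom element are computed componentwise). As $\Gapprox{0}$ contains no non-terminal, this is exactly the constant-bottom morphism $h_0$. For the inductive step I would appeal to the compositionality (substitution lemma) of the interpretation: the denotation of the simultaneous substitution $M_i[\Gapprox{n} / \overline F]$ is $\mng{M_i}$ precomposed with $\mng{\Gapprox{n}}$ in the $\overline F$-components. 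Bundling the $\omega$-many components, and recalling that ${\mathbf g} = \mng{\seq{\Sigma, {\cal N}}{(\lterm{\overline{x_1}}{M_1}, \cdots) : \prod_{i \in \omega} A_i}}$, this says precisely $\mng{\Gapprox{n+1}} = {\mathbf g} \circ \langle \mathrm{id}, \mng{\Gapprox{n}} \rangle$, which together with the induction hypothesis equals $h_{n+1}$.

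Finally, using that $\pi_1$ is continuous (again by CPO-enrichment) and that $\Ghors{n} = \pi_1 \, \Gapprox{n}$, so that $\mng{\seq{\Sigma}{\Ghors{n} : o}} = \pi_1 \circ \mng{\Gapprox{n}}$ by compositionality, I would conclude
\[
\mng{\seq{\Sigma}{F_1 : o}} = \pi_1 \circ \Bigl( \lub_{n \in \omega} h_n \Bigr) = \lub_{n \in \omega} \bigl( \pi_1 \circ \mng{\Gapprox{n}} \bigr) = \lub_{n \in \omega} \mng{\seq{\Sigma}{\Ghors{n} : o}}.
\]
The main obstacle is the inductive step, i.e.~the substitution lemma matching syntactic unfolding with semantic precomposition; care is needed so that it applies uniformly across the infinite family ${\cal I} = \omega$ of non-terminals, ensuring that the infinite product $\mng{\cal N}$ is a genuine CPO with componentwise suprema and that $\Lambda({\mathbf g})$ yields a continuous endomap on it. Everything else is routine bookkeeping with the continuity of composition, pairing and projection.
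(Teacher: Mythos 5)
Your argument is correct, and its skeleton is the same as the paper's: interpret ${\cal Y}_{\mng{\cal N}}$ as a least fixpoint obtained by Kleene iteration from the bottom element, then push $\Lambda({\mathbf g})$ and $\pi_1$ through the supremum using continuity of composition in the CPO-enriched category ${\mathbb I}$. Where you differ is in how the syntactic approximants $\Gapprox{n}$ are matched with the semantic iterates. The paper stays on the syntactic side: it introduces the terms $\Yapprox{n}{A}$ (with $\Yapprox{0}{A} = \lterm{f}{\bot^A}$ and $\Yapprox{n+1}{A} = \lterm{f}{f\,(\Yapprox{n}{A}\,f)}$), observes that $\Gapprox{n} = \Yapprox{n}{\cal N}\,\roundbra{\lterm{\overline{x_1}}{M_1}, \cdots}$, so that $\mng{\seq{\Sigma}{\Ghors{n} : o}} = \Lambda({\mathbf g}) ; \mng{\Yapprox{n}{\cal N}} ; \pi_1$, and then invokes ${\cal Y}_A = \lub_{n} \mng{\Yapprox{n}{A}}$. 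You instead work on the semantic side, defining the iterates $h_n$ directly and proving $h_n = \mng{\Gapprox{n}}$ by induction via the substitution lemma. The two routes are interderivable: your ``parametrised Kleene theorem'' ${\cal Y}_{\mng{\cal N}} \circ \Lambda({\mathbf g}) = \lub_{n} h_n$ is proved by exactly the computation the paper performs, while your induction makes explicit something the paper glosses over, namely that the identity $\Gapprox{n} = \Yapprox{n}{\cal N}\,\roundbra{\cdots}$ holds only up to $\beta$-conversion (since $\Gapprox{n}$ is defined by substitution), so it silently uses the same compositionality and soundness-for-$\beta$ facts that you invoke openly. Your version is thus marginally more scrupulous at that point, at the price of taking the parametrised fixpoint theorem as a black box rather than deriving it. Finally, the obligation you flag at the end is real but shared by both proofs: since ${\cal I} = \omega$, the product $\mng{\cal N} = \prod_{i \in \omega}\mng{A_i}$ is infinite, and one needs it to exist in ${\mathbb I}$ with componentwise order and suprema so that the iteration makes sense; the paper is silent on this point as well.
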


\begin{proof}
Because $\mathbb A$ is a CCC that is enriched over the CPOs, for each type $A$, there is a fixpoint strategy ${\cal Y}_{A} : (A \funsp A) \funsp A$ 
which is the least (with respect to the enriching order, namely, set inclusion) fixpoint of the map $(A \funsp A) \funsp A \longrightarrow (A \funsp A) \funsp A$ that is the denotation of the $\lambda$-term 
\[\lterm{F : (A \funsp A) \funsp A}{\lterm{f : A \funsp A}{f \; (F \; f)}} \]
Define the following family of $\lambda$-terms, $\Yapprox{i}{A} : (A \funsp A) \funsp A$ with $i \in \omega$
\[\begin{array}{rll}
\Yapprox{0}{A} & := & \lterm{f : A \funsp A}{\bot^A}\\
\Yapprox{n+1}{A} & := & \lterm{f : A \funsp A}{f \; (\Yapprox{n}{A} \; f)}
\end{array}\] 
Note that $\Gapprox{n} = \Yapprox{n}{\cal N} \; \roundbra{
\lterm{\overline{x_{1}}}{M_{1}}, \cdots, 
\lterm{\overline{x_{m}}}{M_{m}}, \ong{\cdots}} : {\cal N}$, %in $\ttheory$ (\ong{assuming that $\ttheory$ has infinite products}), 
writing ${\cal N} = \prod_{\ong{i \in \omega}} A_i$ by abuse of notation. 
By interpreting $\bot^A$ as the least element of the homset ${\mathbb A}_\Sigma(\textbf{1}, A)$, 
we have $\mng{\seq{\Sigma}{\Ghors{n} : o}} = \Lambda({\bf g}) ; \mng{ \Yapprox{n}{\cal N}} ; \pi_1$. 
By Knaster-Tarski Fixpoint Theorem, ${\cal Y}_A = \lub_{n \in \omega} \mng{ \Yapprox{n}{A} }$. 
Since composition is continuous, we have $\mng{\seq{\Sigma}{F_1 : o}} = \Lambda({\bf g}) ; {\cal Y}_{\mng{\cal N}} ; \pi_1 = \Lambda({\bf g}) ; \lub_{n \in \omega} \mng{ \Yapprox{n}{\cal N} } ; \pi_1 = \lub_{n \in \omega} (\Lambda({\bf g}) ; \mng{ \Yapprox{n}{\cal N} } ; \pi_1) = \lub_{n \in \omega} \mng{\seq{\Sigma}{\Ghors{n} : o}}$ as desired.
\end{proof}

\begin{lemma}[P-view Decomposition]\label{lem:decomppview}
\begin{enumerate}[(i)]
\item For every (possibly infinite) P-view $p = \lub_{i \in \omega} p_{i} \in \mng{\seq{\Sigma}{F_1 : o}}$ where each $p_{i}$ is a finite P-view such that $p_{0} \leq p_{1} \leq p_{2} \leq \cdots$, there is an increasing sequence of natural numbers $n_{0} < n_{1} < n_{2} < \cdots$ such that each $p_{i} \in \mng{\seq{\Sigma}{\Ghors{n_{i}} : o}}$. 
\item For every $\omega$-indexed family of finite P-views, $p_{i} \in \mng{\seq{\Sigma}{\Ghors{n_{i}} : o}}$ with $i \in \omega$, such that $p_{0} \leq p_{1} \leq p_{2} \leq \cdots$ and an infinite sequence of natural numbers $n_{0} < n_{1} < n_{2} < \cdots$, the (possibly infinite) P-view $\lub_{i \in \omega} p_{i} \in \mng{\seq{\Sigma}{F_1 : o}}$.
\end{enumerate}
\end{lemma}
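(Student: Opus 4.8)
The plan is to obtain both parts as essentially bookkeeping consequences of the continuity result of Lemma~\ref{lem:gsemcont}, $\mng{\seq{\Sigma}{F_1:o}} = \lub_{n\in\omega}\mng{\seq{\Sigma}{\Ghors{n}:o}}$. Two structural facts will do the work. First, since $\mathbb{I}$ is enriched over CPOs with the enriching order being set inclusion (so that a least upper bound of a chain of strategies is their union), the displayed equation reads $\mng{\seq{\Sigma}{F_1:o}} = \bigcup_{n}\mng{\seq{\Sigma}{\Ghors{n}:o}}$. Second, the approximant strategies form an \emph{increasing} chain $\mng{\seq{\Sigma}{\Ghors{0}:o}}\subseteq\mng{\seq{\Sigma}{\Ghors{1}:o}}\subseteq\cdots$, because the Kleene chain $\mng{\Yapprox{n}{\cal N}}$ is increasing and composition is monotone (this is precisely what makes the lub in Lemma~\ref{lem:gsemcont} well defined). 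I would also record that each $\mng{\seq{\Sigma}{\Ghors{n}:o}}$ is compact, hence contributes only \emph{finite} P-views to this union; accordingly, I interpret ``a possibly infinite P-view $p$ lies in $\mng{\seq{\Sigma}{F_1:o}}$'' in the ideal-completion sense, namely that every finite prefix of $p$ is a finite P-view of $\mng{\seq{\Sigma}{F_1:o}}$ (equivalently, via the identification of $\pview{\mng{\seq{\Sigma}{F_1:o}}}$ with the branch language of $\mng{\calG}$ in Remark~\ref{rem:gsemHORS}).

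For part (i), let $p = \lub_i p_i$ be the given P-view with finite approximants $p_0 \le p_1 \le \cdots$. Each $p_i$ is a finite prefix of $p$, so by the ideal-completion reading of membership, $p_i \in \mng{\seq{\Sigma}{F_1:o}} = \bigcup_{n}\mng{\seq{\Sigma}{\Ghors{n}:o}}$, and hence there is an index $m_i$ with $p_i \in \mng{\seq{\Sigma}{\Ghors{m_i}:o}}$. It then remains only to turn the $m_i$ into a strictly increasing sequence: set $n_0 := m_0$ and $n_{i+1} := \max(n_i+1, m_{i+1})$, so that $n_0 < n_1 < \cdots$ and $n_i \ge m_i$ for all $i$. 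Since the chain of approximant strategies is increasing, $p_i \in \mng{\seq{\Sigma}{\Ghors{m_i}:o}} \subseteq \mng{\seq{\Sigma}{\Ghors{n_i}:o}}$, as required.

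For part (ii), I would run the converse bookkeeping. The given finite P-views satisfy $p_i \in \mng{\seq{\Sigma}{\Ghors{n_i}:o}} \subseteq \bigcup_{n}\mng{\seq{\Sigma}{\Ghors{n}:o}} = \mng{\seq{\Sigma}{F_1:o}}$, so every $p_i$ is a finite P-view of $\mng{\seq{\Sigma}{F_1:o}}$. Now set $p := \lub_i p_i$ and let $q$ be an arbitrary finite prefix of $p$. Because $p$ is the limit of the increasing chain $(p_i)$, the finite prefix $q$ is already a prefix of some $p_k$; and since $\mng{\seq{\Sigma}{F_1:o}}$ is prefix-closed (being a strategy, and a prefix of a P-view is a P-view), $q$ is a finite P-view of $\mng{\seq{\Sigma}{F_1:o}}$. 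Thus every finite prefix of $p$ is a P-view of $\mng{\seq{\Sigma}{F_1:o}}$, which is exactly what it means for $p \in \mng{\seq{\Sigma}{F_1:o}}$ in the ideal-completion sense.

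The routine content here is light, and the one point that needs genuine care — the main obstacle in writing the argument rigorously — is the treatment of \emph{infinite} P-views. Concretely, one must pin down that membership of an infinite limit $p$ in $\mng{\seq{\Sigma}{F_1:o}}$ is governed entirely by the finite prefixes of $p$, which relies on the compactness of the approximant strategies (so that the union in Lemma~\ref{lem:gsemcont} consists only of finite P-views) and on the identification of the P-views with the branch language in Remark~\ref{rem:gsemHORS}. Once this convention is fixed, both directions reduce to reindexing an increasing chain.
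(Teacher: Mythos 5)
Your proposal is correct and takes essentially the same approach as the paper: both parts are reduced to the continuity result of Lemma~\ref{lem:gsemcont}, reading the lub of the increasing chain of approximant strategies as a union and then doing index bookkeeping. The paper's own proof is a two-line appeal to that lemma; your write-up simply makes explicit the details it leaves implicit (monotonicity of the approximant chain, re-indexing to get a strictly increasing sequence, and the prefix/ideal-completion reading of membership for infinite P-views).
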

 
 \begin{proof}
 \begin{inparaenum}[(i)]
\item Thanks to Lemma~\ref{lem:gsemcont}, for every $i$, there exists $n_i \geq 1$ such that $p_i \in \mng{\seq{\Sigma}{\Gapprox{n_i} : o}}$.
\item An immediate consequence of Lemma~\ref{lem:gsemcont}.
 \end{inparaenum}
 \end{proof}
 
Given a higher-order recursion scheme $\calG$, the \emph{computation tree} $\lambda(\calG)$ is obtained by first transforming the rewrite rules into long forms, and then unfolding the transformed rules \emph{ad infinitum}, starting from $F_1$, and without performing any $\beta$-reduction (i.e.~substitution of actual parameters for formal parameters); see \citep{Ong06} for a definition.
By construction, the tree $\lambda(\calG)$ is a (possibly infinite) $\LamAt$-labelled binding tree that satisfies the labelling axioms (Lam), (Leaf), (TVar) and (T@).
We write $\trmng{\calG}$ be the set of finite and infinite traversals over $\lambda(\calG)$, whereby an infinite traversal is just an infinite justified sequence over $\lambda(\calG)$ such that every finite prefix is a traversal. 

Next we prove a similar decomposition lemma for traversals. 
First, notice that each $\lform{\Ghors{i}}$ is a $\LamAt_\bot$-labelled binding tree, where $\LamAt_\bot$ is $\LamAt$ augmented by an auxiliary symbol $\bot$ of arity 0. 
Given $\LamAt_\bot$-labelled trees $T$ and $T'$, we define $T \sqsubseteq T'$ if $\dom(T) \subseteq \dom(T')$, and for all $\alpha \in \dom(T)$, if $T(\alpha) \not= \bot$ then $T(\alpha) = T'(\alpha)$.
Thus if $i < j$ then $\lform{\Ghors{i}} \sqsubseteq \lform{\Ghors{j}}$ %\lo{Define $\sqsubseteq$.} %(i.e.~the latter is obtained from the former by replacing occurrences of $\bot$ by long forms) 
and $\lform{\Ghors{i}} \sqsubseteq \lform{G}$. 
It follows that if $t$ is a $\bot$-free traversal over $\lform{\Ghors{i}}$ then $t$ is also a traversal over $\lform{\Ghors{j}}$ and over $\lform{G}$. 
Conversely if $t$ is a finite traversal over $\lform{G}$, then for every $n$ greater than the length of $t$, $t$ is also a traversal over $\lform{\Ghors{n}}$. 
To summarise, we have the following.

%It follows that $\trmng{\seq{\Sigma}{\Ghors{i} : o}} \subseteq \trmng{\seq{\Sigma}{\Ghors{j} : o}} \subseteq \trmng{\seq{\Sigma}{G : o}}$.

\begin{lemma}[Traversal Decomposition]\label{lem:decomptrav}
\begin{enumerate}[(i)]
\item For every (possibly infinite) traversal $t = \lub_{i \in \omega} t_{i} \in \trmng{\calG}$ where each $t_{i}$ is a finite traversal and $t_{0} \leq t_{1} \leq t_{2} \leq \cdots$, there is an increasing sequence of natural numbers $n_{0} < n_{1} < n_{2} < \cdots$ such that each $t_{i} \in \trmng{{\Ghors{n_{i}} : o}}$. 
\item Given an $\omega$-indexed family of $\bot$-free traversals, $ t_{i} \in \trmng{{\Ghors{n_{i}} : o}}$ with $i \in \omega$, such that $t_{0} \leq t_{1} \leq t_{2} \leq \cdots$, and an infinite sequence of natural numbers $n_{0} < n_{1} < n_{2} < \cdots$, the (possibly infinite) traversal $\lub_{i \in \omega} t_{i} \in \trmng{{\calG}}$.
\myendproof
\end{enumerate}
\end{lemma}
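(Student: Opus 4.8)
The plan is to read off both parts directly from the $\sqsubseteq$-monotonicity facts established in the paragraph immediately preceding the statement, in exact analogy with the P-view Decomposition Lemma~\ref{lem:decomppview}. The two ingredients I will lean on are: (a) a $\bot$-free traversal over a finite approximant $\lform{\Ghors{n}}$ is already a traversal over the computation tree $\lambda(\calG)$ (written $\lform{G}$ above, the $\sqsubseteq$-limit of the $\lform{\Ghors{n}}$); and (b) a finite traversal over $\lambda(\calG)$ is a traversal over $\lform{\Ghors{n}}$ as soon as $n$ exceeds its length. I will also use one elementary observation: the set of traversals is \emph{prefix-closed}. Indeed, every rule of Definition~\ref{def:traversal} extends a traversal by a single node, and the only base case is $\epsilon$; hence any prefix of a traversal is obtained by truncating a derivation and is again a traversal.

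For (i), I take the given chain $t_0 \leq t_1 \leq \cdots$ of finite traversals with $\lub_i t_i = t \in \trmng{\calG}$. Each $t_i$ is a finite traversal over $\lambda(\calG)$, so by ingredient (b) it is a traversal over $\lform{\Ghors{n}}$ for every $n$ strictly greater than its length. It then suffices to choose indices recursively, setting $n_0 := |t_0|+1$ and $n_i := \max(n_{i-1}+1,\, |t_i|+1)$; this makes $n_0 < n_1 < n_2 < \cdots$ strictly increasing while guaranteeing $t_i \in \trmng{\Ghors{n_i}}$. No node visited by $t_i$ carries a $\bot$-label, since $t_i$ is drawn from $\lambda(\calG)$, whose labels agree with those of $\lform{\Ghors{n_i}}$ on the finite, fully-defined set of nodes that $t_i$ touches.

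For (ii), I start from the $\bot$-free traversals $t_i \in \trmng{\Ghors{n_i}}$ with $t_0 \leq t_1 \leq \cdots$ and $n_0 < n_1 < \cdots$. By ingredient (a) each $t_i$ is also a finite traversal over $\lambda(\calG)$, and its enabling relation and justification pointers transfer verbatim because $\lform{\Ghors{n_i}} \sqsubseteq \lambda(\calG)$ agree on all non-$\bot$ nodes. Writing $t := \lub_i t_i$, any finite prefix of $t$ is a prefix of $t_j$ for all sufficiently large $j$; being the prefix of a traversal over $\lambda(\calG)$, it is itself a traversal over $\lambda(\calG)$ by prefix-closure. Thus every finite prefix of $t$ is a traversal over $\lambda(\calG)$, which is precisely the defining condition for $t$ to be a (possibly infinite) traversal, so $t \in \trmng{\calG}$.

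The argument carries essentially no difficulty once the $\sqsubseteq$ facts are in hand; the points that need attention are purely bookkeeping. In (i) one must secure the strict increase of the $n_i$ simultaneously with the length bound, which the recursive choice above handles. In (ii) the single substantive check is that the justification structure of each $t_i$ is preserved when it is reinterpreted over $\lambda(\calG)$ — this is exactly where the $\bot$-freeness of the $t_i$ and the agreement of labels under $\sqsubseteq$ are used, and it is also what makes the limit $\lub_i t_i$ a well-defined justified sequence. So I expect the ``hard part'' not to lie in this lemma at all, but in the already-established monotonicity of the approximants and prefix-closure of traversals, both of which I take as given.
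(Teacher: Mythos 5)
Your proposal is correct and takes essentially the same route as the paper: the paper treats this lemma as an immediate summary (``To summarise, we have the following'') of exactly the two $\sqsubseteq$-monotonicity facts you isolate as ingredients (a) and (b), which is why it appears with no separate proof body. Your additional steps---the recursive choice $n_i := \max(n_{i-1}+1,\,|t_i|+1)$ to force strict increase, and the appeal to prefix-closure together with the definition of infinite traversals in part (ii)---are precisely the bookkeeping the paper leaves implicit.
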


\begin{theorem}[Traversal-Path Correspondence]\label{thm:11traversal}
\ong{Let $\calG = \roundbra{\Sigma, {\cal N}, {\cal R}, F_{1}}$ be a possibly infinite higher-order recursion scheme.} 
Paths in $\mng{\calG}$ and traversals over $\lambda(\calG)$ projected to symbols from $\Sigma$ are the same set of finite and infinite sequences over $\Sigma$.
 %between justified sequences in $\proj{\trmng{\seq{\Sigma}{F_1 : o}}}{\Sigma}$ on the one hand, and P-views in the strategy denotation $\mng{\seq{\Sigma}{F_1 : o}}$ on the one other.
\endproof
\end{theorem}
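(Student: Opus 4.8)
The plan is to reduce the statement to the game-semantic correspondence already established for finite (recursion-free) terms in Theorem~\ref{thm:corr}, and then to transport that correspondence across the directed limits that present $\calG$. By Remark~\ref{rem:gsemHORS}, reading off node labels along a branch identifies the paths in the generated tree $\mng{\calG}$ with the $\Sigma$-projections of the P-views in $\pview{\mng{\seq{\Sigma}{F_1:o}}}$, via the branch-language reading. Hence it suffices to prove that the $\Sigma$-projections of traversals over $\lambda(\calG)$ coincide, as sets of finite and infinite sequences, with the $\Sigma$-projections of the P-views in $\mng{\seq{\Sigma}{F_1:o}}$.

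First I would settle the finite case. For each approximant $\Ghors{n}$, Theorem~\ref{thm:corr} supplies a strong bijection $\hat{\ell}^\ast : \proj{\trmng{\Ghors{n}}}{\epsilon} \stackrel{\sim}{\longrightarrow} \pview{\mng{\seq{\Sigma}{\Ghors{n} : o}}}$. Since this bijection is \emph{strong} --- each projected traversal and its image are isomorphic as justified sequences --- and since the $\Sigma$-symbols are treated as free variables, on which $\hat{\ell}$ acts faithfully, applying the further projection to $\Sigma$-labelled nodes on both sides yields one and the same set of finite $\Sigma$-sequences. Thus at every finite level the two $\Sigma$-projected sets agree. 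By Lemma~\ref{lem:pview-path}(2) a maximal traversal ends in a ground-type variable, so a maximal finite traversal whose last $\Sigma$-node has arity $0$ yields a finite $\Sigma$-word, matching exactly the finite branches, while non-terminating behaviour produces infinite words.

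Next I would pass to the limit. Lemma~\ref{lem:decomptrav} writes every (finite or infinite) traversal over $\lambda(\calG)$ as a lub $\lub_{i \in \omega} t_i$ of an increasing chain of finite traversals, each $t_i$ being a $\bot$-free traversal over some $\lform{\Ghors{n_i}}$ with $n_0 < n_1 < \cdots$; dually, Lemma~\ref{lem:decomppview} (via Lemma~\ref{lem:gsemcont}) writes every P-view in $\mng{\seq{\Sigma}{F_1:o}}$ as a lub of finite P-views drawn from the $\mng{\seq{\Sigma}{\Ghors{n_i} : o}}$. The finite strong bijections are coherent along such chains: because a $\bot$-free traversal over $\lform{\Ghors{n_i}}$ is again a traversal over $\lform{\Ghors{n_j}}$ for $j > i$, and because $\ArSuc\lform{\Ghors{n}}$ and the morphism $\hat{\ell}$ depend only on the finite list of types occurring --- which stabilises --- the bijection $\hat{\ell}^\ast$ commutes with prefixing. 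Consequently the two chains correspond term-by-term, and since projection to $\Sigma$ is continuous (it commutes with lubs of $\leq$-chains), the limit $\Sigma$-sequences agree. This establishes equality of the two sets of finite and infinite $\Sigma$-sequences.

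I expect the main obstacle to be the coherence of the finite strong bijections across the chain, that is, showing that the level-$n_i$ bijection is a restriction of the level-$n_j$ bijection so that the assignments glue into a single bijection on limits. This requires $\hat{\ell}^\ast$ to be prefix-monotone and the projection $\proj{(-)}{\epsilon}$ to commute with the embeddings $\lform{\Ghors{n_i}} \sqsubseteq \lform{\Ghors{n_j}} \sqsubseteq \lambda(\calG)$; the genuinely delicate point is that justification pointers into positions that are ``hidden'' (hereditarily justified by an internal $@$) must be tracked consistently as the approximation grows, so that taking the P-view and then the limit agrees with taking the limit and then the P-view.
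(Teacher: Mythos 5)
Your proposal is correct and takes essentially the same route as the paper: both arguments combine the finite-level strong bijection of Theorem~\ref{thm:corr} with the decomposition Lemmas~\ref{lem:decomppview} and~\ref{lem:decomptrav} to obtain a strong bijection $\proj{\trmng{\calG}}{\epsilon} \stackrel{\sim}{\longrightarrow} \pview{\mng{\seq{\Sigma}{F_1 : o}}}$ in the limit, and then conclude the set equality from the fact that $\Sigma$-symbols have order 1, so that both projected traversals and P-views (equivalently, paths in $\mng{\calG}$, obtained by erasing O-moves) are determined by their underlying $\Sigma$-sequences. The coherence-of-bijections issue you flag as the main obstacle is exactly what the paper's terse ``by combining'' step leaves implicit, so your elaboration fills in rather than departs from the paper's argument.
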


\begin{proof}
By combining Lemma~\ref{lem:decomppview}, Lemma~\ref{lem:decomptrav} and Theorem~\ref{thm:corr}, we obtain a bijection
\(
\phi : \proj{\trmng{\calG}}{\epsilon}\; \stackrel{\sim}{\longrightarrow} \; \pview{\mng{\seq{\Sigma}{F_1 : o}}}
\)
which is strong, in that for every $t \in \proj{\trmng{\calG}}{\epsilon}$, we have $t$ and $\phi(t)$ are isomorphic as justified sequences.
Since terminal symbols from $\Sigma$ are assumed to be of order 1, justified sequences from $\proj{\trmng{\calG}}{\epsilon}$ are completely determined by their underlying sequence over $\Sigma$; similarly for $\pview{\mng{\seq{\Sigma}{F_1 : o}}}$.
On the one hand, there is a one-one correspondence between P-views in $\mng{\seq{\Sigma}{F_1 : o}}$ and paths in the generated tree $\mng{\calG}$: given a P-view, the corresponding path is obtained by erasing the O-moves. 
On the other, there is a one-one correspondence between $\proj{\trmng{\calG}}{\epsilon}$ and traversals over $\lambda(\calG)$ projected to symbols from $\Sigma$.
%given $t \in \proj{\trmng{\calG}}{\epsilon}$. we obtain a corresponding sequence over $\Sigma$ by projection.  
Hence we have the desired set equality.
\end{proof}

\lo{
\subsection{Further applications}
- The approach also works for PCF. Blum's thesis, and Functional Reachability.

- HMOS 2008. Equi-expressivity between HORS and CPDS.
}

%\bibliography{ong}

\begin{thebibliography}{00}

%%% ====================================================================
%%% NOTE TO THE USER: you can override these defaults by providing
%%% customized versions of any of these macros before the \bibliography
%%% command.  Each of them MUST provide its own final punctuation,
%%% except for \shownote{}, \showDOI{}, and \showURL{}.  The latter two
%%% do not use final punctuation, in order to avoid confusing it with
%%% the Web address.
%%%
%%% To suppress output of a particular field, define its macro to expand
%%% to an empty string, or better, \unskip, like this:
%%%
%%% \newcommand{\showDOI}[1]{\unskip}   % LaTeX syntax
%%%
%%% \def \showDOI #1{\unskip}           % plain TeX syntax
%%%
%%% ====================================================================

\ifx \showCODEN    \undefined \def \showCODEN     #1{\unskip}     \fi
\ifx \showDOI      \undefined \def \showDOI       #1{{\tt DOI:}\penalty0{#1}\ }
  \fi
\ifx \showISBNx    \undefined \def \showISBNx     #1{\unskip}     \fi
\ifx \showISBNxiii \undefined \def \showISBNxiii  #1{\unskip}     \fi
\ifx \showISSN     \undefined \def \showISSN      #1{\unskip}     \fi
\ifx \showLCCN     \undefined \def \showLCCN      #1{\unskip}     \fi
\ifx \shownote     \undefined \def \shownote      #1{#1}          \fi
\ifx \showarticletitle \undefined \def \showarticletitle #1{#1}   \fi
\ifx \showURL      \undefined \def \showURL       #1{#1}          \fi

\bibitem[\protect\citeauthoryear{Danos and Regnier}{Danos and Regnier}{1993}]%
        {DanosR93}
{Vincent Danos} {and} {Laurent Regnier}. 1993.
\newblock \showarticletitle{Local and asynchronous beta-reduction (an analysis
  of Girard's execution formula)}. In {\em Proceedings of the Eighth Annual
  Symposium on Logic in Computer Science {(LICS} '93), Montreal, Canada, June
  19-23, 1993}. 296--306.
\newblock
\showDOI{%
\url{http://dx.doi.org/10.1109/LICS.1993.287578}}


\bibitem[\protect\citeauthoryear{Hyland and Ong}{Hyland and Ong}{2000}]%
        {HO00}
{J.~M.~E. Hyland} {and} {C.-H.~L. Ong}. 2000.
\newblock \showarticletitle{{On Full Abstraction for PCF: {I}. Models,
  observables and the full abstraction problem, {II}. Dialogue games and
  innocent strategies, {III}. A fully abstract and universal game model}}.
\newblock {\em Information and Computation\/}  {163} (2000), 285--408.
\newblock


\bibitem[\protect\citeauthoryear{Knapik, Niwinski, and Urzyczyn}{Knapik
  et~al\mbox{.}}{2002}]%
        {KnapikNU02}
{Teodor Knapik}, {Damian Niwinski}, {and} {Pawel Urzyczyn}. 2002.
\newblock \showarticletitle{Higher-Order Pushdown Trees Are Easy}. In {\em
  Foundations of Software Science and Computation Structures, 5th International
  Conference, {FOSSACS} 2002. Held as Part of the Joint European Conferences on
  Theory and Practice of Software, {ETAPS} 2002 Grenoble, France, April 8-12,
  2002, Proceedings}. 205--222.
\newblock
\showDOI{%
\url{http://dx.doi.org/10.1007/3-540-45931-6_15}}


\bibitem[\protect\citeauthoryear{McCusker}{McCusker}{2000}]%
        {McCusker00}
{Guy McCusker}. 2000.
\newblock \showarticletitle{Games and Full Abstraction for {FPC}}.
\newblock {\em Inf. Comput.\/} {160}, 1-2 (2000), 1--61.
\newblock
\showDOI{%
\url{http://dx.doi.org/10.1006/inco.1999.2845}}


\bibitem[\protect\citeauthoryear{Ong}{Ong}{2015}]%
        {Ong15}
{C.-H.~Luke Ong}. 2015.
\newblock \showarticletitle{Higher-Order Model Checking: An Overview}. In {\em
  30th Annual {ACM/IEEE} Symposium on Logic in Computer Science, {LICS} 2015,
  Kyoto, Japan, July 6-10, 2015}. 1--15.
\newblock
\showDOI{%
\url{http://dx.doi.org/10.1109/LICS.2015.9}}


\bibitem[\protect\citeauthoryear{Ong}{Ong}{2006}]%
        {Ong06}
{C.{-}H.~Luke Ong}. 2006.
\newblock \showarticletitle{On Model-Checking Trees Generated by Higher-Order
  Recursion Schemes}. In {\em 21th {IEEE} Symposium on Logic in Computer
  Science {(LICS} 2006), 12-15 August 2006, Seattle, WA, USA, Proceedings}.
  81--90.
\newblock
\showDOI{%
\url{http://dx.doi.org/10.1109/LICS.2006.38}}


\bibitem[\protect\citeauthoryear{Stirling}{Stirling}{2009}]%
        {Stirling09}
{Colin Stirling}. 2009.
\newblock \showarticletitle{Dependency Tree Automata}. In {\em Foundations of
  Software Science and Computational Structures, 12th International Conference,
  {FOSSACS} 2009, Held as Part of the Joint European Conferences on Theory and
  Practice of Software, {ETAPS} 2009, York, UK, March 22-29, 2009.
  Proceedings}. 92--106.
\newblock
\showDOI{%
\url{http://dx.doi.org/10.1007/978-3-642-00596-1_8}}


\end{thebibliography}
%\bibliographystyle{plain}%{ACM-Reference-Format-Journals}
%\bibliographystyle{ACM-Reference-Format-Journals}
%%% -*-BibTeX-*-
%%% Do NOT edit. File created by BibTeX with style
%%% ACM-Reference-Format-Journals [18-Jan-2012].

\end{document}